\title{An Automated Approach to the Collatz Conjecture \\[1ex]
  \large (Explorations with Rewriting Systems, Matrix Interpretations, and SAT)}
\author{
  Emre Yolcu\thanks{\texttt{eyolcu@cs.cmu.edu}, \texttt{marijn@cmu.edu}. Computer Science Department, Carnegie Mellon University.} \and
  Scott Aaronson\thanks{\texttt{scott@scottaaronson.com}. Department of Computer Science, University of Texas at Austin.} \and
  Marijn J.\,H. Heule\footnotemark[1]}
\date{}
\begin{document}

\maketitle

\begin{abstract}
  We explore the Collatz conjecture and its variants through the lens of termination of string rewriting. We construct a rewriting system that simulates the iterated application of the Collatz function on strings corresponding to mixed binary--ternary representations of positive integers. We prove that the termination of this rewriting system is equivalent to the Collatz conjecture. We also prove that a previously studied rewriting system that simulates the Collatz function using unary representations does not admit termination proofs via natural matrix interpretations, even when used in conjunction with dependency pairs. To show the feasibility of our approach in proving mathematically interesting statements, we implement a minimal termination prover that uses natural/arctic matrix interpretations and we find automated proofs of nontrivial weakenings of the Collatz conjecture. Although we do not succeed in proving the Collatz conjecture, we believe that the ideas here represent an interesting new approach.
\end{abstract}

%%% Local Variables:
%%% mode: latex
%%% TeX-master: "main"
%%% End:

\hspace{0pt}
\vfill
\begin{figure}
  \includegraphics[width=\textwidth,center]{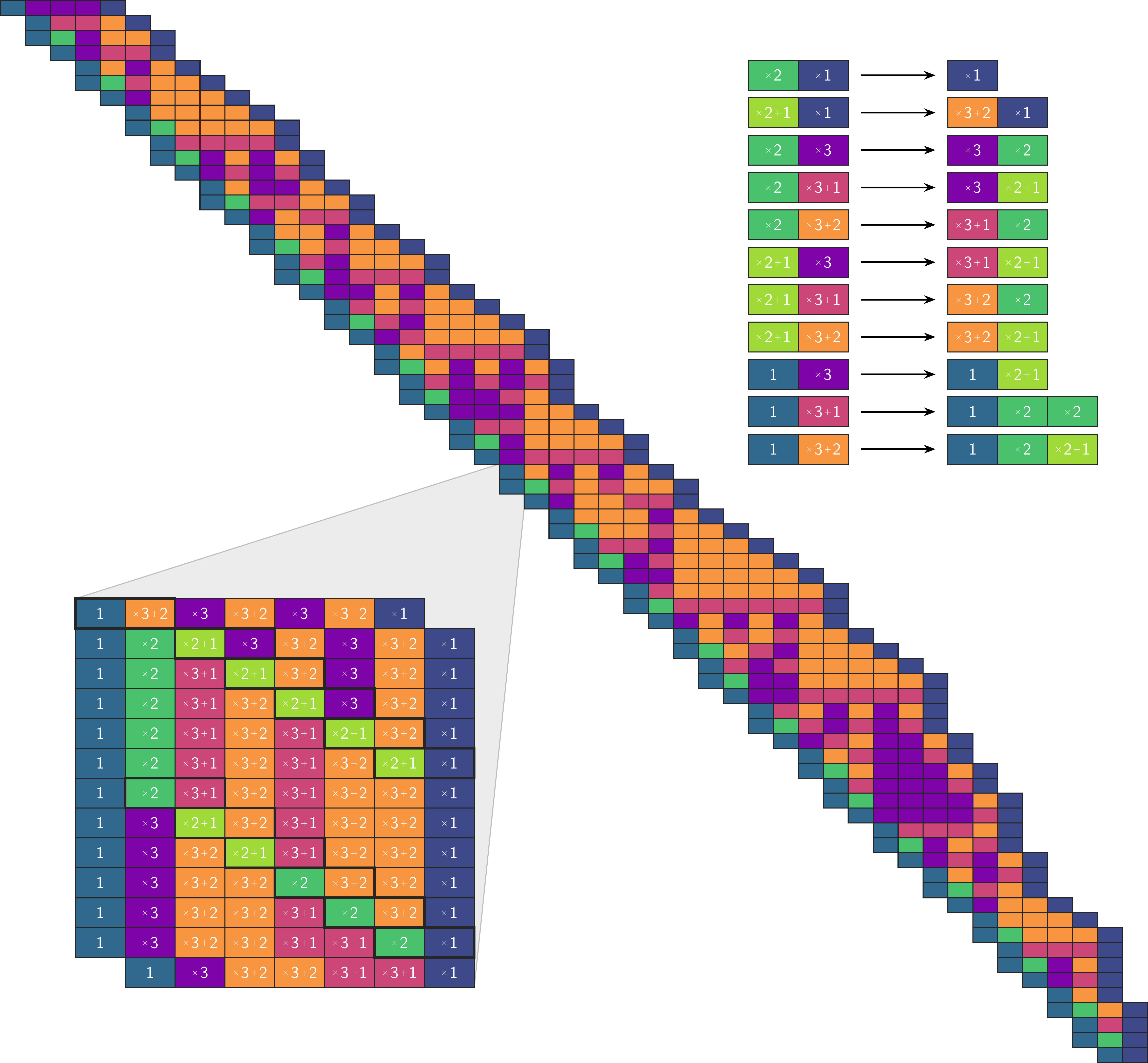}
  \caption*{Diagrammatic view of the Collatz trajectory of 27. Each row in the cascade is a mixed binary--ternary representation of an iterate, with each box corresponding to a digit. Rewrite rules are shown at the top right corner. The zoomed-in slice at the bottom left corner highlights the intermediate rewrite steps performed to change the least significant digit into binary while preserving the numeric value of the representation.}
\end{figure}
\vfill
\hspace{0pt}

\newpage
\tableofcontents

\newpage
\section{Introduction}
\label{sec:introduction}
Let $\N = \{0,1,2,\dots\}$ denote the natural numbers and $\Np = \{1,2,3,\dots\}$ denote the positive integers. We define the \emph{Collatz function} $C \colon \Np \to \Np$ as
\begin{equation*}
  C(n) =
  \begin{cases}
    n/2 & \text{if } n \equiv 0 \pmod 2 \\
    3n + 1 & \text{if } n \equiv 1 \pmod 2.
  \end{cases}
\end{equation*}
Given a function $f$ and a number $k \in \N$, the function $f^k$ denotes the \emph{$k$th iterate of $f$}, defined as $f^k \coloneqq \underbrace{f \circ f \circ \dots \circ f}_{k \text{ times}}$. The well-known \emph{Collatz conjecture} is the following:
\begin{conjecture}\label{conj:collatz}
  For all $n \in \Np$, there exists some $k \in \N$ such that $C^k(n) = 1$.
\end{conjecture}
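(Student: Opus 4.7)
The plan is to convert the arithmetic statement into a purely combinatorial one that standard termination-proving machinery can attack. First I would fix a mixed binary--ternary notation for positive integers in which the least significant digit is always binary, so that the parity of the represented number is visible from a single trailing symbol. Then I would design a string rewriting system $\mathcal{R}$ whose rules simulate one step of $C$ at the low end of the tape: a trailing $0$ is deleted (corresponding to $n/2$), while a trailing $1$ triggers a local cascade that multiplies the remaining string by $3$ and adds the appropriate carry, with the carry walking leftward through auxiliary symbols and being absorbed by ternary digits without rebuilding the whole string. The first target lemma would then be an equivalence: $\mathcal{R}$ terminates on the encoding of every $n \in \Np$ if and only if Conjecture~\ref{conj:collatz} holds.

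With the equivalence in hand, I would try to prove termination of $\mathcal{R}$ by exhibiting a monotone, well-founded interpretation of its symbols. The standard toolbox here is matrix interpretations over $\N$ or over the arctic semiring $(\N \cup \{-\infty\}, \max, +)$: assign to each letter a square matrix, demand strict decrease on every rewrite rule, and search for such matrices by encoding the resulting arithmetic constraints as a SAT instance. I would iterate over matrix dimension $d = 1, 2, 3, \dots$ and over entry bit-widths, invoking the solver at each level, and I would additionally apply the dependency-pair transformation so that only cycles in a reduced problem are required to decrease.

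The main obstacle, and the reason I do not expect this plan to close the full conjecture, is twofold. Conceptually, a single matrix interpretation certifying termination of $\mathcal{R}$ would yield an explicit closed-form well-founded ranking on Collatz trajectories, something no heuristic on stopping times currently provides. Technically, the $3n+1$ rule produces genuinely mixed multiplicative--additive behaviour whose correct algebraic witness, if any, may well lie outside the class of natural or tropical matrix interpretations that SAT can enumerate; as a sanity check one should try to \emph{prove} such a limitation for the simpler unary encoding already studied in the literature. Accordingly, the realistic deliverables I would aim for are the rewriting reformulation itself, negative results ruling out certain classes of certificates, and machine-found termination proofs for nontrivial weakenings of Collatz, leaving Conjecture~\ref{conj:collatz} itself open.
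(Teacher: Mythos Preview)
The statement is a \emph{conjecture}, and the paper does not prove it; it explicitly leaves Conjecture~\ref{conj:collatz} open. Your proposal is likewise not a proof but a research programme, and it coincides almost point-for-point with what the paper actually carries out: the mixed binary--ternary encoding with a binary least significant digit, the rewriting system $\rT$ simulating $C$, the equivalence lemma (Theorem~\ref{thm:simulation}), the search for natural and arctic matrix interpretations via SAT with dependency pairs, the negative result ruling out natural matrix interpretations for the unary system (Theorems~\ref{thm:Z-no-direct-natural-matrix} and~\ref{thm:Z-no-dp-natural-matrix}), and automated proofs for weakenings. Your honest conclusion that the full conjecture remains open is also the paper's conclusion.
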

This is a longstanding open problem and there is a vast literature dedicated to its study. For its history and comprehensive surveys on the problem, we refer the reader to Lagarias~\cite{Lag10,Lag11,Lag12}. See also the work by Tao~\cite{Tao20} for dramatic recent progress on the problem.

\begin{definition}[Convergent function]
  Consider a function $f \colon X \to X$ and a designated element $z \in X$. Given $x \in X$, the sequence of iterates $f^{\N}(x) \coloneqq (x, \allowbreak f(x), \allowbreak f^2(x), \dots)$ is called the \emph{$f$-trajectory of $x$}. If the trajectory $f^{\N}(x)$ reaches a cycle containing $z$, the trajectory is \emph{convergent to $z$ for $x$}. If for all $x \in X$ the trajectory $f^{\N}(x)$ is convergent to $z$, the function $f$ is \emph{convergent to $z$}.\footnote{For the functions we define throughout this paper, the designated element $z$ is clear from the context, so we refer to convergence without explicitly mentioning $z$. It is also apparent in all cases that $z$ is in a cycle, and we simply call a trajectory convergent if it contains $z$.}
\end{definition}

As of 2020, it has been checked by computation that for all $n \leq 2^{68}$ the Collatz trajectory of $n$ is convergent~\cite{Bar21}.\footnote{See \url{https://pcbarina.fit.vutbr.cz} for David Barina's ongoing verification effort.} This fact, along with the heuristic argument that on average the iterates in a trajectory tend to decrease~\cite{Lag85}, is often used as evidence to support the belief that the conjecture is true. On a different note, Conway proved that a generalization of the problem is undecidable~\cite{Con72}, which Kurtz and Simon later extended to a variant even closer to the Collatz conjecture~\cite{KS07}. While the prevailing opinion is that the Collatz conjecture holds, and the heuristic arguments suggest that an average trajectory is convergent, the undecidability results help explain why this seems to be so hard to show.

In this paper, we describe an approach based on termination of string rewriting to automatically search for a proof of the Collatz conjecture. Although trying to prove the Collatz conjecture via automated deduction is clearly a moonshot goal, there are two recent technological advances that provide reasons for optimism that at least some interesting variants of the problem might be solvable. First, the invention of the method of matrix interpretations~\cite{HW06,EWZ08} and its variants such as arctic interpretations~\cite{KW09,ST14} turns the quest of finding a ranking function to witness termination into a problem that is suitable for systematic search. Second, the progress in satisfiability (SAT) solving makes it possible to solve many seemingly difficult combinatorial problems efficiently in practice. Their combination, i.e., using SAT solvers to find interpretations, has so far been an effective strategy in solving challenging termination problems.

We make the following contributions:
\begin{itemize}
\item We show how a generalized Collatz function can be expressed as a rewriting system that is terminating if and only if the function is convergent.
\item We prove that no termination proof via natural matrix interpretations exists for a certain system by Zantema~\cite{Zan05} that simulates the Collatz function using unary representations of numbers. This result holds even when natural matrix interpretations are used in conjunction with the dependency pair transformation~\cite{AG00}.
\item We show that translations into rewriting systems that use non-unary representations of numbers are more amenable to automated methods, compared with the previously and commonly studied unary representations.
\item We automatically prove various weakenings of the Collatz conjecture. We observe that, for some of these weakenings, the only matrix interpretations that our termination tool was able to find involved relatively large matrices (of dimension $5$). Existing termination tools often limit their default strategies to search for small interpretations as they are tailored for the setting where the task is to quickly solve a large quantity of relatively easy problems. We make the point that, given more resources, the method of matrix interpretations has the potential to scale.
\item We observe that the phase-saving heuristic used by default in modern SAT solvers can degrade the performance of CDCL (conflict-driven clause learning) solvers~\cite{MS99} on formulas encoding the existence of matrix interpretations, whereas using negative branching can improve solver performance.
\end{itemize}

In \cref{sec:more-rewriting}, we present adaptations of our rewriting system that allow reformulating several more open problems in mathematics as termination problems of small size.

\section{Preliminaries}
\label{sec:preliminaries}

\subsection{String Rewriting Systems}
\label{sec:string-rewriting-systems}

We briefly review the notions related to string rewriting that are used in the rest of the paper. We adopt the notation of Zantema~\cite{Zan05} and Endrullis~et~al.~\cite{EWZ08}. For a comprehensive introduction to string or term rewriting, see Book~and~Otto~\cite{BO93} or Baader~and~Nipkow~\cite{BN98}, respectively.

\begin{definition}[String rewriting system]\label{defn:srs}
  Let $\Sigma$ be an alphabet, i.e., a set of symbols. A \emph{string rewriting system} (SRS) over $\Sigma$ is a relation $R \subseteq \Sigma^* \times \Sigma^*$. Elements $(\ell, r) \in R$ are called \emph{rewrite rules} and are usually written as $\ell \to r$. The system $R$ induces a \emph{rewrite relation} $\rew_R \coloneqq \{(s \ell t, s r t) \mid s, t \in \Sigma^*,\; \ell \to r \in R\}$ on the set $\Sigma^*$ of strings.
\end{definition}

\begin{definition}[Termination]\label{defn:termination}
  A relation $\rew$ on $A$ is \emph{terminating} if there is no infinite sequence $s_0, s_1, \dotsc \in A$ such that $s_i \to s_{i+1}$ for all $i \geq 0$. A terminating relation is also commonly called Noetherian, well-founded, or strongly normalizing. We write $\SN(\rew)$ to denote that $\rew$ is terminating.
\end{definition}

We often conflate an SRS $R$ with the rewrite relation it induces when discussing termination. In particular, we write ``$R$ is terminating'' instead of ``$\rew_R$ is terminating''. We also often do not specify the alphabet and take it to be the set of all the symbols that appear in a given rewriting system.

\begin{example}\label{ex:termination}
  \hfill
  \begin{enumerate}
  \item $R_1 = \{aa \to a\}$ is terminating, which is proved easily by observing that each rewrite decreases the length of a string by $1$.
  \item $R_2 = \{a \to aa\}$ is not terminating, since even a single occurrence of $a$ allows indefinitely producing more of the symbol $a$.
  \item $R_3 = \{ab \to ba\}$ is terminating, since all rewrite sequences eventually convert a given string into the form $b^* a^*$ and stop. An example sequence is shown below, with the underlines indicating the substrings that the rules are applied to.
    \begin{equation*}
      \underline{ab}bab \to_{R_3} b\underline{ab}ab \to_{R_3} bba\underline{ab} \to_{R_3} bb\underline{ab}a \to_{R_3} bbbaa
    \end{equation*}
  \end{enumerate}
\end{example}

A useful generalization of termination is relative termination:
\begin{definition}[Relative termination]\label{defn:relative-termination}
  Let $\rew_1, \rew_2$ be relations on $A$. Then $\rew_1$ is said to be \emph{terminating relative to} $\rew_2$ if there is no infinite sequence $s_0, s_1, \dotsc \in A$ such that
  \begin{itemize}
  \item $s_i \to_1 s_{i+1}$ for infinitely many values of $i$,
  \item $s_i \to_2 s_{i+1}$ for all other values of $i$.
  \end{itemize}
  We write $\SN(\rew_1 \rel \rew_2)$ to denote that $\rew_1$ is terminating relative to $\rew_2$.
\end{definition}
Equivalently, $R$ is said to be terminating relative to $S$ if every sequence of rewrites for the system $R \cup S$ applies the rules from $R$ at most finitely many times. Note that we have $\SN(R \rel \varnothing) = \SN(R)$ by the definition, so relative termination is indeed a generalization.

\begin{example}\label{ex:relative-termination}
  Let $R = \{aa \to aba\}$ and $S = \{b \to bb\}$. The system $R \cup S$ is not terminating since $S$ gives way to infinite rewrite sequences; however, $R$ is terminating relative to $S$ because $R$ by itself is terminating and applying the rule $b \to bb$ does not facilitate further applications of the rule $aa \to aba$.
\end{example}

Relative termination allows proofs of termination to be broken into steps as codified by the following theorem.
\begin{theorem}[Rule~removal~{\cite[Theorem~1]{Zan05}}]\label{thm:relative-termination}
  Let $R$ be an SRS\@. If there exists a subset $T \subseteq R$ such that $\SN(T \rel R)$ and $\SN(R \setminus T)$, then $\SN(R)$.
\end{theorem}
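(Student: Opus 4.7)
The plan is to argue by contraposition: I will assume that $R$ is not terminating and derive a contradiction with either of the two hypotheses $\SN(T \rel R)$ or $\SN(R \setminus T)$, depending on how often rules from $T$ are used.

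First I would fix an infinite rewrite sequence $s_0 \rew_R s_1 \rew_R s_2 \rew_R \cdots$, which exists by our assumption that $R$ is not terminating. For each index $i$, the step $s_i \rew_R s_{i+1}$ is witnessed by some particular rule $\ell_i \to r_i \in R$ (if more than one rule applies, fix any one such choice). This lets me partition the indices $i \in \N$ into two classes: those for which $\ell_i \to r_i \in T$, and those for which $\ell_i \to r_i \in R \setminus T$. The whole proof now reduces to a dichotomy on whether the first class is infinite or finite.

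If only finitely many indices use a rule from $T$, pick $N$ past the last such index. Then $s_N \rew_{R \setminus T} s_{N+1} \rew_{R \setminus T} s_{N+2} \rew_{R \setminus T} \cdots$ is an infinite $R \setminus T$-sequence, which directly contradicts $\SN(R \setminus T)$. Otherwise, infinitely many indices use a rule from $T$. Since $T \subseteq R$, every step of the original sequence is in particular a $\rew_R$-step; the steps that use a rule from $T$ are moreover $\rew_T$-steps. So the sequence $s_0, s_1, \ldots$ is an infinite sequence in which $\rew_T$ occurs at infinitely many positions and $\rew_R$ holds at every position (hence trivially at all remaining positions). By \cref{defn:relative-termination}, this contradicts $\SN(T \rel R)$.

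I do not expect any significant obstacle: the argument is a clean case split, and the only point requiring care is the bookkeeping step where each rewrite is labeled by a specific rule, so that ``using a rule in $T$'' and ``using a rule in $R \setminus T$'' form a genuine partition of the indices. The key conceptual ingredient is the observation that $\rew_T \subseteq \rew_R$, which is what allows the ``infinitely many $T$-steps'' case to be fed straight into the hypothesis on relative termination.
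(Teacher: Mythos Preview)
The paper does not actually prove \cref{thm:relative-termination}; it merely quotes the result from Zantema~\cite[Theorem~1]{Zan05}. Your argument is the standard contrapositive proof of this fact and is correct: the dichotomy on whether $T$-rules are used finitely or infinitely often cleanly matches the two hypotheses, and your care in labeling each step by a specific rule so that the index set is genuinely partitioned is exactly the bookkeeping needed to feed Case~2 into \cref{defn:relative-termination}.
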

This theorem allows us to ``remove rules'' in the following way. When proving $\SN(R)$, if we succeed at finding a subset $T$ satisfying $\SN(T \rel R)$, the proof obligation becomes weakened to $\SN(R \setminus T)$, where the rules of $T$ are no longer present. This removal of rules can be repeated until no rules remain, thus producing a stepwise proof of termination.

Another useful technique is reversal:
\begin{definition}[Reversal]\label{defn:reversal}
  Let $s = s_1 \dots s_n \in \Sigma^*$ be a string and $R$ be an SRS\@. Letting $\rev{s} \coloneqq s_n \dots s_1$, the \emph{reversal of $R$} is defined as $\rev{R} \coloneqq \{\rev{\ell} \to \rev{r} \mid \ell \to r \in R\}$.
\end{definition}
\begin{lemma}[Rule~reversal~{\cite[Lemma~2]{Zan05}}]\label{lem:reversal}
  Let $R,S$ be SRSs. Then, $\SN(R \rel S)$ if and only if $\SN(\rev{R} \rel \rev{S})$.
\end{lemma}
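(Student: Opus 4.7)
The plan is to prove the lemma by exploiting the fact that reversal is an involution on strings (and hence on rewriting systems) that is compatible with the rewrite relation in a step-by-step manner. Concretely, I will show that $s \rew_R t$ if and only if $\rev{s} \rew_{\rev{R}} \rev{t}$, which immediately yields a bijection between rewrite sequences of $R \cup S$ and rewrite sequences of $\rev{R} \cup \rev{S}$ that preserves which rule (up to reversal) is applied at each step. Relative termination is a property that depends only on which rule is applied at each step of a sequence, so this bijection will suffice.

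First I would record two trivial observations: $\rev{\rev{s}} = s$ for every string $s$, and $\rev{\rev{R}} = R$ for every SRS $R$. This reduces the ``if and only if'' to a single implication, since the other direction follows by applying the implication to $\rev{R}$ and $\rev{S}$. Next, for the key compatibility step, suppose $s \rew_R t$ via a rule $\ell \to r \in R$ and a factorization $s = u \ell v$, $t = u r v$. Then $\rev{s} = \rev{v}\,\rev{\ell}\,\rev{u}$ and $\rev{t} = \rev{v}\,\rev{r}\,\rev{u}$, and since $\rev{\ell} \to \rev{r} \in \rev{R}$ by definition, we get $\rev{s} \rew_{\rev{R}} \rev{t}$. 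The converse at the single-step level follows by the involution property.

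With this in hand, suppose for contradiction that $\SN(R \rel S)$ holds but $\SN(\rev{R} \rel \rev{S})$ fails. Then there is an infinite sequence $t_0, t_1, \ldots$ of rewrites in $\rev{R} \cup \rev{S}$ using rules of $\rev{R}$ infinitely often. Applying the compatibility observation componentwise yields an infinite sequence $\rev{t_0}, \rev{t_1}, \ldots$ of rewrites in $R \cup S$, and the step $t_i \to t_{i+1}$ uses a rule of $\rev{R}$ exactly when $\rev{t_i} \to \rev{t_{i+1}}$ uses the corresponding reversed rule of $R$. Hence rules of $R$ are applied infinitely often, contradicting $\SN(R \rel S)$. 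The symmetric direction is identical after swapping the roles of $R$ and $\rev{R}$.

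The proof is almost entirely bookkeeping, so I do not anticipate a substantive obstacle; the one thing to be careful about is that relative termination is defined in terms of \emph{which} rule was used at each step (not merely the relation $\rew_{R \cup S}$), which is exactly what the rule-by-rule correspondence above preserves.
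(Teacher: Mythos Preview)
Your argument is correct. Note, however, that the paper does not actually supply a proof of this lemma: it merely cites it as \cite[Lemma~2]{Zan05} and moves on. So there is no ``paper's own proof'' to compare against here; your write-up is precisely the standard argument one would expect (reversal is an involution that conjugates $\rew_R$ to $\rew_{\rev{R}}$ step by step, hence maps witnessing infinite sequences to witnessing infinite sequences while preserving which subsystem each step belongs to).
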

Reversal is of interest because methods for proving termination are not necessarily invariant under reversal, that is, a given technique may fail to show the termination of a system $R$ while succeeding for its reversal $\rev{R}$.

Yet another important notion is top termination:
\begin{definition}[Top termination]\label{defn:top-termination}
  Let $R$ be an SRS over $\Sigma$. The \emph{top rewrite relation} induced by $R$ is defined as $\rew_{\topt{R}} \coloneqq \{(\ell s, r s ) \mid s \in \Sigma^*,\; \ell \to r \in R\}$. If $\rew_{\topt{R}}$ is terminating, $R$ is said to be \emph{top terminating}.
\end{definition}
In plain language, top termination allows rewrites to be performed only at the leftmost end of a string, so it requires a condition less strict than that of termination. As we will see in the next section (\cref{thm:monotone-algebra}), top termination problems can admit proofs of a more relaxed form compared with termination.

Relative top termination, i.e., proving $\SN(\topt{R} \rel S)$ for SRSs $R$ and $S$, is a crucial component in the dependency pair approach~\cite{AG00}, which reduces a termination problem to a relative top termination problem that often ends up being easier to solve via automated methods compared with the original problem. In order to avoid requiring familiarity with the dependency pair approach, we omit its discussion and instead prove a self-contained result (\cref{lem:T-top-termination}) that encapsulates dependency pairs in a more elementary manner for the specific rewriting systems that we consider in this paper.

\subsection{Interpretation Method}
\label{sec:interpretation-method}

Let $R$ be an SRS over the alphabet $\Sigma$. The main idea when proving termination is to show that there exists a well-founded (strict) order $>$ on $\Sigma^*$ such that, for all strings $s,t \in \Sigma^*$, if $s \to_R t$ then $s > t$. When this is the case, $R$ is terminating since any rewrite causes a ``strict decrease'' with respect to $>$ and by its well-foundedness there cannot exist an infinitely decreasing sequence. The condition over infinitely many strings can be replaced by another one that involves only the rules of $R$ by employing a \emph{reduction order}~\citetext{\citealp[Section~5.2]{BN98}; \citealp[Section~2.2]{BO93}}.
\begin{definition}[Reduction order]
  Let $>$ be a well-founded order on $\Sigma^*$. If for all $s,t,p,q \in \Sigma^*$, $s > t$ implies $psq > ptq$, then $>$ is called a \emph{reduction order}.
\end{definition}
\begin{theorem}[{\citetext{\citealp[Theorem~5.2.3]{BN98}; \citealp[Theorem~2.2.4]{BO93}}}]\label{thm:termination-reduction-order}
  Let $R$ be an SRS over the alphabet $\Sigma$. The system $R$ is terminating if and only if there exists a reduction order $>$ on $\Sigma^*$ such that $\ell > r$ holds for all $\ell \to r \in R$.
\end{theorem}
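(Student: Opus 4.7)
The plan is to prove the two directions of the biconditional separately. For the $(\Leftarrow)$ direction, assume a reduction order $>$ on $\Sigma^*$ exists with $\ell > r$ for every $\ell \to r \in R$. Any single rewrite step $s \to_R t$ decomposes by \cref{defn:srs} as $s = p\ell q$, $t = prq$ for some $p,q \in \Sigma^*$ and some rule $\ell \to r \in R$, and the context-closure property of $>$ then yields $s > t$. Hence $\to_R\, \subseteq\, >$, so any infinite rewrite chain would produce an infinite descending chain for $>$, contradicting well-foundedness. Therefore $\SN(\to_R)$.

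For the $(\Rightarrow)$ direction, assume $\SN(\to_R)$. The natural candidate for the witnessing reduction order is $>\, \coloneqq\, \to_R^+$, the transitive closure of the one-step rewrite relation. I would verify three properties in turn. First, $>$ is a strict partial order: transitivity is immediate from the definition of transitive closure, and irreflexivity follows because $s \to_R^+ s$ would give a rewrite cycle that can be iterated to produce an infinite $\to_R$-sequence, contradicting termination. Second, $>$ is well-founded: an infinite descending chain $s_0 > s_1 > \cdots$ could be refined into an infinite $\to_R$-sequence by concatenating the finite rewrite paths witnessing each $s_i > s_{i+1}$, again contradicting $\SN(\to_R)$. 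Third, $>$ is closed under contexts: the one-step relation $\to_R$ is itself context-closed by \cref{defn:srs}, and this property is inherited by $\to_R^+$ via a straightforward induction on the length of a rewrite path. Finally, for every rule $\ell \to r \in R$ we have $\ell \to_R r$, which is a chain of length one, so $\ell > r$.

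The main potential pitfall is that although none of the three verifications above is deep, each one requires a distinct appeal either to termination (for irreflexivity and well-foundedness) or to the context-closure of the base relation $\to_R$ (for closure of $>$ under contexts). Keeping those appeals explicit prevents the argument from collapsing into a triviality and confirms that the equivalence genuinely captures termination as the existence of a monotone, well-founded ``measure'' on strings. With this bookkeeping handled, the theorem drops out essentially by unfolding definitions.
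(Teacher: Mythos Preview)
Your proof is correct and is essentially the standard textbook argument. Note, however, that the paper does not supply its own proof of this theorem: it is stated with citations to \cite[Theorem~5.2.3]{BN98} and \cite[Theorem~2.2.4]{BO93} and then used without further justification. Your argument matches the one found in those references, taking $\to_R^+$ as the witnessing reduction order in the forward direction and using context-closure plus well-foundedness in the backward direction.
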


The interpretation method is a particular way of defining a reduction order. Instead of considering the strings directly, we interpret each symbol $\sigma \in \Sigma$ as a function $[\sigma] \colon A \to A$, where $A$ is some carrier set, for example of natural numbers or vectors, and we extend this interpretation from symbols to strings $s = s_1 \dots s_n \in \Sigma^*$ as $[s] \coloneqq [s_1] \circ \dots \circ [s_n]$. When the domain $A$ is already equipped with a well-founded order $>$, we can define a well-founded order $>_\sA$ on $\Sigma^*$ by
\begin{equation}\label{eq:interpretation-order}
  s >_\sA t \quad \text{ if and only if } \quad [s](x) > [t](x) \text{ for all } x \in A,
\end{equation}
where $\sA = (A, [\cdot]_\Sigma, >)$ is a structure with $[\cdot]_\Sigma \coloneqq \{[\sigma] \mid \sigma \in \Sigma\}$ denoting the collection of interpretations. For appropriately chosen interpretations, $>_\sA$ ends up being a reduction order:
\begin{theorem}[{\cite[Theorem~5.3.3]{BN98}}]\label{thm:interpretation-order-monotone}
  Let $\Sigma$ be an alphabet and let $>$ be a well-founded order on $A$. If for all symbols $\sigma \in \Sigma$ the interpretation $[\sigma] \colon A \to A$ is monotone with respect to $>$, then $>_\sA$ as defined in~\cref{eq:interpretation-order} is a reduction order on $\Sigma^*$.
\end{theorem}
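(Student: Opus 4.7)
The plan is to verify the two defining properties of a reduction order for $>_\sA$: (i) it is a well-founded strict order on $\Sigma^*$, and (ii) it is closed under left and right contexts, meaning $s >_\sA t$ implies $psq >_\sA ptq$ for all $p,q \in \Sigma^*$. Transitivity and irreflexivity of $>_\sA$ descend pointwise from the corresponding properties of $>$ on $A$ (taking $A$ to be nonempty so that the universal quantifier in the definition is not vacuous), so I would treat these as immediate.

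For well-foundedness I argue by contradiction: suppose there were an infinite descending chain $s_0 >_\sA s_1 >_\sA s_2 >_\sA \cdots$ in $\Sigma^*$. Fixing any $x \in A$, the definition of $>_\sA$ gives $[s_0](x) > [s_1](x) > [s_2](x) > \cdots$, an infinite descending chain in $A$, contradicting the well-foundedness of $>$. For closure under contexts, assume $s >_\sA t$, i.e., $[s](y) > [t](y)$ for all $y \in A$, and fix $p, q \in \Sigma^*$. Since the extension of $[\cdot]$ from symbols to strings is defined as composition of the symbol interpretations, for every $x \in A$ we have $[psq](x) = [p]([s]([q](x)))$ and $[ptq](x) = [p]([t]([q](x)))$. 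Setting $y \coloneqq [q](x)$ turns the hypothesis into $[s](y) > [t](y)$, and applying $[p]$ to both sides (see below) yields $[psq](x) > [ptq](x)$, as required.

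The only auxiliary fact needed is that $[p] = [p_1] \circ \cdots \circ [p_k]$ is monotone with respect to $>$ whenever each $[p_i]$ is, which follows by a straightforward induction on the length of $p$ (the base case being $[\varepsilon] = \mathrm{id}_A$, trivially monotone; the inductive step chains monotonicity of $[p_1]$ with the inductive hypothesis on $[p_2] \circ \cdots \circ [p_k]$). This lifting of monotonicity from individual symbols to strings is the only subtlety; there is no genuine obstacle, as the rest of the argument is a direct unfolding of the definitions of the extended interpretation and of the order $>_\sA$.
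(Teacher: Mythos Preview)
The paper does not supply its own proof of this theorem; it is quoted from Baader and Nipkow~\cite{BN98} and used as background. Your argument is the standard textbook proof and is correct: pointwise lifting of irreflexivity and transitivity, well-foundedness by evaluating a putative infinite descending chain at a fixed element of $A$, and context closure via the observation that $[p]$ is monotone (by induction on $|p|$) together with the compositional definition of $[\cdot]$ on strings.
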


Thus, after fixing $(A, >)$, proving termination reduces to finding suitable (monotone) interpretations of the symbols as functions~\cite[for details see][Section~5.3]{BN98}.\footnote{Termination is in fact equivalent to the existence of such interpretations~\cite[Proposition~1]{Zan94}.} As a concrete example, here is a simple proof of termination via the interpretation method.
\begin{example}
  Recall $R_3 = \{ab \to ba\}$ from \cref{ex:termination} with $\Sigma = \{a,b\}$. Fix $(\Np, >)$ with the usual order on positive integers, which is well-founded by the well-ordering principle.

  Let $[a](x) = x^2$ and $[b](x) = x + 1$. Denote $\sN = (\Np, \{[a], [b]\} , >)$. Both functions are monotone with respect to $>$, so by \cref{thm:interpretation-order-monotone} the relation $\mathord{>}_\sN = \{(s,t) \in \Sigma^* \times \Sigma^* \mid [s](x) > [t](x) \text{ for all } x \in \Np\}$ is a reduction order. For the rule $ab \to ba$, since
  \begin{equation*}
    ([a] \circ [b])(x) = (x + 1)^2 >  x^2 + 1 = ([b] \circ [a])(x)
  \end{equation*}
  holds for all $x \in \Np$, we have $ab >_\sN ba$. Thus, by \cref{thm:termination-reduction-order} the SRS $R_3$ is terminating.
\end{example}

In order to use the interpretation method for relative termination (resp.\ top termination), the above ideas can be generalized in the form of extended (resp.\ weakly) monotone algebras~\cite{EWZ08}.
\begin{definition}[Extended/weakly monotone algebra]\label{defn:monotone-algebra}
  Let
  \begin{itemize}
  \item $\Sigma$ be an alphabet,
  \item $A$ a set,
  \item $[\sigma] \colon A \to A$ an interpretation for every $\sigma \in \Sigma$,
  \item $>$ and $\gtrsim$ order relations over $A$ such that $>$ is well-founded and $\gtrsim$ satisfies $\mathord{> \cdot \gtrsim} \subseteq \mathord{>}$.
  \end{itemize}
  Letting $[\cdot]_\Sigma = \{[\sigma] \mid \sigma \in \Sigma\}$, the structure $(A, [\cdot]_\Sigma, >, \gtrsim)$ is a \emph{weakly monotone $\Sigma$-algebra} if for every $\sigma \in \Sigma$ the interpretation $[\sigma]$ is monotone with respect to $\gtrsim$. It is an \emph{extended monotone $\Sigma$-algebra} if, additionally, for every $\sigma \in \Sigma$ the interpretation $[\sigma]$ is monotone with respect to $>$.
\end{definition}

Relative termination (resp.\ top termination) is characterized as the existence of extended (resp.\ weakly) monotone algebras.
\begin{theorem}[{\cite[Theorem 2]{EWZ08}}]\label{thm:monotone-algebra}
  Let $R$ and $S$ be SRSs over the alphabet $\Sigma$. We have $\SN(R \rel S)$ (resp.\ $\SN(\topt{R} \rel S)$) if and only if there exists an extended (resp.\ weakly) monotone $\Sigma$-algebra $(A, [\cdot]_\Sigma, >, \gtrsim)$ such that
  \begin{itemize}
  \item for each rule $\ell \to r \in R$ we have $[\ell](x) > [r](x)$ for all $x \in A$,
  \item for each rule $\ell \to r \in S$ we have $[\ell](x) \gtrsim [r](x)$ for all $x \in A$.
  \end{itemize}
\end{theorem}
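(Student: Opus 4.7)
The theorem packages two biconditionals in parallel (relative termination, and relative top termination). In both cases, the $(\Leftarrow)$ direction is the one used in practice and is proved by a direct semantic argument; the $(\Rightarrow)$ direction is a structural completeness statement proved by building a term model.

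For $(\Leftarrow)$, the plan is to lift the orders on $A$ to relations $>_\sA$ and $\gtrsim_\sA$ on $\Sigma^*$ by the formula of \cref{eq:interpretation-order}: $s >_\sA t$ iff $[s](x) > [t](x)$ for every $x \in A$, and analogously for $\gtrsim_\sA$. Three facts then need checking: (i) $>_\sA$ is well-founded, because any infinite $>_\sA$-chain yields, after evaluating at a single $x \in A$, an infinite $>$-chain in $A$; (ii) $\mathord{>_\sA \cdot \gtrsim_\sA} \subseteq \mathord{>_\sA}$, inherited pointwise from the corresponding property on $A$; and (iii) both relations are stable under context. Step (iii) is where the distinction between extended and weakly monotone algebras matters. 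In the extended case, full $>$- and $\gtrsim$-monotonicity of each $[\sigma]$ propagates pointwise inequalities through arbitrary prefixes and suffixes; in the weakly monotone case only $\gtrsim_\sA$ is stable under context, but $>_\sA$ still satisfies $\ell q >_\sA r q$ whenever $\ell >_\sA r$, since $[\ell q](x) = [\ell]([q](x))$ and the strict inequality propagates through function application without invoking any monotonicity. Once (i)--(iii) are in place, every $R$-step of a rewrite sequence strictly decreases and every $S$-step weakly decreases $[\cdot](x)$ at any fixed $x$; repeated use of (ii) collapses the $\gtrsim$-stretches between $R$-steps to produce an infinite $>$-chain in $A$, contradicting well-foundedness.

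For $(\Rightarrow)$ in the relative case, I would build the term model $A = \Sigma^*$ with $[\sigma](s) \coloneqq \sigma s$, so that $[s](x) = sx$ (concatenation). Define $s \gtrsim t$ iff $s$ rewrites to $t$ using rules from $R \cup S$, and $s > t$ iff such a rewriting exists that uses at least one $R$-rule. Well-foundedness of $>$ is precisely the hypothesis $\SN(R \rel S)$; $\mathord{> \cdot \gtrsim} \subseteq \mathord{>}$ is immediate from concatenating reductions; each $[\sigma]$ is $>$- and $\gtrsim$-monotone because prepending a letter preserves any rewrite sequence; and for $\ell \to r \in R$, the fact that $[\ell](x) = \ell x \to_R r x = [r](x)$ gives $[\ell](x) > [r](x)$, with the analogous statement for $S$.

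The main obstacle I anticipate is $(\Rightarrow)$ for top termination. The naive term model fails because a top-$R$ rewrite at the front of $s$ becomes a non-top rewrite inside the prepended string $\sigma s$, so any order defined from $\topt R$ is not preserved under $[\sigma]$ and the required $\gtrsim$-monotonicity breaks. The plan is to take $\gtrsim\, =\, \to^*_{R \cup S}$ (which is manifestly stable under context, yielding a weakly monotone algebra) and to define $>$ so that it captures top-$R$ decrease specifically, a natural candidate being ``$s > t$ iff there is a reduction $s \to^*_{R \cup S} \cdot \to_{\topt R} \cdot \to^*_{R \cup S} t$''. Then $[\ell](x) > [r](x)$ holds via the top-rewrite of $\ell$ at the head of $\ell x$, and $\mathord{> \cdot \gtrsim} \subseteq \mathord{>}$ is automatic. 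The technical crux — and the step I expect to absorb most of the effort — is deriving well-foundedness of this $>$ from $\SN(\topt R \rel S)$, since the $\to^*_{R \cup S}$ stretches flanking each $\topt R$ step use $R$ anywhere, not just at the top; resolving this may require refining the carrier $\Sigma^*$ (for instance, by marking the leading symbol to syntactically distinguish top from non-top positions) or an auxiliary lemma that converts the ambient rewrites into a form to which $\SN(\topt R \rel S)$ can be applied.
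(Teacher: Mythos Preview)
The paper does not prove this theorem; it is cited from \cite[Theorem~2]{EWZ08} and used as a black box throughout. There is therefore no in-paper argument to compare your proposal against.

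On the substance of your attempt: the $(\Leftarrow)$ directions and the $(\Rightarrow)$ direction for $\SN(R \rel S)$ are standard and correct as you have them. For the $(\Rightarrow)$ direction of $\SN(\topt{R} \rel S)$, your diagnosis of why the naive term model fails is accurate, but the repair is simpler than marking symbols. The mistake is taking $\gtrsim\,=\,\to^*_{R \cup S}$: nothing in \cref{defn:monotone-algebra} requires $\gtrsim$ to contain $\to_R$, only that $[\ell](x) \gtrsim [r](x)$ hold for $S$-rules. So take $A = \Sigma^*$, $[\sigma](s) = \sigma s$, set $\gtrsim\,=\,\to^*_S$, and set $s > t$ iff $s$ reaches $t$ by a (possibly empty on either side) sequence of $\to_{\topt{R}}$ and $\to_S$ steps containing at least one $\to_{\topt{R}}$ step. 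Then $[\sigma]$ is $\gtrsim$-monotone because $S$-steps are stable under prepending; $\mathord{> \cdot \gtrsim} \subseteq \mathord{>}$ is concatenation; well-foundedness of $>$ is exactly $\SN(\topt{R} \rel S)$; for $\ell \to r \in R$ the single step $\ell x \to_{\topt{R}} r x$ gives $[\ell](x) > [r](x)$; and for $\ell \to r \in S$ the step $\ell x \to_S r x$ gives $[\ell](x) \gtrsim [r](x)$. Your anticipated ``technical crux'' dissolves once $\gtrsim$ is shrunk to $\to^*_S$.
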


An effective way to automatically prove relative termination (or top termination) is to try to satisfy the conditions of the above theorem by fixing $(A, >, \gtrsim)$ and algorithmically searching for appropriate interpretations of the symbols. Matrix interpretations is a specific instance of this generic method. In the following section we describe (at a high level) the method of matrix interpretations. For more details and background, we refer the reader to existing work~\cite{HW06,EWZ08,KW09,ST14}.

It may seem an odd choice to focus on matrix interpretations and implement our own prover given that there is a wealth of other techniques for proving termination that are already implemented in the existing tools (e.g., \AProVE~\cite{GAB+17}, \Matchbox~\cite{Wal04}, \TTTT~\cite{KSZM09}). We did experiment with other methods; however, at least for the rewriting systems of the specific form in this paper it appears that the method of matrix interpretations is the most effective. For instance, one of the most difficult instances in this paper (in the sense of the smallest proof of termination that we could find) appears to require arctic matrix interpretations in all the proofs that we (and the authors of \AProVE{} and \Matchbox{}) could find. (There remain some versions of matrix interpretations that we have not yet experimented with~\cite{CGP10,NM11}.)

Additionally, we chose to implement our own minimal prover to have finer control over the SAT solving aspect of the search for an interpretation. We give a summary of our findings regarding this topic in \cref{sec:sat-solving}.

\subsection{Natural/Arctic Matrix Interpretations}
\label{sec:natural-arctic-matrix-interpretations}

\subsubsection{Natural Numbers}
\label{sec:natural-interpretations}
Let $R$ be an SRS over the alphabet $\Sigma$. Fix a dimension $d \in \Np$. In the method of natural matrix interpretations~\cite{EWZ08} we set $A = \N^d$ and define, for vectors $\vx, \vy \in \N^d$,
\begin{align*}
  \vx > \vy &\iff x_1 > y_1 \ \wedge \ x_i \geq y_i \text { for all } i \in \{2,\dots,d\}, \\
  \vx \gtrsim \vy &\iff x_i \geq y_i \text{ for all } i \in \{1,\dots,d\}.
\end{align*}
For interpreting each symbol $\sigma \in \Sigma$, we consider an affine function $[\sigma] \colon \N^d \to \N^d$:
\begin{equation*}\label{eq:affine-interpretation}
  [\sigma](\vx) = \mM_\sigma \vx + \vv_\sigma
\end{equation*}
In this way, the structure ($\N^d, [\cdot]_\Sigma, >, \gtrsim$) satisfies the requirements of \cref{defn:monotone-algebra} for a weakly monotone algebra. Additionally setting $(M_\sigma)_{1,1} = 1$ satisfies the requirements for an extended monotone algebra.

Furthermore, since the composition of affine functions is affine, for all $s \in \Sigma^*$ there exist $\mM_s \in \N^{d \times d}$, $\vv_s \in \N^d$ such that $[s](\vx) = \mM_s \vx + \vv_s$. As a result, given a collection $[\cdot]_\Sigma$ of natural matrix interpretations, the conditions of \cref{thm:monotone-algebra} can be checked by computing for each rule $\ell \to r \in R$ the corresponding matrices $\mM_\ell$, $\mM_r$ and vectors $\vv_\ell$, $\vv_r$ and comparing them elementwise~\cite[Lemma~4]{EWZ08}. Extend $\gtrsim$ to matrices $\mM, \mN \in \N^{d\times d}$ as
\begin{equation*}
  \mM \gtrsim \mN \iff M_{i,j} \geq N_{i,j} \text{ for all } i,j \in \{1,\dots,d\}.
\end{equation*}
Then, we have
\begin{equation}\label{eq:matrix-check}
  \begin{split}
    \mM_\ell \gtrsim \mM_r \ \wedge \ \vv_\ell > \vv_r &\iff [\ell](\vx) > [r](\vx) \text{ for all } \vx \in \N^d, \\
    \mM_\ell \gtrsim \mM_r \ \wedge \ \vv_\ell \gtrsim \vv_r &\iff [\ell](\vx) \gtrsim [r](\vx) \text{ for all } \vx \in \N^d.
  \end{split}
\end{equation}
This shows that it is decidable to check whether a given collection of interpretations of the form considered here constitutes a proof of termination.

\begin{example}
  Recall $R = \{aa \to aba\}$ and $S = \{b \to bb\}$ from \cref{ex:relative-termination} with $\Sigma = \{a,b\}$. The following functions constitute a matrix interpretations proof that shows $\SN(R \rel S)$.
  \begin{gather*}
    [a](\vx) =
    \begin{bmatrix}
      1 & 1 \\
      0 & 0
    \end{bmatrix}
    \vx +
    \begin{bmatrix}
      0 \\
      1
    \end{bmatrix}
    \qquad
    [b](\vx) =
    \begin{bmatrix}
      1 & 0 \\
      0 & 0
    \end{bmatrix}
    \vx +
    \begin{bmatrix}
      0 \\
      0
    \end{bmatrix}
  \end{gather*}
  It is easy to check that the above interpretations give an extended monotone algebra and that they satisfy the below relations for all $\vx \in \N^2$, which implies $\SN(R \rel S)$ via \cref{thm:monotone-algebra}.
  \begin{gather*}
    [aa](\vx) =
    \begin{bmatrix}
      1 & 1 \\
      0 & 0
    \end{bmatrix}
    \vx +
    \begin{bmatrix}
      1 \\
      1
    \end{bmatrix}
    >
    \begin{bmatrix}
      1 & 1 \\
      0 & 0
    \end{bmatrix}
    \vx +
    \begin{bmatrix}
      0 \\
      1
    \end{bmatrix} = [aba](\vx)
    \\[0.5em]
    [b](\vx) =
    \begin{bmatrix}
      1 & 0 \\
      0 & 0
    \end{bmatrix}
    \vx +
    \begin{bmatrix}
      0 \\
      0
    \end{bmatrix}
    \gtrsim
    \begin{bmatrix}
      1 & 0 \\
      0 & 0
    \end{bmatrix}
    \vx +
    \begin{bmatrix}
      0 \\
      0
    \end{bmatrix} = [bb](\vx)
  \end{gather*}
\end{example}

In order to automate the search for matrix interpretations given a rewriting system $R$, an approach that turns out to be effective is to encode all of the aforementioned constraints as a propositional formula in conjunctive normal form and use a SAT solver to look for a satisfying assignment. This additionally involves fixing a finite domain for the coefficients that can occur in the interpretations and encoding arithmetic over the chosen finite domain using propositional variables. (We discuss our choice of encoding in \cref{sec:sat-solving}.)

\subsubsection{Arctic Integers}
\label{sec:arctic-interpretations}
Natural matrix interpretations can also be adapted to the max--plus algebra of arctic\footnote{The name ``arctic'' was chosen to contrast with the min--plus algebra of the ``tropical'' semiring.} natural numbers $\A_\N = \N \cup \{-\infty\}$ or arctic integers $\A_\Z = \Z \cup \{- \infty\}$ as coefficients. This adaptation enables termination proofs in some cases where either there is no other automated method that leads to a proof or only relatively complicated proofs are known~\cite{KW09}.

Let $R$ be an SRS over the alphabet $\Sigma$, and let $\A$ be either one of $\A_\N$ or $\A_\Z$. In the method of arctic matrix interpretations~\cite{KW09,ST14} we use the arctic semiring $(\A, \oplus, \otimes)$, with the following operations:
\begin{align*}
  x \oplus y &= \max\{x, y\} \\
  x \otimes y &= x + y
\end{align*}
Let $>_\Z$ and $\geq_\Z$ denote the usual order relations over integers. Adopting the conventions that $x >_\Z -\infty$ for all $x \in \Z$ and that $-\infty \geq_\Z -\infty$, we define, for $x, y \in \A$,
\begin{align*}
  x \gg y &\iff x >_\Z y \; \vee \; x = y = -\infty, \\
  x \geq y &\iff x \geq_\Z y.
\end{align*}
We fix a dimension $d \in \Np$ and we extend both $\gg$ and $\geq$ to arctic vectors and arctic matrices elementwise. As these definitions allow $-\infty \gg -\infty$, the order $\gg$ is not well-founded. To ensure well-foundedness of $\gg$ we restrict the domain to $\N \times \A^{d-1}$. Then, for interpreting each symbol $\sigma \in \Sigma$, we consider an affine function $[\sigma] \colon \A^d \to \A^d$, written as
\begin{equation*}
  [\sigma](\vx) = \mM_\sigma \otimes \vx \oplus \vv_\sigma,
\end{equation*}
where we require $(M_\sigma)_{1,1} \geq 0$ or $(v_\sigma)_1 \geq 0$ to ensure that $[\sigma](\vz) \in \N \times \A^{d-1}$ for $\vz \in \N \times \A^{d-1}$. In this way, the structure $(\N \times \A^{d-1}, [\cdot]_\Sigma, \gg, \geq)$ satisfies the requirements of \cref{defn:monotone-algebra} for a weakly monotone algebra. Additionally choosing $\A = \A_\N$ and setting each $\vv_\sigma$ to be the $-\infty$ vector (consequently enforcing $(M_\sigma)_{1,1} \geq 0$) satisfies the requirements for an extended monotone algebra. This means that arctic integers (as opposed to arctic natural numbers) and vectors that contain finite elements (i.e., not all $-\infty$) are used only when proving top termination.

As the composition of affine functions over arctic vectors is affine, given a collection $[\cdot]_\Sigma$ of arctic matrix interpretations we can compute for each rule $\ell \to r \in R$ the corresponding arctic matrices $\mM_\ell$, $\mM_r$ and arctic vectors $\vv_\ell$, $\vv_r$. Elementwise comparisons of these matrices and vectors give sufficient conditions for the inequalities in \cref{thm:monotone-algebra}. In particular, due to Koprowski~and~Waldmann~\cite[Lemma~6.5]{KW09}, we have
\begin{equation*}
  \begin{split}
    \mM_\ell \gg \mM_r \ \wedge \ \vv_\ell \gg \vv_r &\implies [\ell](\vx) \gg [r](\vx) \text{ for all } \vx \in \A^d, \\
    \mM_\ell \geq \mM_r \ \wedge \ \vv_\ell \geq \vv_r &\implies [\ell](\vx) \geq [r](\vx) \text{ for all } \vx \in \A^d.
  \end{split}
\end{equation*}
Thus, the search for arctic matrix interpretations can be automated in a manner similar to natural matrix interpretations by encoding the aforementioned constraints as a SAT instance.

\begin{example}
  Let $\rP$ denote the following SRS\footnote{This SRS simulates the Abelian~sandpile~model~\cite{BTW87} over an infinite path. The symbols $\btL$, $\ttX$, $\btR$ represent the vertices on the path. At each instant, the weight of a vertex is identified by the number of $\ttA$s and $\ttB$s to its left until the next occurrence of a vertex symbol.}.
  \begin{equation*}
    \begin{array}{rcl}
      \btL \ttB & \to & \btL \ttA \ttX \\
      \ttA \ttA \ttX & \to & \ttB \ttX \ttA \\
      \ttA \ttB & \to & \ttB \ttA \\
      \ttX \ttB & \to & \ttA \ttX \\
      \ttA \btR & \to & \ttA \ttX \btR
    \end{array}
  \end{equation*}
  As an example of the different capabilities of arctic versus natural matrix interpretations, consider trying to prove the relative termination statement $\SN(\{\btL \ttB \to \btL \ttA \ttX\} \rel \rP)$. To our knowledge, there is no direct proof of this statement via natural matrix interpretations although there is one via the $4$-dimensional arctic matrix interpretations below. In the interpretations, $-$ indicates $-\infty$, and the multiplication operator $\otimes$ as well as the vectors of all $-\infty$s are omitted.
  {\setlength{\jot}{1em}
    \begin{NiceMatrixBlock}[auto-columns-width]
      \begin{gather*}
        [\ttA](\vx) =
        \begin{bNiceMatrix}
          0 & \fademinfty & \fademinfty & \fademinfty \\
          1 & 1 & \fademinfty & 0 \\
          \fademinfty & 0 & \fademinfty & \fademinfty \\
          \fademinfty & \fademinfty & 0 & \fademinfty
        \end{bNiceMatrix}
        \vx
        \qquad
        [\ttB](\vx) =
        \begin{bNiceMatrix}
          0 & \fademinfty & \fademinfty & \fademinfty \\
          1 & 1 & \fademinfty & 0 \\
          1 & 0 & 1 & \fademinfty \\
          \fademinfty & \fademinfty & 0 & \fademinfty
        \end{bNiceMatrix}
        \vx
        \qquad
        [\ttX](\vx) =
        \begin{bNiceMatrix}
          0 & \fademinfty & \fademinfty & \fademinfty \\
          \fademinfty & \fademinfty & 0 & \fademinfty \\
          \fademinfty & \fademinfty & \fademinfty & 0 \\
          \fademinfty & 0 & \fademinfty & \fademinfty
        \end{bNiceMatrix}
        \vx
        \\
        [\btL](\vx) =
        \begin{bNiceMatrix}
          0 & \fademinfty & 0 & \fademinfty \\
          \fademinfty & \fademinfty & \fademinfty & \fademinfty \\
          \fademinfty & \fademinfty & \fademinfty & \fademinfty \\
          \fademinfty & \fademinfty & \fademinfty & \fademinfty
        \end{bNiceMatrix}
        \vx
        \qquad
        [\btR](\vx) =
        \begin{bNiceMatrix}
          0 & \fademinfty & \fademinfty & \fademinfty \\
          \fademinfty & \fademinfty & \fademinfty & \fademinfty \\
          \fademinfty & \fademinfty & \fademinfty & \fademinfty \\
          \fademinfty & \fademinfty & \fademinfty & \fademinfty
        \end{bNiceMatrix}
        \vx
      \end{gather*}
    \end{NiceMatrixBlock}%
  }%
  With the above interpretations, the rules of $\rP$ satisfy the following relations, which, by \cref{thm:monotone-algebra}, proves that $\{\btL \ttB \to \btL \ttA \ttX\}$ is terminating relative to $\rP$.
  \begin{NiceMatrixBlock}[auto-columns-width]
    \begin{longtable*}{IJK}
      {[}\btL \ttB](\vx) =
      \begin{bNiceMatrix}
        1 & 0 & 1 & \fademinfty \\
        \fademinfty & \fademinfty & \fademinfty & \fademinfty \\
        \fademinfty & \fademinfty & \fademinfty & \fademinfty \\
        \fademinfty & \fademinfty & \fademinfty & \fademinfty
      \end{bNiceMatrix}
      \vx
      & \gg &
      \begin{bNiceMatrix}
        0 & \fademinfty & 0 & \fademinfty \\
        \fademinfty & \fademinfty & \fademinfty & \fademinfty \\
        \fademinfty & \fademinfty & \fademinfty & \fademinfty \\
        \fademinfty & \fademinfty & \fademinfty & \fademinfty
      \end{bNiceMatrix}
      \vx
      = {[}\btL \ttA \ttX](\vx)
      \\[3.075em]
      {[}\ttA \ttA \ttX](\vx) =
      \begin{bNiceMatrix}
        0 & \fademinfty & \fademinfty & \fademinfty \\
        2 & 1 & 2 & 0 \\
        1 & 0 & 1 & \fademinfty \\
        \fademinfty & \fademinfty & 0 & \fademinfty
      \end{bNiceMatrix}
      \vx
      & \geq &
      \begin{bNiceMatrix}
        0 & \fademinfty & \fademinfty & \fademinfty \\
        1 & 1 & \fademinfty & 0 \\
        1 & 0 & 1 & \fademinfty \\
        \fademinfty & \fademinfty & 0 & \fademinfty
      \end{bNiceMatrix}
      \vx
      = {[}\ttB \ttX \ttA](\vx)
      \\[3.075em]
      {[}\ttA \ttB](\vx) =
      \begin{bNiceMatrix}
        0 & \fademinfty & \fademinfty & \fademinfty \\
        2 & 2 & 0 & 1 \\
        1 & 1 & \fademinfty & 0 \\
        1 & 0 & 1 & \fademinfty
      \end{bNiceMatrix}
      \vx
      & \geq &
      \begin{bNiceMatrix}
        0 & \fademinfty & \fademinfty & \fademinfty \\
        2 & 2 & 0 & 1 \\
        1 & 1 & \fademinfty & 0 \\
        \fademinfty & 0 & \fademinfty & \fademinfty
      \end{bNiceMatrix}
      \vx
      = {[}\ttB \ttA](\vx)
      \\[3.075em]
      {[}\ttX \ttB](\vx) =
      \begin{bNiceMatrix}
        0 & \fademinfty & \fademinfty & \fademinfty \\
        1 & 0 & 1 & \fademinfty \\
        \fademinfty & \fademinfty & 0 & \fademinfty \\
        1 & 1 & \fademinfty & 0
      \end{bNiceMatrix}
      \vx
      & \geq &
      \begin{bNiceMatrix}
        0 & \fademinfty & \fademinfty & \fademinfty \\
        1 & 0 & 1 & \fademinfty \\
        \fademinfty & \fademinfty & 0 & \fademinfty \\
        \fademinfty & \fademinfty & \fademinfty & 0
      \end{bNiceMatrix}
      \vx
      = {[}\ttA \ttX](\vx)
      \\[3.075em]
      {[}\ttA \btR](\vx) =
      \begin{bNiceMatrix}
        0 & \fademinfty & \fademinfty & \fademinfty \\
        1 & \fademinfty & \fademinfty & \fademinfty \\
        \fademinfty & \fademinfty & \fademinfty & \fademinfty \\
        \fademinfty & \fademinfty & \fademinfty & \fademinfty
      \end{bNiceMatrix}
      \vx
      & \geq &
      \begin{bNiceMatrix}
        0 & \fademinfty & \fademinfty & \fademinfty \\
        1 & \fademinfty & \fademinfty & \fademinfty \\
        \fademinfty & \fademinfty & \fademinfty & \fademinfty \\
        \fademinfty & \fademinfty & \fademinfty & \fademinfty
      \end{bNiceMatrix}
      \vx
      = {[}\ttA \ttX \btR](\vx)
    \end{longtable*}
  \end{NiceMatrixBlock}
\end{example}

\subsection{Generalized Collatz Functions}
\label{sec:collatz-functions}

We consider instances of the following generalization of the Collatz function. Its variants have commonly appeared in the literature~\cite{Lag85,Raw85,Wag85,BM90,Kas92,Koh07,Mic15}.
\begin{definition}[Generalized Collatz function]\label{defn:generalized-collatz}
  Let $X$ be one of $\N$, $\Np$, or $\Z$ and define $X_\bot \coloneqq X \cup \{\bot\}$. A function $f \colon X_\bot \to X_\bot$ is a \emph{generalized Collatz function} if $f(\bot) = \bot$ and there exist an integer $d \geq 2$ and rational numbers $q_0, \dots, q_{d-1}, r_0, \dots, r_{d-1}$ such that for all $0 \leq i \leq d-1$ and all $n \in X$, we have a definition
  \begin{equation*}
    \begin{aligned}
      \text{either of the form} \quad & f(n) = q_i n + r_i && \text{if } n \equiv i \pmod d, \\
      \text{or of the form} \quad & f(n) = \bot && \text{if } n \equiv i \pmod d.
    \end{aligned}
  \end{equation*}
\end{definition}
In the above definition, we allow the representation of a partially defined function by mapping to $\bot$ in the undefined cases. We call a partial $f$ convergent if all $f$-trajectories contain $\bot$.

Note that the Collatz function corresponds to the case $d = 2$, $q_0 = 1/2$, $r_0 = 0$, $q_1 = 3$, $r_1 = 1$. Although the Collatz function is by far the most widely studied case, there are several other concrete examples of generalized Collatz functions the convergence of which is worth studying due to their connections to open problems in number theory and computability theory. We discuss these cases in \cref{sec:more-rewriting}.

\section{Rewriting the Collatz Function}
\label{sec:rewriting-collatz}

We start our discussion with systems that use unary representations and then demonstrate via examples that mixed-base representations can be more suitable for use with automated methods.

\subsection{Rewriting in Unary}
\label{sec:rewriting-unary}

The following system of Zantema~\cite{Zan05} simulates the iterated application of the Collatz function to a number represented in unary, and it terminates upon reaching $1$.
\begin{example}\label{ex:zantema-collatz}
  $\rZ$ denotes the following SRS, consisting of 5 symbols and 7 rules.
  \begin{equation*}
    \begin{array}[t]{rcl}
      \half \un \un & \to & \un \half
    \end{array}
    \qquad
    \begin{array}[t]{rcl}
      \un \un \half \spc & \to & \un \un \shift \spc \\
      \un \shift & \to & \shift \un \\
      \spc \shift & \to & \spc \half
    \end{array}
    \qquad
    \begin{array}[t]{rcl}
      \half \un \spc & \to & \triple \un \un \spc \\
      \un \triple & \to & \triple \un \un \un \\
      \spc \triple & \to & \spc \half
    \end{array}
  \end{equation*}
\end{example}
This system can be seen as encoding the execution of a Turing machine with cells that can be contracted/expanded. The symbols $\un$ and $\spc$ (blank) form the tape alphabet, while the symbols $\half$ (half), $\shift$ (shift), $\triple$ (triple) indicate the head along with the state of the machine. Through the following result, the Collatz conjecture can be reformulated as termination of string rewriting.
\begin{theorem}[{\cite[Theorem~16]{Zan05}}]
  $\rZ$ is terminating if and only if the Collatz conjecture holds.
\end{theorem}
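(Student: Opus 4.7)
The plan is to prove both directions by relating rewrite sequences of $\rZ$ to iterates of the shortcut Collatz function $T \colon \Np \to \Np$ defined by $T(n) = n/2$ for even $n$ and $T(n) = (3n+1)/2$ for odd $n$; it is immediate that Collatz holds iff every $T$-trajectory reaches $1$. The first step is a simulation lemma for canonical strings: interpreting $\spc\half\un^n\spc$ as encoding $n \in \Np$, I would carry out a direct case analysis on the parity of $n$ to establish that for $n \geq 3$,
\[
  \spc\half\un^n\spc \;\rew_\rZ^+\; \spc\half\un^{T(n)}\spc,
\]
and that for $n \in \{1,2\}$ the string reaches the normal form $\spc\un\half\spc$ in a bounded number of steps. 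For even $n \geq 4$, rule (R1) halves the tape, then (R2), (R3), (R4) walk a $\shift$-marker back to the canonical left position. For odd $n \geq 3$, (R1) processes the pairs, then (R5) triples-and-increments at the right boundary, (R6) walks the $\triple$-marker back to the left while tripling, and (R7) restores $\half$. This simulation gives the ``only if'' direction immediately: if $\rZ$ is terminating, every $T$-trajectory of a positive integer is finite, so Collatz holds.

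For the ``if'' direction I would exploit two structural invariants of $\rZ$. First, every rule contains exactly one head symbol (from $\{\half, \shift, \triple\}$) on each side, so the total number of heads is preserved. Second, no rule creates, deletes, or moves a $\spc$, so the positions of $\spc$s form a fixed skeleton. Consequently, any string decomposes into a fixed sequence of segments delimited by $\spc$s, and each segment evolves independently under $\rZ$; so it suffices to bound rewrite sequences within each segment. Segments with no heads are normal forms. For a segment $\spc u \spc$ with a single head, a short case analysis on the head type and its initial placement shows that within $O(|u|)$ steps the rewriting either halts or reaches the canonical form $\spc\half\un^{n'}\spc$ for some $n' \in \N$; the simulation lemma together with the Collatz hypothesis then yields finiteness of any continuation. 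Open segments (those bounded by a $\spc$ on at most one side) admit only finitely many rewrites because the boundary rules (R2), (R4), (R5), (R7) cannot complete on the missing side. Finally, segments containing two or more heads terminate because no rule permutes two heads, so heads cannot cross, and in finitely many steps two adjacent heads mutually block all applicable rules.

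The main obstacle is the ``if'' direction. The canonical simulation is routine bookkeeping; the real work lies in ruling out non-terminating rewrite sequences that start from strings which do not correspond to any canonical Collatz encoding --- strings with several heads per segment, heads placed unusually, or segments with open boundaries. The enabling observation is the invariance of $\spc$ positions and head counts, which decouples global termination into termination within each independent segment. Only the closed single-head case actually invokes the Collatz hypothesis; every other configuration terminates for elementary combinatorial reasons.
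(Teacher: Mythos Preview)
The paper does not give its own proof of this result: it cites Zantema's paper and only remarks that the forward implication is easy (displaying the simulation identities $\spc\half\un^{2n}\spc\to_{\rZ}^*\spc\half\un^{n}\spc$ and $\spc\half\un^{2n+1}\spc\to_{\rZ}^*\spc\half\un^{3n+2}\spc$) while calling the backward implication ``far from obvious.'' Your plan therefore goes well beyond what the paper does here, and it is structurally close to how the paper treats the analogous statement for the mixed-base system $\rT$ (reduce to strings of a canonical shape by a block decomposition, then invoke the simulation).

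Your simulation lemma and the single-head closed-segment analysis are fine, and you correctly isolate that case as the only one needing the Collatz hypothesis. The gap is the multi-head case. The sentence ``no rule permutes two heads, so heads cannot cross, and in finitely many steps two adjacent heads mutually block all applicable rules'' is not a proof of termination. What is actually needed is: (i) in a segment with at least two heads, each boundary rule (the four rules containing $\spc$) fires at most once, since only an outermost head can ever be adjacent to the corresponding $\spc$ and, once converted, it cannot cross the remaining heads to reach the opposite boundary; and then (ii) the residual three rules $\half\un\un\to\un\half$, $\un\shift\to\shift\un$, $\un\triple\to\triple\un\un\un$ terminate on arbitrary strings. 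Point (ii) is not immediate---the last rule increases length, and ``adjacent heads block'' does not by itself bound how long it takes to reach a blocked configuration when, e.g., two $\half$ heads move in the same direction, or a $\shift$ feeds $\un$s rightward into the left side of a $\triple$. A genuine ranking (for instance a geometrically weighted sum of the $\un$-blocks, or a matrix interpretation for this three-rule subsystem) is required, and you have not supplied one. A minor point: your claim that the ``positions of $\spc$s'' are invariant is literally false since several rules change string length; what is true, and what your argument actually uses, is that the number of $\spc$'s and the induced segmentation are preserved.
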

While the forward direction of the above theorem is easy to see (since $\spc \half \un^{2n} \spc \to_{\rZ}^* \spc \half \un^{n} \spc$ for $n > 1$ and $\spc \half \un^{2n+1} \spc \to_{\rZ}^* \spc \half \un^{3n+2} \spc$ for $n \geq 0$), the backward direction is far from obvious because not every string corresponds to a valid configuration of the underlying machine.

As another example, consider the system below, which follows a theme similar to \cref{ex:zantema-collatz}.
\begin{example}\label{ex:waldmann-unary}
  $\rW$ denotes the following SRS, consisting of 4 symbols and 4 rules.
  \begin{equation*}
    \begin{array}{rcl}
      \half \un \un & \to & \un \half \\
      \un \half \spc & \to & \un \triple \spc \\
      \un \triple & \to & \triple \un \un \un \\
      \spc \triple & \to & \spc \half
    \end{array}
  \end{equation*}
\end{example}
Johannes Waldmann shared the above system (originally due to Zantema\footnote{\url{https://www.lri.fr/~marche/tpdb/tpdb-2.0/SRS/Zantema/z079.srs}}) with us, mentioning that its termination has yet to be proved via automated methods. Nevertheless, there is a simple reason for its termination: It simulates the iterated application of a partial generalized Collatz function $W \colon \Np_\bot \to \Np_\bot$ defined as follows, which is easily seen to be convergent.
\begin{equation*}\label{eq:waldmann-fn}
  W(n) =
  \begin{cases}
    3n/2 & \text{if } n \equiv 0 \pmod 2 \\
    \bot & \text{if } n \equiv 1 \pmod 2
  \end{cases}
\end{equation*}

\subsubsection{Nonexistence of Proofs via Natural Matrix Interpretations}
\label{sec:nonexistence-proofs-natural-matrix}
Natural matrix interpretations cannot be used to directly remove any of the rules from the above kind of unary rewriting systems that simulate certain maps, in particular the Collatz function.
\begin{definition}[$\N$-rational sequence]
  A sequence $x_1, x_2, \dotsc \in \N$ is called \emph{$\N$-rational} if there exists a matrix $\mM \in \N^{d\times d}$ and vectors $\vv,\vw \in \N^d$ of some dimension $d$ such that, for all $n$, we have $x_n = \vv^{\tr} \mM^n \vw$.
\end{definition}

The following is a consequence of Berstel's theorem~\cite[Theorem~8.1.1]{RB10}. (See also the remark by Soittola~\cite[page~318]{Soi76}.)

\begin{theorem}\label{thm:berstel}
  Let $x_1, x_2, \dots$ be an $\N$-rational sequence. Then there exist $m, p \in \N$ such that, for all $j \in \{0,1,\dots,p-1\}$, each of the subsequences $\{x_{m+kp+j}\}_{k=0}^\infty$ has its $n$th element equal to $(1 + o(1))P(n)\alpha^n$ for some nonzero polynomial $P$ and a constant $\alpha > 0$. In particular, each of these subsequences is monotonically nondecreasing after some finite point.
\end{theorem}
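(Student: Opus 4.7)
The plan is to read off the asymptotics of $x_n = \vv^{\tr} \mM^n \vw$ by combining the Jordan normal form of the nonnegative integer matrix $\mM$ with Perron--Frobenius theory, invoking Berstel's theorem for the delicate step of ensuring that the leading polynomial is nonzero on each subsequence.

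First I would apply Perron--Frobenius: the spectral radius $\rho$ of $\mM$ is itself an eigenvalue, and every eigenvalue $\lambda$ of modulus $\rho$ has the form $\lambda = \rho\zeta$ for some root of unity $\zeta$. Let $p$ be the least common multiple of the orders of these roots of unity, and set $\mN = \mM^p$. Since $\mN$ commutes with $\mM^{m+j}$, I can rewrite $x_{m+kp+j} = \vv^{\tr} \mN^k (\mM^{m+j} \vw)$, so the subsequence $\{x_{m+kp+j}\}_{k=0}^{\infty}$ is itself $\N$-rational with representing matrix $\mN$. Crucially, $\mN$ has a single dominant eigenvalue $\alpha = \rho^p$: all peripheral eigenvalues of $\mM$ collapse to $\alpha$, and the Jordan block structure at $\alpha$ is inherited from the peripheral Jordan blocks of $\mM$. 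Standard Jordan-form asymptotics then yield $x_{m+kp+j} = (1+o(1))P_j(k)\alpha^k$ for a polynomial $P_j$ whose degree is determined by the largest peripheral Jordan block and whose coefficients depend on $j$ through the phases $\zeta^j$.

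The main obstacle is ruling out the possibility that $P_j$ vanishes for some residue class $j$, which would happen if the contribution of the dominant eigenspace to $\vv^{\tr} \mN^k (\mM^{m+j}\vw)$ cancels. This is precisely where $\N$-rationality is essential, since for $\Z$-rational sequences such cancellations can genuinely occur. I would invoke Berstel's theorem, whose Soittola-style conclusion for $\N$-rational series is that after a (possibly finer) merge decomposition into arithmetic progressions, each component either has a strictly positive leading contribution from its dominant Perron eigenvalue or is eventually zero. After refining $m$ and $p$ to align with this decomposition, $P_j$ can be taken nonzero on every nontrivial component, and any eventually-zero component trivially satisfies the monotonicity claim. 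Finally, eventual monotonicity follows directly from the form $(1+o(1))P(n)\alpha^n$: since $P$ is a nonzero real polynomial and $\alpha > 0$, the leading term is eventually monotone, so either the subsequence tends to $+\infty$ and is eventually nondecreasing, or it tends to $0$, in which case $x_n \in \N$ forces it to equal $0$ from some point on.
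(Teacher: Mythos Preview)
The paper does not actually prove this statement: it is quoted as a known consequence of Berstel's theorem, with a citation to \cite[Theorem~8.1.1]{RB10} and a remark of Soittola, and no argument is given. So there is no proof in the paper to compare your sketch against; your proposal and the paper agree in that both ultimately defer to Berstel's theorem.

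That said, the Perron--Frobenius scaffolding you build around the citation is somewhat misleading and does less work than it appears to. You fix $\alpha = \rho(\mM)^p$ from the spectral radius of the representing matrix, but the asymptotics of $\vv^{\tr}\mM^n\vw$ are governed by the largest-modulus eigenvalue that actually contributes to this particular bilinear form, and that can be strictly smaller than $\rho(\mM)$: take $\mM = \mathrm{diag}(2,3)$ and $\vv = \vw = \ve_1$, so that $x_n = 2^n$ while $\rho(\mM) = 3$. In such cases your $P_j$ is identically zero, and no refinement of $m$ and $p$ alone will make it nonzero for that fixed $\alpha$; one genuinely needs a different $\alpha$ for that component. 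Worse, once the $\rho(\mM)$-eigenspace drops out, the next eigenvalue in line need not be real positive or even real-positive times a root of unity, so the Jordan-form asymptotics do not by themselves yield the stated form. What Berstel's theorem provides is precisely that, after a suitable merge decomposition, each component has its \emph{own} dominating positive real root together with a nonvanishing leading coefficient. So the substantive content of your argument really is ``invoke Berstel,'' which is exactly what the paper does, and the Perron--Frobenius layer adds intuition but not logical force.
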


\begin{corollary}\label{cor:N-rational-collatz}
  There exists no $\N$-rational sequence $x_1, x_2, \dots$ satisfying $x_{8n+1} > x_{9n+2}$ for all $n$.
\end{corollary}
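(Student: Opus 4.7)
The plan is to argue by contradiction. Suppose an $\N$-rational sequence $(x_n)$ satisfies $x_{8n+1} > x_{9n+2}$ for all $n$, and apply \cref{thm:berstel} to obtain $m, p \in \N$ and, for each residue $j \in \{0, \dotsc, p-1\}$, a nonzero polynomial $P_j$ (with positive leading coefficient, since $x_i \in \N$) and a constant $\alpha_j > 0$ with $x_{m+kp+j} = (1+o(1)) P_j(k) \alpha_j^k$. As a preliminary, I would enlarge $p$ to a multiple coprime to $72 = 8 \cdot 9$. This is legitimate because passing from period $p$ to $kp$ refines the asymptotics, with each new $\alpha$ equal to $\alpha_r^k$ for some old $\alpha_r$, preserving whether it is $<, =,$ or $>\,1$; moreover, with $\gcd(p,72)=1$, multiplication by $8$ and by $9$ become bijections on $\Z / p \Z$.

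For each residue $n_0 \in \{0, \dotsc, p-1\}$, as $n \to \infty$ through $n \equiv n_0 \pmod p$, the indices $8n+1$ and $9n+2$ lie in fixed residues $j_1(n_0) \equiv 8n_0+1$ and $j_2(n_0) \equiv 9n_0+2 \pmod p$, with positions in their subsequences growing as $8n/p$ and $9n/p$. Thus $x_{8n+1} \sim P_{j_1}(8n/p)\alpha_{j_1}^{8n/p}$ and $x_{9n+2} \sim P_{j_2}(9n/p)\alpha_{j_2}^{9n/p}$, and the hypothesis $x_{8n+1} > x_{9n+2}$ becomes a family of asymptotic constraints. I would split on $\alpha^\ast \coloneqq \max_j \alpha_j$ into three cases.

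The easy cases are $\alpha^\ast > 1$ and $\alpha^\ast < 1$. In the former, choose $n_0$ with $j_2(n_0)$ attaining $\alpha^\ast$ (possible since $9$ is invertible mod $p$); then $x_{9n+2}$ grows like $(\alpha^\ast)^{9n/p}$ while $x_{8n+1}$ is at most $(\alpha^\ast)^{8n/p}$ times a polynomial, and $(\alpha^\ast)^9 > (\alpha^\ast)^8$ forces $x_{8n+1}/x_{9n+2} \to 0$, contradicting the hypothesis. In the latter, every subsequence tends to $0$ and hence is eventually $0$ (being $\N$-valued), so $x_n = 0$ for all large $n$, again contradicting $x_{8n+1} > x_{9n+2}$.

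The main obstacle is the borderline case $\alpha^\ast = 1$, where growth is at most polynomial and the $9/8$-gap argument is not decisive on its own. Here I would set $D \coloneqq \max\{\deg P_j : \alpha_j = 1\}$ and $S \coloneqq \{j : \alpha_j = 1,\ \deg P_j = D\}$, and denote the leading coefficient of $P_j$ by $c_j > 0$ for $j \in S$. The affine map $\phi(j) \coloneqq (8 \cdot 9^{-1}(j-2) + 1) \bmod p$ is a bijection that sends $j_2(n_0) \mapsto j_1(n_0)$ for the $n_0$ with $j_2(n_0) = j$. For $j \in S$ one has $x_{9n+2} \sim c_j (9n/p)^D$, and the hypothesis forces $\phi(j) \in S$ (otherwise $x_{8n+1}$ is of strictly smaller polynomial order) together with $c_{\phi(j)} \cdot 8^D \geq c_j \cdot 9^D$. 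Summing $\log c_{\phi(j)} - \log c_j \geq D \log(9/8)$ over $j \in S$ and using that $\phi|_S$ is a permutation collapses the left-hand side to $0$, yielding $|S|\, D\, \log(9/8) \leq 0$ and hence $D = 0$. At this point the sequence is eventually equal to $c_{n \bmod p}$ (setting $c_j = 0$ for $j \notin S$), and the strict inequality $c_{j_1(n_0)} > c_{j_2(n_0)}$ summed over all $n_0$ — with $j_1, j_2$ being bijections on $\Z / p \Z$ — yields $\sum_j c_j > \sum_j c_j$, the desired contradiction.
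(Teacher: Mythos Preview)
Your preliminary step—enlarging $p$ to a multiple coprime to $72$—does not work in general: if $\gcd(p,72)>1$, then every multiple $kp$ inherits that common factor, so no multiple of $p$ is coprime to $72$. Berstel's theorem as stated gives no control over the prime factors of $p$, and one can build $\N$-rational sequences (e.g., those satisfying $x_{n+3}=8x_n$) whose Berstel period is forced to be a multiple of $3$. Without coprimality, $9^{-1}$ need not exist modulo $p$, the maps $n_0\mapsto 8n_0+1$ and $n_0\mapsto 9n_0+2$ need not be bijections on $\Z/p\Z$, and your permutation-and-sum arguments (both the choice of $n_0$ hitting $\alpha^*$ in the $\alpha^*>1$ case and the telescoping of $\sum_{j\in S}(\log c_{\phi(j)}-\log c_j)$ in the $\alpha^*=1$ case) break down.

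The paper's proof sidesteps all of this with one observation: take $n$ of the form $qp-1$. Then $(9n+2)-(8n+1)=n+1=qp$, so $8n+1$ and $9n+2$ lie in the \emph{same} residue class modulo $p$ and hence in the same Berstel subsequence. Since that subsequence is eventually nondecreasing and $8n+1<9n+2$, for large $q$ one gets $x_{8n+1}\le x_{9n+2}$, contradicting the hypothesis. No asymptotic case split, no coprimality, and only the ``eventually nondecreasing'' clause of \cref{thm:berstel} is used. In your own framework this amounts to focusing on the single residue $n_0=p-1$, for which $j_1(n_0)=j_2(n_0)$ automatically; the elaborate machinery over all residues is unnecessary.
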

\begin{proof}[Proof\/\protect\footnotemark]
  \footnotetext{We thank Gjergji Zaimi for this argument (see \url{https://mathoverflow.net/a/270372}).}
  Let $p$ be as in \cref{thm:berstel}, and consider $n$ of the form $qp - 1$ for some large enough $q$. Then $(9n+2) - 8n+1 = n+1 = qp$, so $x_{8n+1}$ and $x_{9n+2}$ belong to the same subsequence. This means that for all sufficiently large $n$ of this form, we must have $x_{8n+1} \leq x_{9n+2}$.
\end{proof}

To make use of the above facts in ruling out the existence of natural matrix interpretations of the form described in \cref{sec:natural-interpretations}, we also need the following result proving that we can work with interpretations that are purely linear (instead of affine) by moving up to a higher dimension.
\begin{theorem}[{\cite[Theorem~6]{EWZ08}}]\label{thm:affine-linear}
  Let $R$ be an SRS over the alphabet $\Sigma$, and let $[\sigma](\vx) = \mM_\sigma \vx + \vv_\sigma$ denote a $d$-dimensional affine interpretation for each $\sigma \in \Sigma$. Define a $(d+1)$-dimensional linear interpretation for each $\sigma \in \Sigma$ as
  \begin{equation*}
    [\sigma]^+(\vx) \coloneqq \mD_\sigma \vx =
    \begin{bmatrix}
      \mM_\sigma & \vv_\sigma \\
      \vzero^{\tr} & 1
    \end{bmatrix}
    \vx,
  \end{equation*}
  where $\vzero$ denotes the $d$-dimensional zero vector. For each rule $\ell \to r \in R$, let $\mM_\ell$, $\mM_r$, $\vv_\ell$, $\vv_r$, $\mD_\ell$, $\mD_r$ denote the matrices and the vectors occurring in the corresponding interpretations (as obtained by composing the interpretations of the symbols). If we have $\mM_\ell \gtrsim \mM_r$ and $\vv_\ell \gtrsim \vv_r$, then $\mD_\ell \gtrsim \mD_r$. Additionally, if $\vv_\ell > \vv_r$, then $(D_\ell)_{1,d+1} > (D_r)_{1,d+1}$.
\end{theorem}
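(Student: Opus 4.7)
The plan is to observe that the map $(\mM, \vv) \mapsto \bigl[\begin{smallmatrix} \mM & \vv \\ \vzero^{\tr} & 1 \end{smallmatrix}\bigr]$ is a monoid homomorphism from the affine group acting on $\N^d$ (with composition) into $\N^{(d+1)\times(d+1)}$ (with matrix multiplication). Once this is in hand, the theorem reduces to two elementwise comparisons of block matrices.

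First I would prove by induction on the length of a string $s \in \Sigma^*$ that
\begin{equation*}
  \mD_s = \begin{bmatrix} \mM_s & \vv_s \\ \vzero^{\tr} & 1 \end{bmatrix},
\end{equation*}
where $[s](\vx) = \mM_s \vx + \vv_s$ is the composed affine interpretation. The base case (a single symbol, or the empty string with identity interpretation) is immediate from the definition. For the inductive step with $s = at$, block multiplication gives
\begin{equation*}
  \mD_a \mD_t = \begin{bmatrix} \mM_a & \vv_a \\ \vzero^{\tr} & 1 \end{bmatrix} \begin{bmatrix} \mM_t & \vv_t \\ \vzero^{\tr} & 1 \end{bmatrix} = \begin{bmatrix} \mM_a \mM_t & \mM_a \vv_t + \vv_a \\ \vzero^{\tr} & 1 \end{bmatrix},
\end{equation*}
and this matches $[a] \circ [t]$ since $\mM_a (\mM_t \vx + \vv_t) + \vv_a = (\mM_a \mM_t)\vx + (\mM_a \vv_t + \vv_a)$. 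Hence $\mM_{at} = \mM_a \mM_t$ and $\vv_{at} = \mM_a \vv_t + \vv_a$, which is exactly the upper-left block and upper-right column of $\mD_a \mD_t$.

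With this structural identity, both conclusions are routine. The matrix $\mD_\ell$ differs from $\mD_r$ only in the top-left $d\times d$ block (which are $\mM_\ell$ and $\mM_r$) and in the last column's first $d$ entries (which are $\vv_\ell$ and $\vv_r$); the bottom row is $(\vzero^{\tr}, 1)$ in both. Hence $\mM_\ell \gtrsim \mM_r$ together with $\vv_\ell \gtrsim \vv_r$ yields $\mD_\ell \gtrsim \mD_r$ entrywise. For the strict part, by construction $(D_s)_{1,d+1} = (v_s)_1$ for any string $s$, and the order $>$ on $\N^d$ from \cref{sec:natural-interpretations} requires strict inequality precisely in the first coordinate, so $\vv_\ell > \vv_r$ implies $(D_\ell)_{1,d+1} = (v_\ell)_1 > (v_r)_1 = (D_r)_{1,d+1}$.

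There is no real obstacle here: the entire content is the homomorphism observation, and the only thing to be careful about is the convention for composition order (whether strings are read left-to-right or right-to-left), which affects whether $[st] = [s] \circ [t]$ or $[t] \circ [s]$ and therefore the order of the matrix product $\mD_s \mD_t$ versus $\mD_t \mD_s$. This has to be matched consistently with the convention used in \cref{sec:interpretation-method}, but either way the block-triangular structure is preserved under multiplication, so the conclusion is unaffected.
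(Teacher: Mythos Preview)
The paper does not provide a proof of this theorem; it is stated with a citation to \cite[Theorem~6]{EWZ08} and used as a black box. Your argument is correct and is essentially the standard one: the embedding of affine maps into $(d+1)\times(d+1)$ matrices via the block form $\bigl[\begin{smallmatrix}\mM & \vv\\ \vzero^{\tr} & 1\end{smallmatrix}\bigr]$ is a monoid homomorphism, so $\mD_s$ inherits the block structure with $\mM_s$ and $\vv_s$ in place, after which both conclusions are immediate entrywise comparisons.
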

In informal terms, the above theorem states that if we have a collection of affine interpretations to remove a rule from a system (due to the strict decrease across the first elements of the corresponding two vectors), then there is also a collection of purely linear interpretations such that the top-right elements decrease strictly across the two matrices corresponding to this rule. With this fact at hand, we can prove the following.
\begin{theorem}\label{thm:Z-no-direct-natural-matrix}
  Let $\Sigma = \{\un, \spc, \half, \shift, \triple\}$. There exists no collection $[\cdot]_\Sigma$ of natural matrix interpretations of any dimension $d$ satisfying the requirements for an extended monotone $\Sigma$-algebra such that
  \begin{itemize}
  \item for at least one rule $\ell \to r \in \rZ$ we have $[\ell](\vx) > [r](\vx)$ for all $\vx \in \N^d$, and
  \item for the remaining $\ell' \to r' \in \rZ$ we have $[\ell'](\vx) \gtrsim [r'](\vx)$ for all $\vx \in \N^d$.
  \end{itemize}
\end{theorem}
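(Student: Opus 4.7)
The plan is a proof by contradiction. Suppose a collection of natural matrix interpretations of some dimension $d$ satisfies the stated conditions, and fix a rule $\ell^* \to r^* \in \rZ$ whose interpretation strictly decreases. I will extract an $\N$-rational sequence $\{y_n\}$ satisfying $y_{8n+1} > y_{9n+2}$ for every $n \geq 1$, which contradicts \cref{cor:N-rational-collatz}.

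First, I would apply \cref{thm:affine-linear} to linearize the interpretations, producing matrices $\mD_\sigma \in \N^{(d+1)\times(d+1)}$ such that $\mD_\ell \gtrsim \mD_r$ for every rule and $(\mD_{\ell^*})_{1,d+1} > (\mD_{r^*})_{1,d+1}$ for the strict one. Let $s_n \coloneqq \spc \half \un^n \spc$ and let $\mathbf{1}_k$ denote the all-ones vector in $\N^k$. Define
\begin{equation*}
  y_n \;\coloneqq\; e_1^{\tr}\, \mD_\spc \mD_\half \mD_\un^n \mD_\spc \mathbf{1}_{d+1} \;=\; \vu^{\tr} \mD_\un^n \vw,
\end{equation*}
where $\vu^{\tr} = e_1^{\tr} \mD_\spc \mD_\half$ and $\vw = \mD_\spc \mathbf{1}_{d+1}$ are nonnegative integer vectors; thus $\{y_n\}$ is an $\N$-rational sequence. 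By the block form of each $\mD_\sigma$, one also verifies $y_n = ([s_n](\mathbf{1}_d))_1$, the first coordinate of the original affine interpretation of $s_n$ at $\mathbf{1}_d$.

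Second, I would show $y_{8n+1} > y_{9n+2}$ for every $n \geq 1$. Three shortcut-Collatz iterations send $8n+1 \mapsto 12n+2 \mapsto 6n+1 \mapsto 9n+2$, and for $n \geq 1$ every intermediate iterate is large enough that $\rZ$ admits a simulating rewrite, producing $s_{8n+1} \to_\rZ^* s_{9n+2}$ by concatenating the three subsequences supplied immediately after \cref{ex:zantema-collatz}. A direct case analysis of these subsequences shows that every one of the seven rules of $\rZ$ fires at least once along the combined rewrite: the even step $12n+2 \to 6n+1$ contributes all of R1--R4, while each odd step contributes R1 together with R5--R7. In particular the strict rule $\ell^* \to r^*$ fires, so the extended monotone $\Sigma$-algebra property lifts rule-level strict decrease to $[s_{8n+1}](\vx) > [s_{9n+2}](\vx)$ for all $\vx \in \N^d$; specializing to $\vx = \mathbf{1}_d$ gives $y_{8n+1} > y_{9n+2}$.

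These two facts contradict \cref{cor:N-rational-collatz}, whose proof uses the strict inequality only for $n$ in the arithmetic progression $\{qp-1\}$ with $q$ sufficiently large, all of which lie in $\{n \geq 1\}$. I expect the main obstacle to lie in the combinatorics of step two: one must verify that every rule of $\rZ$, including the bookkeeping rules R2--R4 and R5--R7, actually fires in the three-step simulation for every $n \geq 1$, and that the restriction $n \geq 1$ is enough to keep the middle (even) simulation from getting stuck (since the even rewrite $\spc \half \un^{2m} \spc \to_\rZ^* \spc \half \un^m \spc$ requires $m > 1$, i.e., $6n+1 > 1$). Once the rule-firing accounting is pinned down, the linearization from \cref{thm:affine-linear} and the lifting of rule-level strict decrease to string-level strict decrease are routine.
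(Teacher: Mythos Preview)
Your proposal is correct and follows essentially the same route as the paper: linearize via \cref{thm:affine-linear}, extract an $\N$-rational sequence from the interpretation of $\spc\half\un^k\spc$, and use the three-step rewrite $8n+1 \to_\rZ^* 9n+2$ (which exercises every rule of $\rZ$) to force a strict inequality contradicting \cref{cor:N-rational-collatz}. The only cosmetic difference is that the paper evaluates the linearized interpretation at the last standard basis vector (equivalently, reads off the top-right matrix entry $(\mD_{s_k})_{1,d+1}$) rather than at $\mathbf{1}_{d+1}$ as you do; your added care about restricting to $n \geq 1$ is harmless since the argument in \cref{cor:N-rational-collatz} only uses large $n$.
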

\begin{proof}
  Assume for a contradiction that such a collection of natural matrix interpretations exists for some dimension $d-1$, with each symbol $\sigma \in \Sigma$ interpreted as $[\sigma](\vx) = \mM_\sigma \vx + \vv_\sigma$ and each matrix $\mD_\sigma$ defined as in \cref{thm:affine-linear}. Let $\ell \to r \in \rZ$ be a rule for which $[\ell](\vx) > [r](\vx)$ for all $\vx \in \N^{d-1}$, and denote $\rho = \{\ell \to r\}$. As the interpretations satisfy the requirements for extended monotonicity, for all $s,t \in \Sigma^*$, if $s \to_\rho t$, then $[s](\vx) > [t](\vx) \text{ for all } \vx \in \N^{d-1}$. This conclusion implies by~\cref{eq:matrix-check} that $\mM_s \gtrsim \mM_t$ and $\vv_s > \vv_t$, which, by \cref{thm:affine-linear}, further imply $\mD_s \gtrsim \mD_t$ and $(D_s)_{1,d} > (D_t)_{1,d}$. Thus, for all $s,t \in \Sigma^*$,
  \begin{equation}\label{eq:decrease-rho}
    s \to_\rho t \quad \implies \quad \mD_s \gtrsim \mD_t \text{ and } (D_s)_{1,d} > (D_t)_{1,d}.
  \end{equation}

  In Zantema's system $\rZ$, we can represent an arbitrary integer $k$ by the string $\spc \half \un^k \spc$, interpreted as
  \begin{equation*}
    \big[\spc \half \un^k \spc\big]^+(\vx) = \mD_{\spc\! \half} \mD_{\un}^k \mD_{\spc} \vx.
  \end{equation*}
  Let $\ve_i \in \N^d$ denote the $i$th standard basis vector, i.e., with $1$ at the $i$th position and $0$s elsewhere. Define
  \begin{align}\label{eq:N-rational-interpretation}
    \begin{split}
      f(k) &\coloneqq \ve_1^{\tr} \big[\spc \half \un^k \spc\big]^+(\ve_d) \\
           &= \ve_1^{\tr} (\mD_{\spc\! \half} \mD_{\un}^k \mD_{\spc}) \ve_d \\
           &= (\ve_1^{\tr} \mD_{\spc\! \half}) \mD_{\un}^k (\mD_{\spc} \ve_d) \\
           &= \vv^\tr \mD_{\un}^k \vw,
    \end{split}
  \end{align}
  where $\vv = \mD_{\spc\! \half}^\tr \ve_1$ and $\vw = \mD_{\spc} \ve_d$ are fixed, so $f(1), f(2), \dots$ is an $\N$-rational sequence.

  Now, since the system $\rZ$ simulates the Collatz function (with the odd case incorporating an additional division by $2$), for all numbers of the form $8n+1$ it can simulate the sequence of mappings $8n+1 \mapsto 12n+2 \mapsto 6n+1 \mapsto 9n+2$ that results from applying the Collatz map three times. Thus, for all $n$, we have
  \begin{equation*}
    \spc \half \un^{8n+1} \spc \; \to_{\rZ}^* \; \spc \half \un^{9n+2} \spc,
  \end{equation*}
  which crucially requires the use of every single rule in $\rZ$ (as it involves applying both $m \mapsto m/2$ and $m \mapsto (3m+1)/2$). This implies in particular that the rule $\ell \to r$ is used in the above derivation. Recall also that the interpretations of all rules in $\rZ$ are at least nonstrictly decreasing by assumption. This means along with~\cref{eq:decrease-rho} that, letting $s = \spc \half \un^{8n+1} \spc$ and $t = \spc \half \un^{9n+2} \spc$, we have $(D_s)_{1,d} > (D_t)_{1,d}$. From the definition of $f$, for all $n$,
  \begin{equation*}
    f(8n+1) = \ve_1^{\tr} \mD_s \ve_d = (D_s)_{1,d} > (D_t)_{1,d} = \ve_1^{\tr} \mD_t \ve_d = f(9n+2),
  \end{equation*}
  but we already argued in \cref{cor:N-rational-collatz} that no $\N$-rational sequence can satisfy this inequality for all $n$, contradiction.
\end{proof}

Natural matrix interpretations fail to be useful for the unary rewriting systems even when used in conjunction with dependency pairs. For the reader familiar with dependency pairs, we show how to adapt the above proof to go through in the setting where the dependency pair transformation is applied before searching for natural matrix interpretations. Recall that, for an SRS $R$ over $\Sigma$, letting $\DP(R)$ denote the SRS consisting of all dependency pairs of $R$ as defined by Arts and Giesl~\cite[Definition~3]{AG00}, we have $\SN(R)$ if and only if $\SN(\topt{\DP(R)} \rel R)$~\cite[Theorem~6]{AG00}. For a defined symbol $s \in \Sigma$, we write $\tup{s}$ to denote the corresponding tuple symbol.

Let $\rI = \{\tup{\spc} \shift \to \tup{\spc} \half,\ \tup{\spc} \triple \to \tup{\spc} \half \}$, and note that $\rI \subseteq \DP(\rZ)$. With an automated termination prover, the following is straightforward to establish for Zantema's system $\rZ$ after applying the dependency pair transformation and performing several steps of rule removal (e.g., using natural matrix interpretations).

\begin{lemma}\label{lem:Z-dependency-pairs}
  $\SN(\topt{\DP(\rZ)} \rel \rZ)$ if and only if $\SN(\topt{\rI} \rel \rZ)$.
\end{lemma}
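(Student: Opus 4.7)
The forward direction $\SN(\topt{\DP(\rZ)} \rel \rZ) \Rightarrow \SN(\topt{\rI} \rel \rZ)$ is immediate from $\rI \subseteq \DP(\rZ)$: any infinite $(\rZ \cup \rI)$-derivation with infinitely many top-$\rI$ steps is a fortiori an infinite $(\rZ \cup \DP(\rZ))$-derivation with infinitely many top-$\DP(\rZ)$ steps, so the assumed termination of the latter transfers to the former.

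For the reverse direction, the plan is to apply the rule-removal principle for relative top termination iteratively. This principle---a direct consequence of \cref{thm:monotone-algebra}---states that whenever a weakly monotone $\Sigma$-algebra strictly orients a subset $T \subseteq R$ while weakly orienting every remaining rule of $R \cup S$, the strict decrease forces $T$-steps to occur only finitely often in any infinite derivation, yielding $\SN(\topt{R} \rel S) \iff \SN(\topt{(R \setminus T)} \rel S)$. Setting $R = \DP(\rZ)$ and $S = \rZ$, and noting that the defined symbols of $\rZ$ are $\half$, $\un$, and $\spc$ (so $\DP(\rZ)$ consists of the roughly fifteen pairs $\tup{\ell} \to \tup{r'}$ for each rule $\ell \to r \in \rZ$ and each suffix $r'$ of $r$ starting with a defined symbol, two of which are the members of $\rI$), the aim is to produce a short sequence of weakly monotone algebras that successively remove the thirteen dependency pairs in $\DP(\rZ) \setminus \rI$, until only $\rI$ remains in the strict set.

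To construct each algebra I would invoke the SAT-based natural matrix interpretation prover of \cref{sec:natural-interpretations}, searching over small dimensions and bounded coefficients. The main---really, the only---obstacle is the combinatorial one of choosing the right succession of interpretations: it is unlikely that a single small interpretation strictly orients all thirteen unwanted pairs at once while remaining compatible with the rules of $\rZ$, so the removal has to be spread across several SAT searches. Most of the unwanted pairs have a right-hand side that is a proper suffix of their left-hand side or that shrinks some natural weight on the string (for instance $\tup{\un}\shift \to \tup{\un}$, $\tup{\half}\un\un \to \tup{\half}$, or the various $\tup{\cdot} \to \tup{\half}$ pairs from rules 4 and 7), so low-dimensional interpretations should handle them in a few rounds; once each candidate is found, verifying the orientations is the routine matrix check of \cref{eq:matrix-check}. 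As the author remarks, the whole pipeline is well within reach of a modern automated termination tool, and delivering the certificate of removal completes the proof.
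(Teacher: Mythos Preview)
Your proposal is correct and matches the paper's own treatment: the paper does not give an explicit proof of this lemma but simply remarks that it is ``straightforward to establish \dots\ after applying the dependency pair transformation and performing several steps of rule removal (e.g., using natural matrix interpretations)'' with an automated termination prover. Your write-up spells out exactly this strategy---the trivial forward inclusion plus iterated rule removal via weakly monotone (matrix) interpretations to eliminate the pairs in $\DP(\rZ)\setminus\rI$---so there is nothing to add beyond actually running the tool and recording the resulting interpretations as a certificate.
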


We prove below that neither rule in $\rI$ can be removed using natural matrix interpretations. The proof is similar to that of \cref{thm:Z-no-direct-natural-matrix}, so we keep it relatively brief.

\begin{theorem}\label{thm:Z-no-dp-natural-matrix}
  Let $\Sigma = \{\un, \spc, \tup{\spc}, \half, \shift, \triple\}$. There exists no collection $[\cdot]_\Sigma$ of natural matrix interpretations of any dimension $d$ satisfying the requirements for a weakly monotone $\Sigma$-algebra such that
  \begin{itemize}
  \item for at least one rule $\ell \to r \in \rI$ we have $[\ell](\vx) > [r](\vx)$ for all $\vx \in \N^d$, and
  \item for every rule $\ell' \to r' \in \rZ$ we have $[\ell'](\vx) \gtrsim [r'](\vx)$ for all $\vx \in \N^d$.
  \end{itemize}
\end{theorem}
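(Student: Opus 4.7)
The plan is to adapt the proof of \cref{thm:Z-no-direct-natural-matrix} to the top-termination setting. Assume for contradiction that such a collection exists in dimension $d-1$, and lift via \cref{thm:affine-linear} to $d$-dimensional purely linear interpretations $[\sigma]^+(\vx) = \mD_\sigma \vx$. Let $\rho\colon \ell \to r \in \rI$ denote a rule that is strictly decreasing, so that $\mD_\ell \gtrsim \mD_r$ and $(D_\ell)_{1,d} > (D_r)_{1,d}$ via \cref{thm:affine-linear}, while $\mD_{\ell'} \gtrsim \mD_{r'}$ holds for every remaining rule in $\rZ \cup \rI$.

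Next I would propagate these inequalities through derivations mixing top applications of $\rI$ with interior applications of $\rZ$. For a top rewrite $\ell q \to_{\topt{\rho}} r q$, we have $[\ell]([q](\vx)) > [r]([q](\vx))$ for all $\vx$, giving $\mD_{\ell q} \gtrsim \mD_{r q}$ and $(D_{\ell q})_{1,d} > (D_{r q})_{1,d}$. For an $\rZ$ rewrite inside a string $\tup{\spc} s$, the symbol $\tup{\spc}$ lies outside the alphabet of $\rZ$ and is never displaced, and the weak $\gtrsim$-monotonicity of every $[\sigma]$ (i.e.\ entrywise nonnegativity of the $\mD_\sigma$) yields $\mD_{\tup{\spc} s} \gtrsim \mD_{\tup{\spc} s'}$. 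Chaining along a derivation that uses $\rho$ at the top at least once and other rules arbitrarily thus produces an overall $(1,d)$-entry strict decrease.

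Third, I would reuse Zantema's simulation on strings whose leftmost $\spc$ has been relabeled as $\tup{\spc}$: the three-half-step Collatz derivation $\tup{\spc} \half \un^{8n+1} \spc \to^* \tup{\spc} \half \un^{9n+2} \spc$ is realizable for all sufficiently large $n$, and it invokes $\tup{\spc} \triple \to \tup{\spc} \half$ twice (for the odd half-steps $8n+1 \mapsto 12n+2$ and $6n+1 \mapsto 9n+2$) and $\tup{\spc} \shift \to \tup{\spc} \half$ once (for the even half-step $12n+2 \mapsto 6n+1$). Whichever of these two rules plays the role of $\rho$, it is applied at the top at least once. Defining $f(k) \coloneqq \ve_1^{\tr} \mD_{\tup{\spc} \half} \mD_{\un}^k \mD_{\spc} \ve_d$, which is an $\N$-rational sequence in $k$, we obtain $f(8n+1) > f(9n+2)$ for all sufficiently large $n$, contradicting \cref{cor:N-rational-collatz}.

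The main obstacle is confirming that the strict decrease contributed by a top application of $\rho$ survives the intervening interior $\rZ$ rewrites, which under weak (rather than extended) monotonicity no longer propagate strictness through context. This works because the lift via \cref{thm:affine-linear} concentrates the strict component exclusively at position $(1,d)$, and each interior rewrite only contributes nonstrict entrywise decreases, so the $(1,d)$-gap established by the top application cannot be erased by subsequent $\gtrsim$-steps. Apart from this bookkeeping, the argument is structurally identical to that of \cref{thm:Z-no-direct-natural-matrix}.
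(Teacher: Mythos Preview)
Your proposal is correct and follows essentially the same route as the paper's proof: assume a contradiction, lift to purely linear interpretations via \cref{thm:affine-linear}, exhibit the derivation $\tup{\spc}\,\half\,\un^{8n+1}\,\spc \to^* \tup{\spc}\,\half\,\un^{9n+2}\,\spc$ (which necessarily uses both rules of $\rI$ at the top), extract the $\N$-rational sequence $f(k) = \ve_1^{\tr} \mD_{\tup{\spc}\half}\mD_{\un}^{k}\mD_{\spc}\ve_d$, and invoke \cref{cor:N-rational-collatz}. Your closing paragraph on why weak monotonicity suffices is the right justification; the paper handles the same point more tersely by observing directly that a top $\rho$-step yields $[s](\vx) > [t](\vx)$ while any $\rZ$-step yields $[s](\vx) \gtrsim [t](\vx)$, and these compose. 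Your hedge ``for all sufficiently large $n$'' is unnecessary (the derivation works for all $n$) but harmless, since the proof of \cref{cor:N-rational-collatz} only needs the inequality along an arithmetic progression of large $n$.
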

\begin{proof}
  Assume for a contradiction that such a collection of natural matrix interpretations exists for some dimension $d-1$, with each symbol $\sigma \in \Sigma$ interpreted as $[\sigma](\vx) = \mM_\sigma \vx + \vv_\sigma$ and each matrix $\mD_\sigma$ defined as in \cref{thm:affine-linear}. Let $\ell \to r \in \rI$ be a rule for which $[\ell](\vx) > [r](\vx)$ for all $\vx \in \N^{d-1}$, and denote $\rho = \{\ell \to r\}$. As the interpretations satisfy the requirements for only weak monotonicity, for all $s,t \in \Sigma^*$, if $s \to_{\topt{\rho}} t$, then $[s](\vx) > [t](\vx) \text{ for all } \vx \in \N^{d-1}$. Thus, for all $s,t \in \Sigma^*$,
  \begin{equation}\label{eq:decrease-rho-top}
    s \to_{\topt{\rho}} t \quad \implies \quad \mD_s \gtrsim \mD_t \text{ and } (D_s)_{1,d} > (D_t)_{1,d}.
  \end{equation}

  We represent an arbitrary integer $k$ by the string $\tup{\spc} \half \un^k \spc$, interpreted as
  \begin{equation*}
    \big[\tup{\spc} \half \un^k \spc\big]^+(\vx) = \mD_{\tup{\spc}\! \half} \mD_{\un}^k \mD_{\spc} \vx.
  \end{equation*}

  Similar to~\cref{eq:N-rational-interpretation}, we let $\ve_i \in \N^d$ denote the $i$th standard basis vector and define $g(k) \coloneqq \vu^\tr \mD_{\un}^k \vw$, where $\vu = \mD_{\tup{\spc}\! \half}^\tr \ve_1$ and $\vw = \mD_{\spc} \ve_d$ are fixed, so $g(1), g(2), \dots$ is an $\N$-rational sequence.

  For all $n$, we have
  \begin{equation*}
    \tup{\spc} \half \un^{8n+1} \spc \; \to_{\rI \cup \rZ}^* \; \tup{\spc} \half \un^{9n+2} \spc,
  \end{equation*}
  which crucially requires the use of both rules in $\rI$, and, moreover, these rules are applied only at the leftmost end of a string. In particular, the rule $\ell \to r$ is used in the above derivation to perform a top rewrite. Thus, by~\cref{eq:decrease-rho-top}, for all $n$, we have $g(8n+1) > g(9n+2)$, which contradicts \cref{cor:N-rational-collatz}.
\end{proof}

Analogues of \cref{thm:Z-no-direct-natural-matrix,thm:Z-no-dp-natural-matrix} also hold for the system $\rev{\rZ}$, with its rules reversed.\footnote{Specifically, \cref{thm:Z-no-dp-natural-matrix} holds for $\rev{\rZ}$ with $\rJ = \{\tup{\spc} \half \un \un \to \tup{\spc} \shift \un \un,\ \tup{\spc} \un \half \to \tup{\spc} \un \un \triple\}$ in place of $\rI$. The corresponding version of \cref{lem:Z-dependency-pairs} is again straightforward to establish using an automated termination prover.} Moreover, by arguments similar to above, we can show that natural matrix interpretations fail to prove the termination of $\rW$. It is natural to expect that if a proof of the Collatz conjecture is to be produced by some automated method that relies on rewriting, then that method better be able to prove a statement as simple as the convergence of $W$. With this in mind, we describe an alternative rewriting system that simulates the Collatz function and terminates upon reaching $1$. We then provide examples where the alternative system is more suitable for use with termination tools (for instance allowing a matrix interpretations proof of the convergence of $W$). The arguments in this section that prove \cref{thm:Z-no-direct-natural-matrix,thm:Z-no-dp-natural-matrix} also appear not to apply to the alternative system, because the connection between rewrite sequences and $\N$-rational sequences gets lost when not using unary representations. More specifically, the proofs crucially use the fact that, since the system $\rZ$ represents numbers in unary, it is possible to form a sequence such that the $n$th element is represented by some product $\vv^\tr \mM^n \vw$, where $\vv$ and $\vw$ are vectors and $\mM$ is a matrix. There is no single such matrix that one can extract when using mixed binary--ternary, or even just binary, representations. Thus, the same kind of analysis, which uses the results known about $\N$-rational sequences, does not go through for the alternative system.

\subsection{Rewriting in Mixed Base}
\label{sec:rewriting-mixed-base}

In the mixed-base scheme, the overall idea is as follows. Given a number $n \in \Np$, we write a mixed binary--ternary representation for it (noting that this representation is not unique). With this representation, as long as the least significant digit is binary, the parity of the number can be recognized by checking only this digit, as opposed to scanning the entire string when working in unary. This allows us to easily determine the correct case when applying the Collatz function. If the least significant digit is ternary, then the representation is rewritten (while preserving its value) to make this digit binary. Afterwards, since computing $n/2$ corresponds to erasing a trailing binary $0$ and computing $3n + 1$ corresponds to inserting a trailing ternary $1$, applying the Collatz function takes a single rewrite step.

Intuitively, the rewriting system we present can be seen as applying the Collatz map to natural numbers written in binary, with some secondary computation performed using auxiliary symbols to facilitate the application of the map $n \mapsto 3n + 1$. However, since termination of rewriting is concerned with sequences that arise from \emph{any initial string} and \emph{any possible application of the rules}, we need to consider the behavior of the system over any string that may be encountered, which includes even the strings that could not result from computing in binary with a prescribed strategy for rule application. The mixed-base scheme is simply a way to give an arithmetical meaning to all strings and all possible rewrite steps.

More formally, a mixed-base numeral system~\cite{Can69} is a numeral system where the base changes across positions, which we define as follows.
\begin{definition}[Mixed-base representation]\label{defn:mixed-base}
  Fix $B \subseteq \N$ to be a set of bases. Let $n_1, \dots, n_k \in \N$ and let $b_1, \dots, b_k \in B$. If we have for each $2 \leq i \leq k$ that $n_i < b_i$, then the matrix
  \begin{equation*}
    N = \begin{bmatrix} n_1 & n_2 & \dots & n_k \\ b_1 & b_2 & \dots & b_k \end{bmatrix}
  \end{equation*}
  is called a \emph{mixed $B$-ary representation}. To reduce clutter, we identify the matrix $N$ with a string $(n_1)_{(b_1)} (n_2)_{(b_2)} \dots (n_k)_{(b_k)}$, where each $(n_i)_{(b_i)}$ is viewed as a single symbol denoting the \emph{$b_i$-ary~digit~$n_i$}.
\end{definition}
The string $N$ in the above definition represents the number
\begin{equation}\label{eq:value-string}
  \Val(N) \coloneqq \sum_{i=1}^k n_i \prod_{j=i+1}^{k} b_j.
\end{equation}
Observing that the addition of leading zeros (symbols $0_b$ for any $b$) to a string does not change its value, we may assume without loss of generality that $n_1 > 0$. Furthermore, $b_1$ does not affect the value of the string, so we replace it by zero (noting that by \cref{defn:mixed-base} no other base is zero).

\begin{example}\label{ex:mixed-base}
  \hfill
  \begin{enumerate}
  \item With $B = \{2\}$, we have the binary numeral system. In the notation of \cref{defn:mixed-base} we write $1_2 0_2 1_2 1_2$ to represent the number $11$.
  \item With $B = \{2,3\}$, we obtain the mixed binary--ternary system that we will use in the rest of this paper. In this system we may write, for instance, $1_0 2_3 1_2$ or $2_0 1_2 1_2$ or $1_0 1_2 2_3$ to represent the number $11$.
  \end{enumerate}
\end{example}

Now, define $\beta_b^{n}(x) \coloneqq bx + n$. After rearranging~\cref{eq:value-string}, we see that for a mixed $B$-ary string $N = (n_1)_{(b_1)} (n_2)_{(b_2)} \dots (n_k)_{(b_k)}$, its value $\Val(N)$ is also given by evaluating the composite function
\begin{equation}\label{eq:comp-string}
  \Comp_N(x) \coloneqq (\beta_{b_k}^{n_k} \circ \beta_{b_{k-1}}^{n_{k-1}} \circ \dots \circ \beta_{b_1}^{n_1})(x)
\end{equation}
at any value (because $b_1 = 0$ implies that the innermost function $\beta_0^{n_1}(x) = n_1$ is constant). This gives us a string and a function view of the same representation, and we will switch between them as appropriate. In doing so, we also conflate the symbols and the corresponding functions, referring to $\beta_b^n$ as $n_b$.

As the last ingredient before describing the rewriting system, we observe that we can write $(\beta_b^{n} \circ \beta_c^{m})(x) = bcx+bm+n$ equivalently as another composition $(\beta_c^{m'} \circ \beta_b^{n'})(x) = cbx+cn'+m'$ for some suitable $0 \leq n' < b$ and $0 \leq m' < c$. This allows us to swap the bases of adjacent positions while preserving the value of the string.

From this point on, we constrain ourselves to the mixed $\{2,3\}$-ary (binary--ternary) representations as we shift our focus to simulating the Collatz function (noting that it is possible to adapt the resulting rewriting system for other instances of the general case). More precisely, we simulate the following redefinition of the Collatz function, where the odd case incorporates an additional division by $2$.
\begin{equation*}\label{eq:shortcut-collatz-fn}
  T(n) =
  \begin{cases}
    \frac{n}{2} & \text{if } n \equiv 0 \pmod 2 \\
    \frac{3n + 1}{2} & \text{if } n \equiv 1 \pmod 2
  \end{cases}
\end{equation*}

We will describe an SRS $\rT$ over the symbols $\{\bZ, \bO, \tZ, \tO, \tT, \btL, \btR\}$ that simulates the iterated application of the Collatz function and terminates upon reaching $1$. The symbols $\bZ, \bO$ correspond to binary digits $0_2, 1_2$; and $\tZ, \tO, \tT$ to ternary digits $0_3, 1_3, 2_3$. The symbol $\btL$ marks the beginning of a string while also standing for the most significant digit (without loss of generality assumed to be $1_0$) and $\btR$ marks the end of a string while also standing for the redundant trailing digit $0_1$. Consider the functional view of these symbols:
\begin{equation}\label{eq:interp}
  \begin{array}{lcl}
    \bZ(x) & = & 2x \\
    \bO(x) & = & 2x + 1
  \end{array}
  \qquad
  \begin{array}{lcl}
    \tZ(x) & = & 3x \\
    \tO(x) & = & 3x + 1 \\
    \tT(x) & = & 3x + 2
  \end{array}
  \qquad
  \begin{array}{lcl}
    \btL(x) & = & 1 \\
    \btR(x) & = & x
  \end{array}
\end{equation}
Each positive integer can be expressed as some composition of these functions, which corresponds to a string as per our previous discussion.

\begin{example}\label{ex:sample-rewrite-T}
  Viewing the expression $\btL(x)$ as the constant $1$, we can write
  \begin{equation*}
    19 = \Val(\btL \tZ \bZ \tO \btR) = \btR(\tO(\bZ(\tZ(\btL(x))))).
  \end{equation*}
  The string representation ends with a ternary symbol, so we will rewrite it.

  With the function view, we have
  \begin{equation*}
    \tO(\bZ(x)) = 3(2x)+1 = 6x+1 = 2(3x)+1 = \bO(\tZ(x)).
  \end{equation*}
  This shows that we could also write $19 = \Val(\btL \tZ \tZ \bO \btR)$, which now ends with the binary digit $1_2$. This gives us the rewrite rule $\bZ \tO \to \tZ \bO$. We can now apply the Collatz function to this representation by rewriting only the rightmost two symbols of the string since
  \begin{equation*}
    T(\btR(\bO(x))) = \frac{3(2x+1)+1}{2} = \frac{6x+4}{2} = 3x+2 = (\btR(\tT(x))).
  \end{equation*}
  This gives us the rewrite rule $\bO \btR \to \tT \btR$. After applying this rule to the string $\btL \tZ \tZ \bO \btR$, we indeed obtain $T(19) = 29 = \Val(\btL \tZ \tZ \tT \btR)$.
\end{example}

In the manner of the above example, we compute all the necessary transformations and obtain the following 11-rule SRS $\rT$.
\begin{equation*}
  \rD_T =
  \left\{
    \begin{array}[c]{rcl}
      \bZ \btR & \to & \btR \\
      \bO \btR & \to & \tT \btR
    \end{array}
  \right\}
  \qquad
  \rA =
  \left\{
    \begin{array}[c]{rcl}
      \bZ \tZ & \to & \tZ \bZ \\
      \bZ \tO & \to & \tZ \bO \\
      \bZ \tT & \to & \tO \bZ
    \end{array}
    \qquad
    \begin{array}[c]{rcl}
      \bO \tZ & \to & \tO \bO \\
      \bO \tO & \to & \tT \bZ \\
      \bO \tT & \to & \tT \bO
    \end{array}
  \right\}
  \qquad
  \rB =
  \left\{
    \begin{array}[c]{rcl}
      \btL \tZ & \to & \btL \bO \\
      \btL \tO & \to & \btL \bZ \bZ \\
      \btL \tT & \to & \btL \bZ \bO
    \end{array}
  \right\}
\end{equation*}
This SRS is split into subsystems $\rD_T$ (dynamic rules for $T$) and $\rX = \rA \cup \rB$ (auxiliary rules). The two rules in $\rD_T$ encode the application of the Collatz function $T$, while the rules in $\rX$ serve to push binary symbols towards the rightmost end of the string by swapping the bases of adjacent positions without changing the represented value.

\begin{example}[Rewrite sequence of $\rT$]\label{ex:rewrite-sequence-T}
  Consider the string $s = \btL \bZ \bZ \tZ \btR$, which represents the number $12$. Below is a possible rewrite sequence of $\rT$ that starts from $s$, with the corresponding values (under the interpretations from~\cref{eq:interp}) displayed above the strings. Underlines indicate the parts of the strings where the rules are applied.
  \begin{equation*}
    \begin{array}[c]{*{16}{@{\,}c@{\,}}}
      & 12 && 12 && 6 && 6 && 3 && 3 && 5 && 5 \\[0.25em]
      & \btL \bZ \underline{\bZ \tZ} \btR & \to_\rA & \btL \bZ \tZ \underline{\bZ \btR} & \to_{\rD_T} & \btL \underline{\bZ \tZ} \btR & \to_\rA & \btL \tZ \underline{\bZ \btR} & \to_{\rD_T} & \underline{\btL \tZ} \btR & \to_\rB & \btL \underline{\bO \btR} & \to_{\rD_T} & \underline{\btL \tT} \btR & \to_\rB & \btL \bZ \underline{\bO \btR} \\[1em]
      & 8 && 8 && 8 && 4 && 2 && 1 \\[0.25em]
      \to_{\rD_T} & \btL \underline{\bZ \tT} \btR & \to_\rA & \underline{\btL \tO} \bZ \btR & \to_\rB & \btL \bZ \bZ \underline{\bZ \btR} & \to_{\rD_T} & \btL \bZ \underline{\bZ \btR} & \to_{\rD_T} & \btL \underline{\bZ \btR} & \to_{\rD_T} & \btL \btR
    \end{array}
  \end{equation*}
\end{example}
The trajectory of $T$ would continue upon reaching $1$; however, in order to be able to formulate the Collatz conjecture as a termination problem, $\rT$ is made in such a way that its rewrite sequences stop upon reaching the string representation $\btL \btR$ of $1$ since no rule is applicable.

Termination of the subsystems of $\rT$ with $\rB$ or $\rD_T$ removed is easily seen. However, since we have matrix interpretations at our disposal, let us give a compact and formal proof.
\begin{lemma}\label{lem:A-termination}
  $\SN(\rT \setminus \rB)$ and $\SN(\rT \setminus \rD_T)$.
\end{lemma}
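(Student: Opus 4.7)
The plan is to establish both statements by the rule-removal technique of \cref{thm:relative-termination}, in each case reducing to the common residual obligation $\SN(\rA)$. Each removal step will be justified by a one-dimensional natural matrix interpretation (which automatically satisfies the extended monotone algebra requirement since every $1 \times 1$ matrix involved is the identity), and $\SN(\rA)$ itself will be established by a single two-dimensional interpretation.

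For $\SN(\rT \setminus \rB) = \SN(\rD_T \cup \rA)$, I would first use the interpretation that counts binary symbols: set $[\bZ](x) = [\bO](x) = x + 1$ and interpret every other symbol as the identity. Both rules of $\rD_T$ erase or recolor the binary symbol adjacent to $\btR$, so they strictly decrease the count, while every rule of $\rA$ preserves it and is therefore weakly decreasing. This yields $\SN(\rD_T \rel \rD_T \cup \rA)$, and rule removal then leaves $\SN(\rA)$. Symmetrically, for $\SN(\rT \setminus \rD_T) = \SN(\rA \cup \rB)$, I would use the interpretation that counts ternary symbols: set $[\tZ](x) = [\tO](x) = [\tT](x) = x + 1$ and interpret every other symbol as the identity. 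Each rule of $\rB$ destroys a ternary symbol (replacing it by one or two binary digits) and so strictly decreases the count, while every rule of $\rA$ preserves it. Rule removal again reduces the obligation to $\SN(\rA)$.

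For $\SN(\rA)$, the idea is that every rule in $\rA$ rewrites a binary--ternary adjacency into the corresponding ternary--binary adjacency (with the digit values adjusted to preserve numeric value), so the number of ordered pairs of positions $(i, j)$ with $i < j$, a binary symbol at position $i$, and a ternary symbol at position $j$ strictly decreases by exactly one with each rewrite step. I would encode this measure as a two-dimensional natural matrix interpretation by taking $[\bZ](\vx) = [\bO](\vx) = \bigl(\begin{smallmatrix} 1 & 1 \\ 0 & 1 \end{smallmatrix}\bigr)\vx$, $[\tZ](\vx) = [\tO](\vx) = [\tT](\vx) = \vx + \bigl(\begin{smallmatrix} 0 \\ 1 \end{smallmatrix}\bigr)$, and $[\btL], [\btR]$ both the identity. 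The second coordinate then counts ternary symbols in the suffix and the first coordinate accumulates the inversion count; a single uniform computation, exploiting that all six rules in $\rA$ share the shape (binary)(ternary) $\to$ (ternary)(binary), shows that the left- and right-hand sides differ by $(1, 0)$ on every input. I do not expect a substantive obstacle, since the six case computations collapse to one; the only bookkeeping point is to verify that $(M_\sigma)_{1,1} = 1$ for every symbol, ensuring that the weakly monotone and extended monotone requirements of \cref{defn:monotone-algebra} are met throughout.
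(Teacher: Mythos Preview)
Your argument is correct, but it differs from the paper's. The paper handles each half in a single step via one-dimensional interpretations on the \emph{reversed} systems: for $\SN(\rev{(\rT\setminus\rB)})$ it takes $[\bZ](x)=[\bO](x)=2x+1$, $[\btR](x)=x$, $[\tZ](x)=[\tO](x)=[\tT](x)=2x$, and for $\SN(\rev{(\rT\setminus\rD_T)})$ it takes $[\bZ](x)=[\bO](x)=[\btL](x)=x+1$, $[\tZ](x)=[\tO](x)=[\tT](x)=4x$; reversal then transfers termination back via \cref{lem:reversal}. By contrast, you work directly (no reversal), first strip $\rD_T$ respectively $\rB$ with simple symbol counts, and then discharge the shared residual $\SN(\rA)$ with a two-dimensional interpretation that tracks the binary--ternary inversion count. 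The paper's route is more economical (one step, dimension one); yours is more modular, factors both obligations through the same core lemma $\SN(\rA)$, and makes the combinatorial reason for $\rA$'s termination explicit, at the price of needing a $2\times 2$ interpretation.
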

\begin{proof}
  It is easily checked that the interpretations below show $\SN(\rev{(\rT \setminus \rB)})$, which implies $\SN(\rT \setminus \rB)$ by \cref{lem:reversal}.
  \begin{equation*}
    [\bZ](x) = [\bO](x) = 2x + 1 \qquad [\btR] = x \qquad [\tZ](x) = [\tO](x) = [\tT](x) = 2 x
  \end{equation*}

  Similarly, the below interpretations show $\SN(\rev{(\rT \setminus \rD_T)})$, which implies $\SN(\rT \setminus \rD_T)$ by \cref{lem:reversal}.
  \begin{equation*}
    [\bZ](x) = [\bO](x) = [\btL](x) = x + 1 \qquad
    [\tZ](x) = [\tO](x) = [\tT](x) = 4 x \qedhere
  \end{equation*}
\end{proof}

When considering the termination of $\rT$ (on all initial strings), it suffices to limit the discussion to initial strings of a specific form that we have been working with so far, e.g., in \cref{ex:sample-rewrite-T,ex:rewrite-sequence-T}.
\begin{restatable}{lemma}{blockwise}\label{lem:blockwise-termination}
  If $\rT$ is terminating on all initial strings of the \emph{canonical form} $\btL (\bZ \vert \bO \vert \tZ \vert \tO \vert \tT)^* \btR$, then $\rT$ is terminating (on all initial strings).
\end{restatable}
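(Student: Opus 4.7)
The plan is to argue the contrapositive: if $\rT$ admits an infinite rewrite sequence starting from some string $s$, then it does so starting from a canonical-form string as well. The pivotal observation is that no rule of $\rT$ introduces or deletes a marker symbol $\btL$ or $\btR$ --- inspection of the 11 rules confirms this --- so the left-to-right sequence of markers is invariant throughout any rewrite sequence.

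Given this, I would decompose $s$ by placing cut points immediately before every $\btL$ and immediately after every $\btR$. Each resulting segment has one of four shapes, where $w$ ranges over $\{\bZ, \bO, \tZ, \tO, \tT\}^*$ (possibly empty): (i)~$\btL w \btR$, i.e., a canonical-form string; (ii)~$\btL w$, with no trailing $\btR$; (iii)~$w \btR$, with no leading $\btL$; or (iv)~$w$, marker-free. Next I would verify that no rule's left-hand side straddles a cut: rules of $\rA$ carry no markers, so they lie within a single segment; rules of $\rB$ begin with $\btL$ and therefore sit entirely within the segment starting at that $\btL$; rules of $\rD_T$ end with $\btR$ and therefore sit entirely within the segment ending at that $\btR$. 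Hence each $\rT$-rewrite step on the full string is a legal $\rT$-rewrite step confined to exactly one segment, and the segmentation is preserved throughout the rewrite sequence.

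An infinite rewrite sequence from $s$ must then, by pigeonhole applied to the finitely many segments, induce an infinite $\rT$-rewrite sequence on some segment regarded as a standalone string. However, segments of type (ii) admit only rules of $\rA \cup \rB = \rT \setminus \rD_T$ (no $\btR$ is present to enable a $\rD_T$ rule), while type (iii) admits only $\rT \setminus \rB$, and type (iv) admits only $\rA \subseteq \rT \setminus \rB$; each of these subsystems is terminating by \cref{lem:A-termination}. Segments of type (i) are canonical-form strings, so terminate by hypothesis. In every case the putative infinite segment-sequence is impossible, yielding the desired contradiction.

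The main obstacle is essentially bookkeeping: making sure that the cut points really are stable under rewriting (which follows from markers never being created or destroyed) and that per-segment rewrite behavior corresponds faithfully to $\rT$-rewriting on the segment in isolation (since rule left-hand sides do not cross cuts, the applicable set of rules and their effects are identical). No conceptually hard step is anticipated --- the lemma is a structural reduction exploiting the fact that the markers act as ``firewalls'' that no rule can cross.
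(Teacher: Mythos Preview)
Your argument is correct and matches the paper's elementary proof in \cref{sec:alternative-proof-blockwise-termination} almost exactly: both decompose the string at marker positions, observe that no rule crosses a boundary, pigeonhole onto a single block, and dispatch the non-canonical block shapes via \cref{lem:A-termination}. The only cosmetic difference is that the paper pads the string with extra delimiters so every block has the uniform shape $cNd$ with $c,d\in\{\btL,\btR\}$, whereas you cut before each $\btL$ and after each $\btR$ and handle the four resulting shapes directly; the paper also offers a second, terser proof via type introduction (typing $\btL:\sigma\to\tau$, $\btR:\rho\to\sigma$, digits $\sigma\to\sigma$), which packages the same structural observation more abstractly.
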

For the proof of \cref{lem:blockwise-termination}, we will use \emph{type introduction}, which is a technique for proving termination by switching to typed rewriting. We refer the reader to Sabel~and~Zantema~\cite[Section~3.1]{SZ17} for background. We also offer a more elementary proof in \cref{sec:alternative-proof-blockwise-termination}.
\begin{proof}[Proof of \cref{lem:blockwise-termination}]
  Let $\mathbb{T} = \{\rho, \sigma, \tau\}$ be a set of types. It is straightforward to see that, with respect to the typing
  \begin{equation*}
    \btL : \sigma \to \tau, \qquad \bZ, \bO, \tZ, \tO, \tT : \sigma \to \sigma, \qquad \btR : \rho \to \sigma,
  \end{equation*}
  the SRS $\rT$ is well-typed, so it is terminating (in the untyped setting) if and only if it is terminating in the typed setting~\cite[Corollary~3.4]{SZ17}. Due to \cref{lem:A-termination}, a well-typed string with $\sigma$ as its source or target type cannot admit an infinite rewrite sequence for $\rT$. Thus, a well-typed string admits an infinite rewrite sequence only if it is of type $\rho \to \tau$, which implies that it is of the canonical form.
\end{proof}
Note that the converse of the above lemma is obvious, since if $\rT$ is terminating then it is terminating on all initial strings of any form, in particular $\btL (\bZ \vert \bO \vert \tZ \vert \tO \vert \tT)^* \btR$.

As a whole, the rewriting system $\rT$ simulates the iterated application of $T$ (except at $1$).
\begin{theorem}\label{thm:simulation}
  $\rT$ is terminating if and only if $T$ is convergent (i.e., the Collatz conjecture holds).
\end{theorem}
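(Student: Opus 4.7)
The plan is to prove both directions by analyzing the interaction between the auxiliary subsystem $\rA \cup \rB$ and the dynamic subsystem $\rD_T$ on canonical-form strings $\btL w \btR$. Two preliminary observations, verified by direct computation from the interpretations in~\cref{eq:interp}, will underlie both directions: (i)~every rule in $\rA \cup \rB$ preserves $\Val(\btL w \btR)$, whereas each rule in $\rD_T$ replaces the value $n$ by $T(n)$; and (ii)~every rule application preserves canonical form, so starting from a canonical string guarantees that all intermediate strings remain canonical.

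For the backward direction, I would invoke \cref{lem:blockwise-termination} to reduce to canonical initial strings. Suppose for contradiction that some canonical $s_0$ admits an infinite rewrite sequence. By \cref{lem:A-termination}, rules of $\rD_T$ must be applied infinitely often in this sequence. Since $\rA \cup \rB$ preserves the value while each $\rD_T$ step sends $n$ to $T(n)$, the values of the strings immediately after successive $\rD_T$-applications form the $T$-trajectory of $\Val(s_0)$. By convergence of $T$, this trajectory reaches $1$; but the only canonical-form string with value $1$ is $\btL \btR$ (any nonempty $w$ forces value at least $2$, as each of $\bZ,\bO,\tZ,\tO,\tT$ sends $1$ to something $\geq 2$), and no rule of $\rT$ applies to $\btL\btR$. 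Hence the sequence must terminate, contradicting its assumed infinitude.

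For the forward direction, I would argue contrapositively: if some $n$ has a non-convergent $T$-trajectory, I would construct an infinite rewrite sequence of $\rT$ starting from the canonical string $\btL w_0 \btR$, where $w_0$ is the standard binary expansion of $n$ without its leading bit. One $T$-step is simulated on $\btL w \btR$ with $w$ nonempty as follows: if $w$ already ends in a binary digit, apply the appropriate rule of $\rD_T$; otherwise, first reduce to that case using $\rA \cup \rB$. The reduction is straightforward: if $w$ contains any binary digit, repeatedly use $\rA$ to swap the rightmost binary digit past each ternary digit to its right; if $w$ is entirely ternary, apply a rule of $\rB$ at the left to introduce a binary digit, then proceed as before. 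Since $\rA \cup \rB$ preserves value and $\rD_T$ realizes $n \mapsto T(n)$, iterating this construction produces the desired infinite rewrite sequence tracking the $T$-trajectory of $n$.

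The main obstacle is verifying the simulation argument in the forward direction: one must check that the described strategy always reduces a canonical string of value $n > 1$ to a binary-ending canonical string in finitely many $\rA \cup \rB$ steps. This amounts to confirming rule-by-rule that $\rA \cup \rB$ preserves value (a routine check from~\cref{eq:interp}) and observing that each $\rA$-swap strictly decreases the distance from the targeted binary digit to the rightmost position of $w$, while $\rB$ creates such a digit when none exists. The backward direction is comparatively routine once the value invariant, \cref{lem:A-termination}, and \cref{lem:blockwise-termination} are in hand.
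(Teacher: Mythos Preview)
The proposal is correct and follows essentially the same approach as the paper: both directions rest on the same three ingredients---(i) $\rX$ preserves $\Val$ while $\rD_T$ applies $T$, (ii) $\btL\btR$ is the unique canonical representation of $1$ and admits no rewrite, and (iii) any canonical string of value $\nu>1$ can be rewritten via $\rX$ to one ending in a binary digit so that $\rD_T$ applies---together with \cref{lem:blockwise-termination} and \cref{lem:A-termination}. Your organization differs only cosmetically (you fold the paper's three claims into two preliminary observations plus the simulation strategy), and your forward-direction reduction via ``push the rightmost binary digit right using $\rA$, or create one via $\rB$'' is exactly the paper's Claim~3 argument.
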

\begin{proof}
  Without loss of generality (due to \cref{lem:blockwise-termination}), consider only the strings of the canonical form $\btL (\bZ \vert \bO \vert \tZ \vert \tO \vert \tT)^* \btR$ throughout this proof. We will show that, given such a string $N$, all rewrite sequences for $\rT$ that start from $N$ simulate the $T$-trajectory of $\Val(N)$, except when $\Val(N) = 1$. More specifically, we prove the following and use them to deduce the theorem statement.
  \begin{claim}\label{claim:equiv-rewrite-num}
    For any pair $N, N'$ of strings, if $N \to_{\rX} N'$ then $\Val(N') = \Val(N)$, and if $N \to_{\rD_T} N'$ then $\Val(N') = T(\Val(N))$.
  \end{claim}
  \begin{claim}\label{claim:equiv-1}
    For any string $N$ with $\Val(N) = 1$, there exists no string $N'$ satisfying $N \to_\rT N'$.
  \end{claim}
  \begin{claim}\label{claim:equiv-num-rewrite}
    For any integer $\nu > 1$ and any string $N$ with $\Val(N) = \nu$, there exists a string $N'$ such that $N \to_\rT^* N'$ and $\Val(N') = T(\nu)$.
  \end{claim}
  \begin{proof}[Proof of \cref{claim:equiv-rewrite-num}]
    Assume $N \to_\rX N'$. Then there exist strings $X,Y$ and some rule $\ell \to r \in \rX$ such that $N = X \ell Y$ and $N' = X r Y$. Recalling the definition of $\Comp$ from~\cref{eq:comp-string}, we have $\Val(N') = \Val(N)$ if and only if $\Comp_Y \circ \Comp_r \circ \Comp_X = \Comp_Y \circ \Comp_\ell \circ \Comp_X$ for all $\ell \to r \in \rX$. This holds if and only if $\Comp_r = \Comp_\ell$ for all $\ell \to r \in \rX$. With the functional views of the symbols as in~\cref{eq:interp}, we indeed have:
    \begin{equation*}
      \begin{array}{lclcccrcr}
        \Comp_{\tZ\! \bZ}(x) & = & \bZ(\tZ(x)) & = & 6x & = & \tZ(\bZ(x)) & = & \Comp_{\bZ\! \tZ}(x) \\[0.25em]
        \Comp_{\tZ\! \bO}(x) & = & \bO(\tZ(x)) & = & 6x+1 & = & \tO(\bZ(x)) & = & \Comp_{\bZ\! \tO}(x) \\[0.25em]
        \Comp_{\tO\! \bZ}(x) & = & \bZ(\tO(x)) & = & 6x+2 & = & \tT(\bZ(x)) & = & \Comp_{\bZ\! \tT}(x) \\[0.25em]
        \Comp_{\tO\! \bO}(x) & = & \bO(\tO(x)) & = & 6x+3 & = & \tZ(\bO(x)) & = & \Comp_{\bO\! \tZ}(x) \\[0.25em]
        \Comp_{\tT\! \bZ}(x) & = & \bZ(\tT(x)) & = & 6x+4 & = & \tO(\bO(x)) & = & \Comp_{\bO\! \tO}(x) \\[0.25em]
        \Comp_{\tT\! \bO}(x) & = & \bO(\tT(x)) & = & 6x+5 & = & \tT(\bO(x)) & = & \Comp_{\bO\! \tT}(x) \\[0.25em]
        \Comp_{\btL\! \bO}(x) & = & \bO(\btL(x)) & = & 3 & = & \tZ(\btL(x)) & = & \Comp_{\btL\! \tZ}(x) \\[0.25em]
        \Comp_{\btL\! \bZ\! \bZ}(x) & = & \bZ(\bZ(\btL(x))) & = & 4 & = & \tO(\btL(x)) & = & \Comp_{\btL\! \tO}(x) \\[0.25em]
        \Comp_{\btL\! \bZ\! \bO}(x) & = & \bO(\bZ(\btL(x))) & = & 5 & = & \tT(\btL(x)) & = & \Comp_{\btL\! \tT}(x)
      \end{array}
    \end{equation*}
    This proves the first part of the claim.

    Now assume $N \to_{\rD_T} N'$. Again, there exist a string $X$ and some rule $\ell \to r \in \rD_T$ such that $N = X \ell$ and $N' = X r$ (since the rules in $\rD_T$ can be applied only at the rightmost end of the string). We have $\Val(N') = T(\Val(N))$ if and only if $\Comp_r \circ \Comp_X = T \circ \Comp_\ell \circ \Comp_X$ for all $\ell \to r \in \rD_T$. This holds if and only if $\Comp_r = T \circ \Comp_\ell$ for all $\ell \to r \in \rD_T$. There are only two rules to check, and indeed we have:
    \begin{equation*}
      \begin{array}{lclcccccrcr}
        \Comp_{\btR}(x) & = & \btR(x) & = & x & = & T(2x) & = & T(\btR(\bZ(x))) & = & T(\Comp_{\bZ\! \btR}(x)) \\[0.25em]
        \Comp_{\tT\! \btR}(x) & = & \btR(\tT(x)) & = & 3x+2 & = & T(2x+1) & = & T(\btR(\bO(x))) & = & T(\Comp_{\bO\! \btR}(x))
      \end{array}
    \end{equation*}
    This proves the second part of the claim.
  \end{proof}
  \begin{proof}[Proof of \cref{claim:equiv-1}]
    The string $\btL \btR$ has $\Val(\btL \btR) = 1$, and inserting any symbol from $\{\bZ, \bO, \tZ, \tO, \tT\}$ in between the delimiters increases the value of the string, so $\btL \btR$ is the unique string representing $1$. None of the rules in $\rT$ apply to $\btL \btR$, which proves the claim.
  \end{proof}
  \begin{proof}[Proof of \cref{claim:equiv-num-rewrite}]
    Let $\nu > 1$ be an integer, and let $N$ be a string with $\Val(N) = \nu$. We will describe a rewrite sequence $N \to_{\rX}^* N' \to_{\rD_T} N^{\prime\prime}$, which implies by \cref{claim:equiv-rewrite-num} that $\Val(N^{\prime\prime}) = T(\nu)$. For the rest of the proof let $b_i$ and $t_i$ stand for binary and ternary symbols, respectively. Assume that $N$ contains some binary symbol ($\bZ$ or $\bO$), since otherwise it is of the form $\btL (\tZ \vert \tO \vert \tT)^* \btR$ and we can apply some rule from $\rB$ to produce a binary symbol. Then the string contains some substring $b_1 t_2 t_3 \dots t_k \btR$. We rewrite this substring into $t_1' b_2 t_3 \dots t_k \btR$ by applying some rule from $\rX$. For each $1 < i < k$, assuming the current string contains some substring $b_i t_{i+1} \dots t_k \btR$, we rewrite it into $t_i' b_{i+1} t_{i+2} \dots t_k \btR$. In the end we obtain a string $N'$ with $\Val(N') = \nu$ that contains some $b_k \btR$, so we apply some rule from $\rD_T$ to obtain $N^{\prime\prime}$ such that $\Val(N^{\prime\prime}) = T(\nu)$.
  \end{proof}

  Assume that the SRS $\rT$ is not terminating. Let $(N_i)_{i=0}^\infty$ be an infinite rewrite sequence for $\rT$, and consider the sequence $(\nu_i)_{i=0}^\infty$, where $\nu_i = \Val(N_i)$. By \cref{claim:equiv-rewrite-num}, for all $i \in \N$, we have either $\nu_{i+1} = \nu_i$ or $\nu_{i+1} = T(\nu_i)$ depending on the rewrite rule used. Let $I = \{i \in \N \mid \nu_{i+1} = T(\nu_i)\}$, which is the same as the set of indices at which some rule from $\rD_T$ is applied in the rewrite sequence. By \cref{lem:A-termination}, the subsystem $\rX = \rT \setminus \rD_T$ is terminating, so the rewrite $\rew_{\rD_T}$ is performed infinitely many times in the sequence. This implies that $I$ is infinite, proving that the sequence $(\nu_i)_{i \in I}$ is a Collatz trajectory. Moreover, this trajectory is nonconvergent, i.e., it does not contain $1$, since otherwise the rewrite sequence could not be infinite (because of \cref{claim:equiv-1}). This proves the backward implication in the theorem statement.

  Assume that the function $T$ is not convergent. Let $(\nu_i)_{i=0}^\infty$ be a nonconvergent trajectory, and note that $\nu_i > 1$ for all $i \in \N$. We will inductively define a sequence $(N_i)_{i=0}^\infty$ such that $\Val(N_i) = \nu_i$ for all $i \in \N$. Let $N_0$ be any string with $\Val(N_0) = \nu_0$. For each $i \in \Np$, assuming $\Val(N_{i-1}) = \nu_{i-1}$, let $N_i$ be a string such that $N_{i-1} \to_{\rT}^* N_i$ and $\Val(N_i) = T(\nu_{i-1}) = \nu_i$, which exists by \cref{claim:equiv-num-rewrite}. Defined in this manner, the sequence $(N_i)_{i=0}^\infty$ satisfies $N_i \to_\rT^* N_{i+1}$ for all $i \in \N$, meaning that $\rT$ is not terminating. This proves the forward implication in the theorem statement, which concludes the proof.
\end{proof}

When trying to remove a rule in $\rD_T$ or $\rB$ it suffices to show relative top termination, allowing us to use weakly (instead of extended) monotone algebras when applying \cref{thm:monotone-algebra} and take advantage of the more relaxed constraints when searching for matrix interpretations. As we mentioned back in \cref{sec:string-rewriting-systems}, the lemma below encapsulates dependency pairs, and it can in fact be automatically proved by the existing termination tools via the dependency pair transformation followed by an application of the dependency pair framework~\cite{GTSF06}.\footnote{We thank an anonymous reviewer of CADE for pointing out this fact.}
\begin{lemma}\label{lem:T-top-termination}
  \hfill
  \begin{enumerate}
  \item For each subset $\rR \subseteq \rB$, if $\SN(\rR_\mathrm{top} \rel \rT)$ then $\SN(\rR \rel \rT)$.
  \item For each subset $\rQ \subseteq \rD_T$, if $\SN(\rev{\rQ}_\mathrm{top} \rel \rev{\rT})$ then $\SN(\rev{\rQ} \rel \rev{\rT})$.
  \end{enumerate}
\end{lemma}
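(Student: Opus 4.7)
The plan is to argue that along every $\rT$-derivation witnessing a potential failure of relative termination, each rule in $\rR \subseteq \rB$ is forced to fire at the leftmost end of the current string, which makes every $\rR$-step a $\rR_\mathrm{top}$-step; the contrapositive of the hypothesis then yields part~(1), and part~(2) follows by a symmetric argument after reversal.

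The first step would be to reduce to canonical strings. Reusing the typing from \cref{lem:blockwise-termination}---namely $\btL : \sigma \to \tau$, $\btR : \rho \to \sigma$, and $\bZ, \bO, \tZ, \tO, \tT : \sigma \to \sigma$---every rule in $\rT$ is well-typed, so by the relative-termination analogue of type introduction~\cite[Corollary~3.4]{SZ17} it suffices to prove $\SN(\rR \rel \rT)$ for well-typed initial strings. Any well-typed string whose source or target is $\sigma$ contains no $\btL$ or no $\btR$, so only rules in $\rT \setminus \rB$ or $\rT \setminus \rD_T$ can act on it, and both of these subsystems terminate by \cref{lem:A-termination}; therefore the remaining initial strings are precisely those of the canonical form $\btL (\bZ \vert \bO \vert \tZ \vert \tO \vert \tT)^* \btR$.

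Next I would observe that $\btL$ sits uniquely at the leftmost position throughout any $\rT$-derivation that starts from a canonical string: no rule in $\rT$ creates or destroys $\btL$, and the only rules mentioning $\btL$ (those in $\rB$) keep it pinned to the left boundary. Since every $\ell \to r \in \rR \subseteq \rB$ begins with $\btL$, the only possible match of $\ell$ is at position~$1$, so every $\rR$-step along the derivation is literally a $\rR_\mathrm{top}$-step. Hence any infinite $\rT$-derivation using $\rR$ infinitely often is also an infinite $\rT$-derivation using $\rR_\mathrm{top}$ infinitely often, contradicting $\SN(\rR_\mathrm{top} \rel \rT)$ and establishing part~(1). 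For part~(2), the same argument runs on the reversed system: with the typing reversed, the well-typed strings of interest for $\rev{\rT}$ are of the form $\btR (\bZ \vert \bO \vert \tZ \vert \tO \vert \tT)^* \btL$, the symbol $\btR$ sits uniquely at the leftmost position throughout any $\rev{\rT}$-derivation from such a string, and every rule in $\rev{\rQ} \subseteq \rev{\rD_T}$ begins with $\btR$, so every $\rev{\rQ}$-step is a top rewrite.

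The only mildly delicate point is the invocation of type introduction for \emph{relative} termination rather than plain termination; this is a direct adaptation of the semantic argument in Sabel--Zantema's proof (rule labels are preserved when lifting an untyped infinite derivation to its well-typed counterpart, and hence so is the ``infinitely many $\rR$-steps'' condition), so I do not foresee a substantive obstacle.
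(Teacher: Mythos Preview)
Your proposal is correct and follows essentially the same route as the paper: reduce to canonical strings, observe that the unique occurrence of $\btL$ (resp.\ $\btR$) stays pinned to the left end throughout any $\rT$- (resp.\ $\rev{\rT}$-) derivation, and conclude that every $\rR$- (resp.\ $\rev{\rQ}$-) step is already a top rewrite. The paper's proof simply cites \cref{lem:blockwise-termination} for the reduction step, whereas you explicitly flag and address the need for type introduction to transfer to \emph{relative} termination---a point the paper leaves implicit but which is indeed justified by the block-decomposition argument underlying that lemma.
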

\begin{proof}
  Without loss of generality (due to \cref{lem:blockwise-termination}), consider only the rewrite sequences that start with some string of the canonical form $\btL (\bZ \vert \bO \vert \tZ \vert \tO \vert \tT)^* \btR$ (resp.\ its reversal).

  Let $\rR \subseteq \rB$ (resp.\ $\rQ \subseteq \rD_T$). The rules in $\rR$ (resp.\ $\rev{\rQ}$) can be applied only at the leftmost end as they are all of the form $\btL \mathmakebox[.57em]{s} \to \btL \mathmakebox[.57em]{t}$ for some $s$, $t$ not containing $\btL$ (resp.\ $\btR \mathmakebox[.67em]{s'} \to \btR \mathmakebox[.67em]{t'}$ for some $s'$, $t'$ not containing $\btR$).

  Assume $\neg \SN(\rR \rel \rT)$ (resp.\ $\neg \SN(\rev{\rQ} \rel \rev{\rT})$). Then there exists an infinite rewrite sequence for $\rT$ (resp. $\rev{\rT}$) where the rules from $\rR$ (resp.\ $\rev{\rQ}$) are applied infinitely many times. As these rules can be applied only at the leftmost end, this implies relative top nontermination, i.e., $\neg \SN(\rR_\mathrm{top} \rel \rT)$ (resp.\ $\neg \SN(\rev{\rQ}_\mathrm{top} \rel \rev{\rT})$).
\end{proof}

\subsection{Rewriting with a Hybrid System}
\label{sec:rewriting-hybrid}

As another alternative to the rewriting system $\rZ$, instead of fully discarding unary in favor of a positional numeral system we may adopt a hybrid between the two approaches and design a system to increment/decrement by $1$ or multiply/divide by $2$ or $3$ in a single rewrite. We will describe an SRS over the symbols $\{\inc, \bZ, \tZ, \btL, \btR\}$ viewed as follows.
\begin{equation*}
  \begin{array}{lcl}
    \inc(x) & = & x + 1 \\
    \bZ(x) & = & 2x \\
    \tZ(x) & = & 3x \\
    \btL(x) & = & 1 \\
    \btR(x) & = & x
  \end{array}
\end{equation*}
In essentially the same manner as we did for the system $\rT$, we have a set of dynamic rules that simulate the application of the Collatz function and a set of auxiliary rules that update the string representation so that the parity of the corresponding number can be recognized in a single step. We have the following 7-rule SRS $\rL$ (designed by Luke Schaeffer).
\begin{equation*}
  \rD_T' =
  \left\{
    \begin{array}[c]{rcl}
      \bZ \btR & \to & \btR \\
      \bZ \inc \btR & \to & \tZ \inc \inc \btR
    \end{array}
  \right\}
  \quad
  \rA' =
  \left\{
    \begin{array}[c]{rcl}
      \bZ \inc \inc & \to & \inc \bZ \\
      \inc \tZ & \to & \tZ \inc \inc \inc \\
      \bZ \tZ & \to & \tZ \bZ
    \end{array}
  \right\}
  \quad
  \rB' =
  \left\{
    \begin{array}[c]{rcl}
      \btL \inc & \to & \btL \bZ \\
      \btL \tZ & \to & \btL \bZ \inc \\
    \end{array}
  \right\}
\end{equation*}
As long as the string ends with $\bZ \btR$ or $\bZ \inc \btR$, the parity of the represented number is apparent without needing to scan the whole string, which allows the Collatz function to be applied in a single rewrite. After the application of a dynamic rule, the string may no longer end with $\bZ \btR$ or $\bZ \inc \btR$, so the auxiliary rules in $\rX' = \rA' \cup \rB'$ push the symbol $\bZ$ towards the rightmost end of the string while preserving its value. Thus, $\rL$ is yet another rewriting system that simulates the iterated application of the Collatz function.
\begin{theorem}
  $\rL$ is terminating if and only if $T$ is convergent.
\end{theorem}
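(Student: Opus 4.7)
The plan is to mirror the proof of \cref{thm:simulation} nearly line for line, substituting $\rL$, $\rX'$, $\rD_T'$, and the new symbols for their counterparts. First I would reduce attention to canonical strings of the form $\btL (\inc \vert \bZ \vert \tZ)^* \btR$ via a type-introduction argument analogous to \cref{lem:blockwise-termination}: declare $\btL : \sigma \to \tau$, $\inc, \bZ, \tZ : \sigma \to \sigma$, $\btR : \rho \to \sigma$, verify that $\rL$ is well-typed, and show that the subsystems $\rL \setminus \rB'$ and $\rL \setminus \rD_T'$ terminate individually. The latter is routine via matrix interpretations on the reversed systems, exactly as in \cref{lem:A-termination}.

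Next I would establish the arithmetic backbone: every rule in $\rX'$ preserves $\Val$, every rule in $\rD_T'$ applies $T$ to $\Val$, and the unique canonical representation of $1$ is $\btL \btR$, to which no rule applies. All of these are direct computations using the interpretations stated just before $\rL$. For example, the identity $2(x+1) = 2x+2$ witnesses $\bZ \inc \inc \to \inc \bZ$; the identity $3(x+1) = 3x+3$ witnesses $\inc \tZ \to \tZ \inc \inc \inc$; and $T(2x+1) = 3x+2$ witnesses $\bZ \inc \btR \to \tZ \inc \inc \btR$. The rules in $\rB'$ preserve $\Val$ because $\btL(x) = 1$ is constant, so $\inc$ and $\bZ$ agree on its output. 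Uniqueness of $\btL \btR$ as the representation of $1$ follows because $\inc$, $\bZ$, and $\tZ$ all strictly increase positive values.

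The main content is the progress claim: for every canonical $N$ with $\Val(N) = \nu > 1$, there exists a sequence $N \to_{\rX'}^* N' \to_{\rD_T'} N''$ with $\Val(N'') = T(\nu)$. The strategy is to normalize $N$ into a suffix ending in $\bZ \btR$ or $\bZ \inc \btR$ so that a rule of $\rD_T'$ fires. Specifically, I would first use $\bZ \tZ \to \tZ \bZ$ and $\inc \tZ \to \tZ \inc \inc \inc$ to push every $\tZ$ leftward, leaving a suffix over $\{\bZ, \inc\}$ only; then use $\bZ \inc \inc \to \inc \bZ$ to collapse pairs of $\inc$s past each $\bZ$, driving a $\bZ$ to be adjacent to $\btR$ or separated from it by a single $\inc$. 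If the left end ever becomes $\btL \tZ$ or $\btL \inc$ with no $\bZ$ available to push rightward, the rules in $\rB'$ inject a new $\bZ$ next to $\btL$, replenishing the supply.

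With these pieces in place, both directions follow exactly as in \cref{thm:simulation}: an infinite $\rL$-rewrite sequence yields a nonconvergent $T$-trajectory (using that $\rX'$ terminates on its own, so $\rD_T'$ fires infinitely often), and a nonconvergent $T$-trajectory lifts to an infinite $\rL$-sequence by iterating the progress claim. I expect the main obstacle to be the progress claim itself: the expansion rule $\inc \tZ \to \tZ \inc \inc \inc$ introduces fresh $\inc$s that must be renormalized, and $\bZ \inc \inc \to \inc \bZ$ requires $\inc$s to appear in pairs to the right of a $\bZ$. Termination of the normalization procedure must therefore be verified by a lexicographic measure such as (number of $\tZ$s, number of $\inc$s to the right of the rightmost $\bZ$), and one must check that the $\nu > 1$ hypothesis guarantees at least one $\bZ$ eventually reaches the right end.
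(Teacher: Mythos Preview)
Your proposal is correct and matches the paper's approach exactly: the paper's own proof consists of a single sentence stating that the argument is ``essentially the same as that of \cref{thm:simulation} once it is checked that the dynamic rules do indeed apply $T$ and that the auxiliary rules do indeed preserve the value of a string under the functional view of the symbols,'' which is precisely the program you have laid out. Your identification of the progress claim as the only nonroutine step, and the need for a suitable termination measure for the normalization procedure, is apt and goes beyond what the paper spells out.
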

The proof of the above theorem is essentially the same as that of \cref{thm:simulation} once it is checked that the dynamic rules do indeed apply $T$ and that the auxiliary rules do indeed preserve the value of a string under the functional view of the symbols.

\section{Automated Proofs}
\label{sec:results}

We adapt the rewriting system $\rT$ for different generalized Collatz functions to explore the effectiveness of the mixed-base scheme on weakened variants of the Collatz conjecture.

All of the rewriting systems, scripts to reproduce the proofs, and our minimal implementation of a termination prover are available at \url{https://github.com/emreyolcu/rewriting-collatz}. In particular, any omitted relative termination proof that we refer to in this section is available at \url{https://github.com/emreyolcu/rewriting-collatz/tree/main/proofs}. We ran the large-scale experiments in this section on AWS EC2 \textsf{m5.metal} instances.

\subsection{A Simple Example}
\label{sec:convergence-w}

Earlier we mentioned a generalized Collatz function $W$ as a simple example that could serve as a sanity check for an automated method aiming to solve Collatz-like problems. With the mixed binary--ternary scheme, this function can be seen to be simulated by the system $\rW' = \{\bZ \btR \to \tZ \btR\} \cup \rX$. A small matrix interpretations proof is found for this system in less than a second, in contrast to its variant $\rW$ that uses unary representations for which no automated proof is known. Additionally, the function $W$ can be simulated by the rewriting system $\rW^{\prime\prime} = \{\bZ \btR \to \tZ \btR\} \cup \rX'$ adapted from the hybrid system $\rL$, although we were unable to find an automated proof of termination for $\rW^{\prime\prime}$.

\begin{theorem}
  $\SN(\rW')$.
\end{theorem}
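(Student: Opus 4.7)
The plan is to mirror the simulation argument of \cref{thm:simulation}: show that $\rW'$ simulates the generalized Collatz function $W$ on canonical-form strings, and then use the obvious convergence of $W$ to rule out infinite derivations. Convergence of $W$ is easy: writing $n = 2^k m$ with $m$ odd, one checks inductively that $W^j(n) = 3^j 2^{k-j} m$ for $j \leq k$ and $W^{k+1}(n) = \bot$, so every $W$-trajectory reaches $\bot$.

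First I would establish an analogue of \cref{lem:blockwise-termination} for $\rW'$. The same typing $\btL \colon \sigma \to \tau$, $\bZ, \bO, \tZ, \tO, \tT \colon \sigma \to \sigma$, $\btR \colon \rho \to \sigma$ renders $\rW'$ well-typed, so by type introduction it suffices to handle strings of type $\rho \to \tau$ (the canonical form). For the other three type shapes, the applicable rules lie in either $\rW' \setminus \rB = \{\bZ\btR \to \tZ\btR\} \cup \rA$ or $\rW' \setminus \{\bZ\btR \to \tZ\btR\} = \rX$; the latter terminates by \cref{lem:A-termination}, and the former terminates via essentially the interpretations already used in that lemma for $\rev{\rT \setminus \rB}$, namely $[\bZ](x) = [\bO](x) = 2x+1$, $[\tZ](x) = [\tO](x) = [\tT](x) = 2x$, $[\btR](x) = x$, under which the reversed rule $\btR\bZ \to \btR\tZ$ becomes $2x+1 > 2x$, and the rules of $\rA$ are handled exactly as in \cref{lem:A-termination}.

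Next, on canonical-form strings I would replay the proof structure of \cref{thm:simulation}. The auxiliary rules in $\rX$ preserve $\Val$ (this is the content of \cref{claim:equiv-rewrite-num} for $\rX$, which is reused verbatim), and the dynamic rule $\bZ\btR \to \tZ\btR$ sends the value $2x$ of a string ending in $\bZ\btR$ to the value $3x = W(2x)$. Now suppose for contradiction an infinite $\rW'$-derivation $(N_i)_{i \geq 0}$ starts from a canonical-form string. Because $\SN(\rX)$, the dynamic rule must fire at infinitely many indices $i_0 < i_1 < \dotsb$, and the subsequence $\Val(N_0), \Val(N_{i_0+1}), \Val(N_{i_1+1}), \dotsc$ is an infinite sequence in $\Np$ in which each term is $W$ of the previous one (the dynamic rule applies only when the value is even, so $W$ is defined). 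This contradicts convergence of $W$.

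The main obstacle is genuinely small: the only non-bookkeeping step is reusing the reversal-based interpretations to get $\SN(\{\bZ\btR \to \tZ\btR\} \cup \rA)$, which is immediate from the proof of \cref{lem:A-termination} since $\bZ\btR \to \tZ\btR$ behaves identically to $\bO\btR \to \tT\btR$ under those interpretations. As the paper notes, a direct SAT-found matrix interpretation also works; the simulation route simply exposes why termination holds.
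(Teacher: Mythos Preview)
Your argument is correct and complete, but it follows a genuinely different route from the paper's. The paper proves $\SN(\rW')$ \emph{directly via matrix interpretations}: it exhibits explicit $2$-dimensional natural matrix interpretations over $\N^2$ showing $\SN(\{\btR\bZ \to \btR\tZ\} \rel \rev{\rX})$, then combines this with the already-known $\SN(\rev{\rX})$ (from \cref{lem:A-termination}) via \cref{thm:relative-termination} and \cref{lem:reversal}. No simulation argument, no appeal to the convergence of $W$, and no blockwise lemma is invoked; the proof is a pure relative-termination certificate found by the SAT solver, which is exactly the point being made in \cref{sec:convergence-w} about the mixed-base scheme succeeding where the unary system $\rW$ resists automated proof.

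Your approach instead adapts the semantic machinery of \cref{thm:simulation}: you establish a blockwise-termination lemma for $\rW'$ (correctly reusing the typing and noting that the non-canonical type shapes only exercise the subsystems $\{\bZ\btR \to \tZ\btR\} \cup \rA$ or $\rX$, both terminating by the interpretations of \cref{lem:A-termination}), observe that $\rX$ preserves $\Val$ while the single dynamic rule applies $W$, and then invoke the elementary convergence of $W$. This buys conceptual transparency---it explains \emph{why} $\rW'$ terminates rather than merely certifying it---at the cost of bypassing the paper's automation narrative. The paper's proof, conversely, demonstrates the thesis of the section: that the mixed-base encoding makes the problem amenable to off-the-shelf matrix interpretations, in contrast to the unary $\rW$.
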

\begin{proof}
  The interpretations below prove $\SN(\{\btR \bZ \to \btR \tZ\} \rel \rev{\rX})$:
  {\setlength{\jot}{1em}
    \begin{gather*}
      [\bZ](\vx) =
      \begin{bmatrix}
        1 & 0 \\
        0 & 1
      \end{bmatrix}
      \vx +
      \begin{bmatrix}
        1 \\
        1
      \end{bmatrix}
      \qquad
      [\bO](\vx) =
      \begin{bmatrix}
        1 & 0 \\
        0 & 0
      \end{bmatrix}
      \vx +
      \begin{bmatrix}
        1 \\
        0
      \end{bmatrix} \\
      [\btL](\vx) =
      \begin{bmatrix}
        1 & 0 \\
        0 & 0
      \end{bmatrix}
      \vx
      \qquad
      [\btR](\vx) =
      \begin{bmatrix}
        1 & 2 \\
        0 & 0
      \end{bmatrix}
      \vx
      \\
      [\tZ](\vx) =
      \begin{bmatrix}
        1 & 0 \\
        0 & 1
      \end{bmatrix}
      \vx +
      \begin{bmatrix}
        2 \\
        0
      \end{bmatrix}
      \qquad
      [\tO](\vx) =
      \begin{bmatrix}
        1 & 0 \\
        1 & 0
      \end{bmatrix}
      \vx +
      \begin{bmatrix}
        2 \\
        2
      \end{bmatrix}
      \qquad
      [\tT](\vx) =
      \begin{bmatrix}
        1 & 0 \\
        1 & 0
      \end{bmatrix}
      \vx +
      \begin{bmatrix}
        2 \\
        2
      \end{bmatrix}
    \end{gather*}
  }%
  With the above interpretations of the symbols, the rules of $\rev{\rW'}$ satisfy the following relations, which, by \cref{thm:monotone-algebra}, proves that $\{\btR \bZ \to \btR \tZ\}$ is terminating relative to $\rev{\rX}$.
  \begin{longtable*}{IJK}
    {[}\btR \bZ](\vx) =
    \begin{bmatrix}
      1 & 2 \\
      0 & 0
    \end{bmatrix}
    \vx +
    \begin{bmatrix}
      3 \\
      0
    \end{bmatrix}
    & > &
    \begin{bmatrix}
      1 & 2 \\
      0 & 0
    \end{bmatrix}
    \vx +
    \begin{bmatrix}
      2 \\
      0
    \end{bmatrix}
    = {[}\btR \tZ](\vx)
    \\[1.825em]
    {[}\tZ \bZ](\vx) =
    \begin{bmatrix}
      1 & 0 \\
      0 & 1
    \end{bmatrix}
    \vx +
    \begin{bmatrix}
      3 \\
      1
    \end{bmatrix}
    & \gtrsim &
    \begin{bmatrix}
      1 & 0 \\
      0 & 1
    \end{bmatrix}
    \vx +
    \begin{bmatrix}
      3 \\
      1
    \end{bmatrix}
    = {[}\bZ \tZ](\vx)
    \\[1.825em]
    {[}\tO \bZ](\vx) =
    \begin{bmatrix}
      1 & 0 \\
      1 & 0
    \end{bmatrix}
    \vx +
    \begin{bmatrix}
      3 \\
      3
    \end{bmatrix}
    & \gtrsim &
    \begin{bmatrix}
      1 & 0 \\
      0 & 0
    \end{bmatrix}
    \vx +
    \begin{bmatrix}
      3 \\
      0
    \end{bmatrix}
    = {[}\bO \tZ](\vx)
    \\[1.825em]
    {[}\tT \bZ](\vx) =
    \begin{bmatrix}
      1 & 0 \\
      1 & 0
    \end{bmatrix}
    \vx +
    \begin{bmatrix}
      3 \\
      3
    \end{bmatrix}
    & \gtrsim &
    \begin{bmatrix}
      1 & 0 \\
      1 & 0
    \end{bmatrix}
    \vx +
    \begin{bmatrix}
      3 \\
      3
    \end{bmatrix}
    = {[}\bZ \tO](\vx)
    \\[1.825em]
    {[}\tZ \bO](\vx) =
    \begin{bmatrix}
      1 & 0 \\
      0 & 0
    \end{bmatrix}
    \vx +
    \begin{bmatrix}
      3 \\
      0
    \end{bmatrix}
    & \gtrsim &
    \begin{bmatrix}
      1 & 0 \\
      0 & 0
    \end{bmatrix}
    \vx +
    \begin{bmatrix}
      3 \\
      0
    \end{bmatrix}
    = {[}\bO \tO](\vx)
    \\[1.825em]
    {[}\tO \bO](\vx) =
    \begin{bmatrix}
      1 & 0 \\
      1 & 0
    \end{bmatrix}
    \vx +
    \begin{bmatrix}
      3 \\
      3
    \end{bmatrix}
    & \gtrsim &
    \begin{bmatrix}
      1 & 0 \\
      1 & 0
    \end{bmatrix}
    \vx +
    \begin{bmatrix}
      3 \\
      3
    \end{bmatrix}
    = {[}\bZ \tT](\vx)
    \\[1.825em]
    {[}\tT \bO](\vx) =
    \begin{bmatrix}
      1 & 0 \\
      1 & 0
    \end{bmatrix}
    \vx +
    \begin{bmatrix}
      3 \\
      3
    \end{bmatrix}
    & \gtrsim &
    \begin{bmatrix}
      1 & 0 \\
      0 & 0
    \end{bmatrix}
    \vx +
    \begin{bmatrix}
      3 \\
      0
    \end{bmatrix}
    = {[}\bO \tT](\vx)
    \\[1.825em]
    {[}\tZ \btL](\vx) =
    \begin{bmatrix}
      1 & 0 \\
      0 & 0
    \end{bmatrix}
    \vx +
    \begin{bmatrix}
      2 \\
      0
    \end{bmatrix}
    & > &
    \begin{bmatrix}
      1 & 0 \\
      0 & 0
    \end{bmatrix}
    \vx +
    \begin{bmatrix}
      1 \\
      0
    \end{bmatrix}
    = {[}\bO \btL](\vx)
    \\[1.825em]
    {[}\tO \btL](\vx) =
    \begin{bmatrix}
      1 & 0 \\
      1 & 0
    \end{bmatrix}
    \vx +
    \begin{bmatrix}
      2 \\
      2
    \end{bmatrix}
    & \gtrsim &
    \begin{bmatrix}
      1 & 0 \\
      0 & 0
    \end{bmatrix}
    \vx +
    \begin{bmatrix}
      2 \\
      2
    \end{bmatrix}
    = {[}\bZ \bZ \btL](\vx)
    \\[1.825em]
    {[}\tT \btL](\vx) =
    \begin{bmatrix}
      1 & 0 \\
      1 & 0
    \end{bmatrix}
    \vx +
    \begin{bmatrix}
      2 \\
      2
    \end{bmatrix}
    & \gtrsim &
    \begin{bmatrix}
      1 & 0 \\
      0 & 0
    \end{bmatrix}
    \vx +
    \begin{bmatrix}
      2 \\
      0
    \end{bmatrix}
    = {[}\bO \bZ \btL](\vx)
  \end{longtable*}
  \noindent By \cref{lem:A-termination,lem:reversal}, $\rev{\rX}$ is terminating. As a result, $\rev{\rW'}$ is terminating, which by \cref{lem:reversal} implies that $\rW'$ is terminating.
\end{proof}

\subsection{Farkas' Variant}
\label{sec:farkas-variant}

Let $\NO = \{1,3,5,\dots\}$ denote the odd natural numbers. Farkas~\cite{Far05} studied a slight modification $F' \colon \NO \to \NO$ of the Collatz function for which it becomes possible to prove convergence via induction. We consider automatically proving the convergence of this function as another test case for the mixed-base scheme that is easier than the Collatz conjecture without being entirely trivial. Farkas defined this function as
\begin{equation*}
  F'(n) =
  \begin{cases}
    \frac{n}{3} & \text{if } n \equiv 0 \pmod 3 \\
    \frac{n+1}{2} & \text{if } n \not\equiv 0 \pmod 3 \text{ and } n \equiv 1 \pmod 4 \\
    \frac{3n+1}{2} & \text{if } n \not\equiv 0 \pmod 3 \text{ and } n \equiv 3 \pmod 4.
  \end{cases}
\end{equation*}

In this paper, we define another function $F \colon \N \to \N$ that resembles the Collatz function more closely than Farkas' $F'$ (with respect to the definitions of the cases) while being equivalent to $F'$ in terms of convergence. It is obtained by introducing an additional case in the Collatz function for $n \equiv 1 \pmod 3$ and applying $T$ otherwise:
\begin{equation*}
  F(n) =
  \begin{cases}
    \frac{n-1}{3} & \text{if } n \equiv 1 \pmod 3 \\
    \frac{n}{2} & \text{if } n \equiv 0 \text{ or } n \equiv 2 \pmod 6 \\
    \frac{3n+1}{2} & \text{if } n \equiv 3 \text{ or } n \equiv 5 \pmod 6
  \end{cases}
\end{equation*}
Given an $F'$-trajectory, the bijection $n \mapsto (n-1)/2$ applied to each of its iterates maps the trajectory to the corresponding $F$-trajectory. Thus, an $F'$-trajectory reaches $1$ if and only if the corresponding $F$-trajectory reaches $0$. Furthermore, an $F$-trajectory reaches $0$ if and only if it reaches $1$, which is a convenient fact since in our string representations we do not represent $0$, so in order to prove that $F$ is convergent it suffices to show that for all $n \in \Np$ the trajectory $F^{\N}(n)$ contains $1$.

Note that $F$ can also be viewed as a generalized Collatz function with 5 cases, which makes it straightforward to translate the cases of $F$ into a set $\rD_F$ of dynamic rules (shown below) to obtain a rewriting system that simulates the iterated application of $F$ on positive integers.
\begin{equation*}
  \begin{array}{rcl}
    \tO \btR & \to & \btR \\
    \tZ \bZ \btR & \to & \tZ \btR \\
    \tO \bZ \btR & \to & \tO \btR \\
    \tO \bO \btR & \to & \tO \tT \btR \\
    \tT \bO \btR & \to & \tT \tT \btR
  \end{array}
\end{equation*}
Termination of the rewriting system $\rF = \rD_F \cup \rX$ is equivalent to the convergence of $F$. The proof of the equivalence is essentially the same as that of \cref{thm:simulation} except for the step where we construct a nonterminating rewrite sequence from a nonconvergent trajectory. To construct such a rewrite sequence for $\rF$, we take any nonconvergent $F$-trajectory and write the first number in the trajectory in \emph{ternary} (instead of binary). If the leading digit turns out to be $2_3$ we replace it by $1_0 0_2$ since the leading digit is always assumed to be $1_0$ in our mixed-base representations. Then repeatedly performing the rightmost possible rewrite results in a rewrite sequence that simulates the $F$-trajectory.

Farkas gave an inductive proof of convergence for $F'$ via case analysis. We found an automated proof that $\rF$ is terminating via arctic matrix interpretations. Below, we present both a manual proof of convergence for $F$ and the automated proof of $\SN(\rF)$. It is worth mentioning that the default configurations of the existing termination tools (e.g., \AProVE{}, \Matchbox{}) are too conservative to prove the termination of this system, but after their authors tweaked the strategies they were also able to find automated proofs via arctic matrix interpretations. It also remains an interesting question whether the automated proof somehow captures the argument in the inductive proof.

\begin{theorem}
  For all $n \in \Np$, the trajectory $F^{\N}(n)$ contains $1$.
\end{theorem}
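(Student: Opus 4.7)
The plan is to transfer the problem to Farkas' function $F'$ via the bijection $g\colon \NO \to \N$ defined by $g(m) = (m-1)/2$. A short case check verifies the semiconjugacy $g \circ F' = F \circ g$: the three cases of $F'$ modulo $12$ match the five cases of $F$ modulo $6$ under the translation by $g$. Combined with the remark in the excerpt that an $F$-trajectory reaches $0$ iff it reaches $1$, and that $g(1) = 0$, this reduces the theorem to showing that the $F'$-trajectory of every $m \in \NO$ contains $1$.

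I would prove the latter by strong induction on $m$. The base case $m = 1$ is immediate since $F'(1) = 1$. For $m > 1$, the subcases $3 \mid m$ and ``$3 \nmid m$ with $m \equiv 1 \pmod{4}$'' both give $F'(m) < m$, so the inductive hypothesis applies directly. The remaining subcase, $3 \nmid m$ with $m \equiv 3 \pmod{4}$, has $F'(m) = (3m+1)/2 > m$ and requires looking several iterates ahead.

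For this hard subcase, set $a = v_{2}(m+1) \geq 2$. A short induction on $i$ establishes that, for $0 \leq i \leq a-1$,
\begin{equation*}
  F'^{i}(m) \;=\; \frac{3^{i}(m+1)}{2^{i}} \,-\, 1,
\end{equation*}
with case 3 of $F'$ applicable at each step with $0 \leq i \leq a-2$. All the relevant congruences can be read off from this identity: $F'^{i}(m) \equiv 2 \pmod 3$ for every $i \geq 1$ (keeping the iterates out of the ``divide by $3$'' branch), and $F'^{i}(m) \equiv 3 \pmod 4$ precisely when $i \leq a-2$. At $i = a-1$ the iterate is $\equiv 1 \pmod 4$, so case 2 fires and produces $F'^{a}(m) = 3^{a-1}\,q$ with $q = (m+1)/2^{a}$ odd. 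Since $a \geq 2$, this value is a multiple of $3$; applying case 1 exactly $a-1$ times lowers it to $q \leq (m+1)/2^{a} < m$. The inductive hypothesis on $q$ then closes out the $F'$-argument, and transporting the trajectory back through $g$ yields the $F$-statement.

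The main obstacle is the modular bookkeeping in the hard subcase: verifying that the applicable branch of $F'$ agrees with the intended one at each of the $2a-1$ steps, and that the final value is strictly below $m$. The closed-form identity makes this essentially transparent once proved, so the only genuine work is its inductive verification, together with the case analysis establishing $g \circ F' = F \circ g$.
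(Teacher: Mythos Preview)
Your proposal is correct and follows essentially the same approach as the paper's manual proof. The paper works directly with $F$ rather than transferring to $F'$, but since it explicitly credits the argument as ``in the manner of Farkas'' and notes the bijection $n \mapsto (n-1)/2$, your choice to run the induction on the $F'$ side is a cosmetic difference, not a substantive one. Your organization of the hard subcase via $a = v_2(m+1)$ is slightly cleaner than the paper's definition of $k$ as a supremum over indices where the $(3N+1)/2$ branch fires (the paper then backs out $N_1 = 2^k(2M+1)-1$ a posteriori, whereas you have this from the start), but the sequence of branch applications and the final descent to a strictly smaller value are identical in both proofs.
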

\begin{proof}[Proof (in the manner of Farkas~{\cite[Theorem~1]{Far05}})]
  We proceed by induction. For $n = 1$ the result holds since $F^{\N}(1) \ = (1,0,0,\dots)$. Assuming it holds for all positive integers less than $n$, we will show that it holds for $n$. In particular, we will show that the trajectory of $n$ reaches a number strictly smaller than $n$, which implies by the induction hypothesis that the trajectory contains $1$. We split into three cases.
  \begin{enumerate}[(i)]
  \item Assume $n \equiv 1 \pmod 3$. Then $F(n) = (n-1)/3$ is smaller than $n$.
  \item Assume $n \equiv 0$ or $n \equiv 2 \pmod 6$. Then $F(n) = n/2$ is smaller than $n$.
  \item Assume $n \equiv 3$ or $n \equiv 5 \pmod 6$. Denote $n$ by $N_1$, and consider the trajectory $F^{\N}(N_1) = (N_1, N_2, N_3, \dots)$. Let
    \begin{equation*}
      k = \sup \left\{i \in \Np \mathrel{\bigg\vert} N_j = \frac{3 N_{j-1} + 1}{2} \text{ for all } 1 < j \leq i\right\}.
    \end{equation*}
    We cannot have $k = \infty$, i.e., $F$ cannot apply $N \mapsto (3N+1)/2$ indefinitely, since for such a sequence we have
    \begin{equation*}
      N_j + 1 = \left(\frac{3}{2}\right)(N_{j-1}+1) = \left(\frac{3}{2}\right)^{j-1}(N_1 + 1).
    \end{equation*}
    As all the elements in the trajectory have to be integers, $k-1$ cannot exceed the number of times that $N_1+1$ can be divided by $2$, so $k$ is finite. Consider
    \begin{equation*}
      N_{k+1} = \left(\frac{3}{2}\right)^k (N_1+1)-1.
    \end{equation*}
    It is of the form $3^k \cdot L - 1$ for some positive integer $L$, so we know that either $N_{k+1} \equiv 2 \pmod 6$ or $N_{k+1} \equiv 5 \pmod 6$. Due to the way $k$ is defined, $N_{k+1}$ cannot be congruent to $5 \pmod 6$, so we have $N_{k+1} \equiv 2 \pmod 6$. In particular, $N_{k+1}$ is even, so $\left(\frac{3}{2}\right)^k (N_1+1)$ is odd, which in turn implies that $\frac{N_1 + 1}{2^k}$ is odd, i.e., we have $N_1 = 2^k \cdot (2M+1) - 1$ for some natural number $M$. Then, $N_{k+1}$ satisfies
    \begin{equation*}
      N_{k+1} = 3^k \cdot (2M+1) - 1 = 3^k \cdot 2M + 3^k - 1.
    \end{equation*}
    Since $N_{k+1} \equiv 2 \pmod 6$, we can deduce that
    \begin{equation*}
      F(N_{k+1}) = \frac{N_{k+1}}{2} = \frac{3^k \cdot 2M + 3^k - 1}{2} \equiv 1 \pmod 3,
    \end{equation*}
    and furthermore,
    \begin{equation*}
      F^2(N_{k+1}) = \frac{\frac{3^k \cdot 2M + 3^k - 1}{2} - 1}{3} = 3^{k-1} \cdot 2M + 3^{k-1} - 1 \equiv 2 \pmod 6.
    \end{equation*}
    Repeatedly applying $F$ in the above manner gives $F^{2k}(N_{k+1}) = 2M$. Since $k \geq 1$, we have $2M < 2^k \cdot (2M + 1) - 1 = N_1 = n$, and so the trajectory $F^{\N}(n)$ reaches a number strictly smaller than $n$. \qedhere
  \end{enumerate}
\end{proof}
\begin{proof}[Proof (via arctic matrix interpretations)]
  We will show $\SN(\rF)$. By \cref{lem:A-termination,lem:reversal}, we have $\SN(\rev{\rX})$. Consider the arctic matrix interpretations below. Recall that $-$ indicates $-\infty$ in the matrices. Also recall that arctic addition is $\max$ and arctic multiplication is $+$.
  {\setlength{\jot}{1em}
    \begin{NiceMatrixBlock}[auto-columns-width]
      \begin{equation}\label{eq:farkas-interpretations}
        \begin{gathered}
          [\bZ](\vx) =
          \begin{bNiceMatrix}
            \fademinfty & \fademinfty & \fademinfty & 2 & \fademinfty \\
            \fademinfty & 2 & 0 & \fademinfty & \fademinfty \\
            2 & \fademinfty & \fademinfty & \fademinfty & \fademinfty \\
            \fademinfty & \fademinfty & \fademinfty & \fademinfty & \fademinfty \\
            \fademinfty & \fademinfty & \fademinfty & \fademinfty & \fademinfty
          \end{bNiceMatrix}
          \vx \oplus
          \begin{bNiceMatrix}
            0 \\
            \fademinfty \\
            \fademinfty \\
            \fademinfty \\
            \fademinfty
          \end{bNiceMatrix}
          \qquad
          [\bO](\vx) =
          \begin{bNiceMatrix}
            \fademinfty & \fademinfty & \fademinfty & \fademinfty & 2 \\
            0 & 2 & 0 & \fademinfty & 0 \\
            2 & \fademinfty & 2 & \fademinfty & \fademinfty \\
            \fademinfty & \fademinfty & \fademinfty & \fademinfty & \fademinfty \\
            \fademinfty & \fademinfty & \fademinfty & \fademinfty & \fademinfty
          \end{bNiceMatrix}
          \vx \oplus
          \begin{bNiceMatrix}
            0 \\
            \fademinfty \\
            \fademinfty \\
            \fademinfty \\
            \fademinfty
          \end{bNiceMatrix}
          \\
          [\btL](\vx) =
          \begin{bNiceMatrix}
            0 \\
            2 \\
            \fademinfty \\
            \fademinfty \\
            4
          \end{bNiceMatrix}
          \qquad
          [\btR](\vx) =
          \begin{bNiceMatrix}
            0 & \fademinfty & \fademinfty & \fademinfty & \fademinfty \\
            \fademinfty & \fademinfty & \fademinfty & \fademinfty & \fademinfty \\
            \fademinfty & \fademinfty & \fademinfty & \fademinfty & \fademinfty \\
            \fademinfty & \fademinfty & \fademinfty & \fademinfty & \fademinfty \\
            \fademinfty & \fademinfty & \fademinfty & \fademinfty & \fademinfty
          \end{bNiceMatrix}
          \vx
          \\
          [\tZ](\vx) =
          \begin{bNiceMatrix}
            0 & 4 & 0 & \fademinfty & \fademinfty \\
            \fademinfty & 4 & \fademinfty & \fademinfty & \fademinfty \\
            \fademinfty & 4 & 0 & \fademinfty & \fademinfty \\
            0 & 3 & 0 & \fademinfty & \fademinfty \\
            \fademinfty & \fademinfty & \fademinfty & \fademinfty & \fademinfty
          \end{bNiceMatrix}
          \vx
          \qquad
          [\tO](\vx) =
          \begin{bNiceMatrix}
            1 & \fademinfty & \fademinfty & \fademinfty & \fademinfty \\
            \fademinfty & 4 & 0 & \fademinfty & \fademinfty \\
            \fademinfty & 4 & 0 & \fademinfty & \fademinfty \\
            0 & \fademinfty & \fademinfty & \fademinfty & \fademinfty \\
            0 & 3 & 0 & \fademinfty & \fademinfty
          \end{bNiceMatrix}
          \vx
          \qquad
          [\tT](\vx) =
          \begin{bNiceMatrix}
            0 & \fademinfty & 0 & \fademinfty & \fademinfty \\
            \fademinfty & 4 & \fademinfty & \fademinfty & \fademinfty \\
            0 & \fademinfty & 1 & \fademinfty & 0 \\
            \fademinfty & \fademinfty & \fademinfty & \fademinfty & \fademinfty \\
            0 & \fademinfty & 0 & \fademinfty & 0
          \end{bNiceMatrix}
          \vx
        \end{gathered}
      \end{equation}
    \end{NiceMatrixBlock}%
  }%
  Above interpretations prove $\SN(\rev{\rD_F}_\mathrm{top} \rel \rev{\rX})$ by \cref{thm:monotone-algebra}. (See \cref{sec:farkas-interpretations} for the corresponding interpretations of the rules of $\rev{\rF}$.) We then get $\SN(\rev{\rD_F} \rel \rev{\rX})$ by \cref{lem:T-top-termination}. As we know that $\rev{\rX}$ is terminating, by \cref{thm:relative-termination} we can conclude $\SN(\rev{\rD_F} \cup \rev{\rX})$, which implies $\SN(\rF)$ via \cref{lem:reversal}.
\end{proof}

\subsection{Subsets of the Collatz System}
\label{sec:subsystems-T}

It is also interesting to consider whether we can automatically prove termination of proper subsets of $\rT$. Specifically, we considered the 11 subsystems obtained by leaving out a single rewriting rule from $\rT$, and we found termination proofs via matrix interpretations for all of the 11 subproblems.

The reason for our interest in these problems is threefold:
\begin{enumerate}
\item Termination of $\rT$ implies termination of all of its subsystems, so proving its termination is at least as difficult a task as proving termination of the 11 subsystems. Therefore, the subproblems serve as additional sanity checks that an automated approach aspiring to succeed for the Collatz conjecture ought to be able to pass. Furthermore, they form an additional source of weakened variants that give us a heuristic idea of the suitable choice of parameters to use when searching for interpretations for the full problem.
\item When proving termination in a stepwise manner, we solve a sequence of relative termination problems. Having proved termination of all 11 subsystems is a partial solution to the full problem, since it implies that for any single rule $\ell \to r \in \rT$, proving $\SN(\{\ell \to r\} \rel \rT)$ settles the Collatz conjecture.
\item After the removal of a rule, the termination of the remaining system still encodes a valid mathematical question about the Collatz trajectories. The question of the termination of a proper subset is equivalent to asking if every corresponding Collatz trajectory that does not require the use of the left-out rule is convergent. For instance, removing $\{\bO \btR \to \tT \btR\}$ deletes the odd case of the Collatz function, which gives a clearly convergent function. Removing the auxiliary rules in $\rX$ lead to potentially more interesting questions: For instance, the termination of $\rT \setminus \{\bO \tO \to \tT \bZ\}$ implies that the trajectories where the pattern $\bO \tO$ is never encountered during the simulation are all convergent.

  Note that this type of inquiry is less meaningful for the unary system $\rZ$ as its simulation of the Collatz function depends crucially on each rule being present. Leaving out some subset of the rules from $\rZ$ causes the system to become terminating for a trivial reason, i.e., the computation it expresses no longer corresponds to the iterated application of some function.
\end{enumerate}

\begin{example}
  As an instance of leaving out a rule, consider the subsystem $\rT \setminus \{\bZ \tO \to \tZ \bO\}$. There is a single-step natural matrix interpretations proof that this system is terminating:
  {\setlength{\jot}{1em}
    \begin{gather*}
      [\bZ](\vx) =
      \begin{bmatrix}
        1 & 1 \\
        1 & 0
      \end{bmatrix}
      \vx
      \qquad
      [\bO](\vx) =
      \begin{bmatrix}
        1 & 3 \\
        3 & 4
      \end{bmatrix}
      \vx +
      \begin{bmatrix}
        1 \\
        1
      \end{bmatrix} \\
      [\btL](\vx) =
      \begin{bmatrix}
        1 & 5 \\
        0 & 0
      \end{bmatrix}
      \vx
      \qquad
      [\btR](\vx) =
      \begin{bmatrix}
        1 & 0 \\
        1 & 0
      \end{bmatrix}
      \vx +
      \begin{bmatrix}
        1 \\
        1
      \end{bmatrix} \\
      [\tZ](\vx) =
      \begin{bmatrix}
        7 & 2 \\
        2 & 5
      \end{bmatrix}
      \vx +
      \begin{bmatrix}
        2 \\
        1
      \end{bmatrix}
      \qquad
      [\tO](\vx) =
      \begin{bmatrix}
        2 & 1 \\
        1 & 1
      \end{bmatrix}
      \vx +
      \begin{bmatrix}
        1 \\
        0
      \end{bmatrix}
      \qquad
      [\tT](\vx) =
      \begin{bmatrix}
        2 & 2 \\
        2 & 4
      \end{bmatrix}
      \vx +
      \begin{bmatrix}
        0 \\
        2
      \end{bmatrix}
    \end{gather*}
  }%
  It can be checked that under these interpretations the rules of $\rT \setminus \{\bZ \tO \to \tZ \bO\}$ satisfy the following relations:
  \begin{NiceMatrixBlock}[auto-columns-width]
    \begin{longtable*}{IJK}
      {[}\bZ \btR](\vx) =
      \begin{bNiceMatrix}
        2 & 0 \\
        1 & 0
      \end{bNiceMatrix}
      \vx +
      \begin{bNiceMatrix}
        2 \\
        1
      \end{bNiceMatrix}
      & > &
      \begin{bNiceMatrix}
        1 & 0 \\
        1 & 0
      \end{bNiceMatrix}
      \vx +
      \begin{bNiceMatrix}
        1 \\
        1
      \end{bNiceMatrix}
      = {[}\btR](\vx)
      \\[1.825em]
      {[}\bO \btR](\vx) =
      \begin{bNiceMatrix}
        4 & 0 \\
        7 & 0
      \end{bNiceMatrix}
      \vx +
      \begin{bNiceMatrix}
        5 \\
        8
      \end{bNiceMatrix}
      & > &
      \begin{bNiceMatrix}
        4 & 0 \\
        6 & 0
      \end{bNiceMatrix}
      \vx +
      \begin{bNiceMatrix}
        4 \\
        8
      \end{bNiceMatrix}
      = {[}\tT \btR](\vx)
      \\[1.825em]
      {[}\bZ \tZ](\vx) =
      \begin{bNiceMatrix}
        9 &  7 \\
        7 &  2
      \end{bNiceMatrix}
      \vx +
      \begin{bNiceMatrix}
        3 \\
        2
      \end{bNiceMatrix}
      & > &
      \begin{bNiceMatrix}
        9 &  7 \\
        7 &  2
      \end{bNiceMatrix}
      \vx +
      \begin{bNiceMatrix}
        2 \\
        1
      \end{bNiceMatrix}
      = {[}\tZ \bZ](\vx)
      \\[1.825em]
      {[}\bZ \tT](\vx) =
      \begin{bNiceMatrix}
        4 &  6 \\
        2 &  2
      \end{bNiceMatrix}
      \vx +
      \begin{bNiceMatrix}
        2 \\
        0
      \end{bNiceMatrix}
      & > &
      \begin{bNiceMatrix}
        3 &  2 \\
        2 &  1
      \end{bNiceMatrix}
      \vx +
      \begin{bNiceMatrix}
        1 \\
        0
      \end{bNiceMatrix}
      = {[}\tO \bZ](\vx)
      \\[1.825em]
      {[}\bO \tZ](\vx) =
      \begin{bNiceMatrix}
        13 & 17 \\
        29 & 26
      \end{bNiceMatrix}
      \vx +
      \begin{bNiceMatrix}
        6 \\
        11
      \end{bNiceMatrix}
      & > &
      \begin{bNiceMatrix}
        5 & 10 \\
        4 &  7
      \end{bNiceMatrix}
      \vx +
      \begin{bNiceMatrix}
        4 \\
        2
      \end{bNiceMatrix}
      = {[}\tO \bO](\vx)
      \\[1.825em]
      {[}\bO \tO](\vx) =
      \begin{bNiceMatrix}
        5 &  4 \\
        10 &  7
      \end{bNiceMatrix}
      \vx +
      \begin{bNiceMatrix}
        2 \\
        4
      \end{bNiceMatrix}
      & > &
      \begin{bNiceMatrix}
        4 &  2 \\
        6 &  2
      \end{bNiceMatrix}
      \vx +
      \begin{bNiceMatrix}
        0 \\
        2
      \end{bNiceMatrix}
      = {[}\tT \bZ](\vx)
      \\[1.825em]
      {[}\bO \tT](\vx) =
      \begin{bNiceMatrix}
        8 & 14 \\
        14 & 22
      \end{bNiceMatrix}
      \vx +
      \begin{bNiceMatrix}
        7 \\
        9
      \end{bNiceMatrix}
      & > &
      \begin{bNiceMatrix}
        8 & 14 \\
        14 & 22
      \end{bNiceMatrix}
      \vx +
      \begin{bNiceMatrix}
        4 \\
        8
      \end{bNiceMatrix}
      = {[}\tT \bO](\vx)
      \\[1.825em]
      {[}\btL \tZ](\vx) =
      \begin{bNiceMatrix}
        17 & 27 \\
        0 & 0
      \end{bNiceMatrix}
      \vx +
      \begin{bNiceMatrix}
        7 \\
        0
      \end{bNiceMatrix}
      & > &
      \begin{bNiceMatrix}
        16 & 23 \\
        0 & 0
      \end{bNiceMatrix}
      \vx +
      \begin{bNiceMatrix}
        6 \\
        0
      \end{bNiceMatrix}
      = {[}\btL \bO](\vx)
      \\[1.825em]
      {[}\btL \tO](\vx) =
      \begin{bNiceMatrix}
        7 &  6 \\
        0 & 0
      \end{bNiceMatrix}
      \vx +
      \begin{bNiceMatrix}
        1 \\
        0
      \end{bNiceMatrix}
      & > &
      \begin{bNiceMatrix}
        7 &  6 \\
        0 & 0
      \end{bNiceMatrix}
      \vx +
      \begin{bNiceMatrix}
        0 \\
        0
      \end{bNiceMatrix}
      = {[}\btL \bZ \bZ](\vx)
      \\[1.825em]
      {[}\btL \tT](\vx) =
      \begin{bNiceMatrix}
        12 & 22 \\
        0 & 0
      \end{bNiceMatrix}
      \vx +
      \begin{bNiceMatrix}
        10 \\
        0
      \end{bNiceMatrix}
      & > &
      \begin{bNiceMatrix}
        9 & 22 \\
        0 & 0
      \end{bNiceMatrix}
      \vx +
      \begin{bNiceMatrix}
        7 \\
        0
      \end{bNiceMatrix}
      = {[}\btL \bZ \bO](\vx)
    \end{longtable*}

    Note that the left-out rule $\bZ \tO \to \tZ \bO$ ends up receiving a nondecreasing interpretation as there is no constraint to enforce otherwise:
    \begin{longtable*}{IJK}
      {[}\bZ \tO](\vx) =
      \begin{bNiceMatrix}
        3 & 2 \\
        2 & 1
      \end{bNiceMatrix}
      \vx +
      \begin{bNiceMatrix}
        1 \\
        1
      \end{bNiceMatrix}
      & \not> &
      \begin{bNiceMatrix}
        13 & 29 \\
        17 & 26
      \end{bNiceMatrix}
      \vx +
      \begin{bNiceMatrix}
        11 \\
        8
      \end{bNiceMatrix}
      = {[}\tZ \bO](\vx)
    \end{longtable*}
  \end{NiceMatrixBlock}

  The above interpretations witness for instance that the Collatz trajectory starting at $3$ (represented as $\btL \bO \btR$) is convergent, because the missing rule is not used in any derivation of $1$ ($\btL \btR$) from $3$. Below is an example derivation along with the values each string represents and a vector value of each string under the interpretations above (setting $\vx = (0, 0)$ for the purpose of demonstration). We omit the subscripts from the rewrite relations and simply write $\rew$.
  \begin{equation*}
    \begin{array}[c]{*{17}{@{\,}c@{\,}}}
      3 && 5 && 5 && 8 && 8 && 8 && 4 && 2 && 1 \\[0.25em]
      \btL \underline{\bO \btR} & \to & \underline{\btL \tT} \btR & \to & \btL \bZ \underline{\bO \btR} & \to & \btL \underline{\bZ \tT} \btR & \to & \underline{\btL \tO} \bZ \btR & \to & \btL \bZ \bZ \underline{\bZ \btR} & \to & \btL \bZ \underline{\bZ \btR} & \to & \btL \underline{\bZ \btR} & \to & \btL \btR \\[0.5em]
      \begin{bmatrix} 79 \\ 0 \end{bmatrix} & > & \begin{bmatrix} 78 \\ 0 \end{bmatrix} & > & \begin{bmatrix} 68 \\ 0 \end{bmatrix} & > & \begin{bmatrix} 62 \\ 0 \end{bmatrix} & > & \begin{bmatrix} 41 \\ 0 \end{bmatrix} & > & \begin{bmatrix} 40 \\ 0 \end{bmatrix} & > & \begin{bmatrix} 26 \\ 0 \end{bmatrix} & > & \begin{bmatrix} 14 \\ 0 \end{bmatrix} & > & \begin{bmatrix} 12 \\ 0 \end{bmatrix}
    \end{array}
  \end{equation*}
\end{example}

\cref{table:subsystems} shows the parameters of the matrix interpretations proofs that we found for the termination of each subsystem. For each rule $\ell \to r$ that is left out, we searched for a stepwise proof to show that $\rB \setminus \{\ell \to r\}$ is terminating relative to $\rT \setminus \{\ell \to r\}$ (freely utilizing weakly monotone algebras due to \cref{lem:T-top-termination}). Such a proof requires at most three steps since there are at most three rules in $\rB \setminus \{\ell \to r\}$. In the table, we report the smallest parameters (in terms of matrix dimension) that work for all of these steps. As we already know that $\SN(\rT \setminus \rB)$ holds (by \cref{lem:A-termination}), the interpretations found allow us to conclude the termination of each subsystem. This is not the only way to prove termination of the subsystems; however, we chose this uniform strategy for the sake of comparison. In the experiments we searched for matrices of up to $7$ dimensions, with the coefficients taking at most $8$ different values.

\begin{table}[t]
  \centering
  \caption{Smallest proofs found for termination of subsystems of $\rT$ in under 30 seconds. The columns show the matrix dimension $D$ and the maximum number $V$ of distinct coefficients that appear in the matrices, along with the median time to find an entire termination proof across 25 repetitions for the fixed $D$ and $V$.}
  \begin{tabular}{l rrQ{3.2em} rrQ{3.2em}}
    \toprule
    & \multicolumn{3}{c}{Natural} & \multicolumn{3}{c}{Arctic} \\
    \cmidrule(lr){2-4} \cmidrule(lr){5-7}
    Rule removed & $D$ & $V$ & Time & $D$ & $V$ & Time \\
    \midrule
    $\bZ \btR \to \btR$ & 3 & 4 & 1.42s & 3 & 5 & 15.95s \\
    $\bO \btR \to \tT \btR$ & 1 & 2 & 0.27s & 1 & 3 & 0.28s \\
    \midrule
    $\bZ \tZ \to \tZ \bZ$ & 4 & 2 & 0.92s & 3 & 4 & 2.46s \\
    $\bZ \tO \to \tZ \bO$ & 1 & 3 & 0.50s & 1 & 4 & 0.51s \\
    $\bZ \tT \to \tO \bZ$ & 1 & 2 & 0.38s & 1 & 3 & 0.39s \\
    \midrule
    $\bO \tZ \to \tO \bO$ & 4 & 3 & 1.20s & 3 & 4 & 0.87s \\
    $\bO \tO \to \tT \bZ$ & 5 & 2 & 0.89s & 4 & 3 & 0.84s \\
    $\bO \tT \to \tT \bO$ & 4 & 4 & 10.00s & 2 & 5 & 0.62s \\
    \midrule
    $\btL \tZ \to \btL \bO$ & 2 & 2 & 0.40s & 2 & 3 & 0.42s \\
    $\btL \tO \to \btL \bZ \bZ$ & 3 & 3 & 0.53s & 3 & 4 & 0.57s \\
    $\btL \tT \to \btL \bZ \bO$ & 4 & 4 & 7.51s & 4 & 3 & 4.04s \\
    \bottomrule
  \end{tabular}
  \label{table:subsystems}
\end{table}

\subsection{Odd Trajectories}
\label{sec:odd-trajectories}

In the originally defined Collatz function $C$, applying $2n+1 \mapsto 6n+4$ produces an even number, so we incorporate a single division by $2$ into the definition of the odd case and obtain the function $T$ with the same overall dynamics as $C$. Taking this idea further by performing as many divisions by $2$ as possible leads to the so-called Syracuse function $\Syr \colon \NO \to \NO$, defined as
\begin{equation*}
  \Syr(n) = \frac{3n+1}{2^k} \text{, where } k = \max \{k \in \Np \mid 2^k \text{ divides } 3n+1\}.
\end{equation*}
The Syracuse function maps each odd number to the next odd number in its Collatz trajectory. In this sense, it has the same overall dynamics as the Collatz functions $C$ or $T$ we have previously defined, i.e., $\Syr$
is convergent if and only if $C$ or $T$ is convergent. The function $T$ takes fewer steps than $C$ to reach $1$ and can be thought of as an accelerated Collatz function, making the Syracuse function a further accelerated version. Reducing the number of steps towards convergence is of interest mainly because the existence of various kinds of termination proofs is dependent on the derivational complexity\footnote{Derivational complexity is defined as the maximal length of rewrite sequences in the system as a function of the length of the initial string in the sequence.} of the rewriting system staying below certain thresholds~\cite{HL89}. For instance, given a rewriting system, if its termination can be proved in a single-step via natural matrix interpretations then its derivational complexity is at most exponential~\cite{EWZ08}. Similarly, a single-step termination proof via arctic matrix interpretations implies a linear derivational complexity~\cite{KW09}. Thus, it can be preferable to work with an alternative rewriting system with smaller derivational complexity that still simulates the dynamics of the Collatz function.

Expressing the Syracuse function as a generalized Collatz function would require infinitely many cases to account for all of the possible appearances of $2^k$ as the denominator with different values of $k$. As a result, we are unable to simulate it with a finite rewriting system. Nevertheless, we may compromise and accelerate the Collatz function by a smaller amount. We first observe that if $n \equiv 1 \pmod 8$ then $\Syr(n) = \frac{3n+1}{4}$ and if $n \equiv 3 \pmod 4$ then $\Syr(n) = \frac{3n+1}{2}$. Furthermore, for any $n \in \N$ we have $\Syr(8n+5) = \Syr(2n+1)$ since $3(8n+5)+1 = 24n+16 = 4(6n+4) = 4(3(2n+1)+1)$. Putting these observations together, we can define a generalized Collatz function $S \colon \NO \to \NO$ as follows.
\begin{equation*}
  S(n) =
  \begin{cases}
    \frac{3n+1}{4} & \text{if } n \equiv 1 \pmod 8 \\
    \frac{n-1}{4} & \text{if } n \equiv 5 \pmod 8 \\
    \frac{3n+1}{2} & \text{if } n \equiv 3 \pmod 4
  \end{cases}
\end{equation*}
\begin{example}
  The three trajectories below illustrate how the dynamics of $C$, $\Syr$, and $S$ compare to one another.
  \begin{itemize}
  \item $C^{\N}(19) = (19, 58, 29, 88, 44, 22, 11, 34, 17, 52, 26, 13, 40, 20, 10, 5, 16, 8, 4, 2, 1, 4, 2, 1, \dots)$
  \item $\Syr^{\N}(19) = (19, 29, 11, 17, 13, 5, 1, 1, \dots)$
  \item $S^{\N}(19) = (19, 29, \mathbf{7}, 11, 17, 13, \mathbf{3}, 5, 1, 1, \dots)$
  \end{itemize}
  Note that a $\Syr$-trajectory simply consists of the odd numbers in the corresponding $C$-trajectory; however, an $S$-trajectory is not necessarily contained in the corresponding $C$-trajectory. This is due to the $n \equiv 5 \pmod 8$ case of $S$ that maps $n$ to a number not necessarily contained in the $C$-trajectory (marked in bold above). Once one of the other two cases is encountered, the $S$-trajectory arrives back at the number $\Syr(n)$.
\end{example}
$S$ is convergent if and only if $C$ (or $T$) is convergent, and the number of steps that $S$ takes to converge is between that of $T$ and $\Syr$. In a manner similar to before, we can translate the cases of $S$ into a set $\rD_S$ of dynamic rules (shown below). Since we are working with odd numbers we use a new symbol $\oddR$ to mark the end of a string, viewed functionally as $\oddR(x) = 2x+1$.
\begin{equation*}
  \begin{array}{rcl}
    \bZ \bZ \oddR & \to & \tZ \oddR \\
    \bO \bZ \oddR & \to & \oddR \\
    \bO \oddR & \to & \tT \oddR \\
  \end{array}
\end{equation*}
Termination of the rewriting system $\rS = \rD_S \cup \rX$ is equivalent to the convergence of $S$. Similar to $\rT$, proving the termination of $\rS$ is currently beyond our reach, although it may potentially be an easier path to the Collatz conjecture (compared to proving $\SN(\rT)$). Failing to prove the termination of $\rS$ itself, we considered the subsystems of $\rS$ as we did for $\rT$ in \cref{sec:subsystems-T}. With matrix interpretations, termination of all but two of the 11-rule subsystems of $\rS$ was automatically proved. Despite devoting thousands of CPU hours, we were not able to find interpretations to prove that $\rS_1 = \rS \setminus \{\bZ \bZ \oddR \to \tZ \oddR\}$ or $\rS_2 = \rS \setminus \{\bO \bZ \oddR \to \oddR\}$ is terminating. By the discussion in \cref{sec:subsystems-T}, termination of subsystems of $\rS$ corresponds to questions about the trajectories of $S$. In particular, for $\rS_1$ and $\rS_2$ we have the following.
\begin{proposition}\label{prop:s-convergence}
  \hfill
  \begin{enumerate}
  \item $\SN(\rS_1)$ if and only if all nonconvergent $S$-trajectories contain some $n \equiv 1 \pmod 8$.
  \item\label{prop:s-convergence-5mod8} $\SN(\rS_2)$ if and only if all nonconvergent $S$-trajectories contain some $n \equiv 5 \pmod 8$.
  \end{enumerate}
\end{proposition}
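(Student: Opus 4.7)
The plan is to adapt the argument of \cref{thm:simulation} to track which case of $S$ each dynamic rule of $\rD_S$ corresponds to. By the functional view in~\cref{eq:interp}, a string ending in $\bZ \bZ \oddR$ necessarily represents a value $\equiv 1 \pmod 8$, a string ending in $\bO \bZ \oddR$ represents a value $\equiv 5 \pmod 8$, and a string ending in $\bO \oddR$ represents a value $\equiv 3 \pmod 4$. Hence each dynamic rule in $\rD_S$ is applicable exactly when the current value lies in the residue class corresponding to the case of $S$ that it simulates. This correspondence drives both directions.

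For the backward direction of part~2, I would proceed by contrapositive. Suppose $\rS_2$ admits an infinite rewrite sequence $(N_i)$. Since $\rX \subseteq \rS_2$ is terminating by \cref{lem:A-termination}, infinitely many steps must apply dynamic rules, all drawn from $\rD_S \cap \rS_2 = \{\bZ \bZ \oddR \to \tZ \oddR,\ \bO \oddR \to \tT \oddR\}$. Extracting $\Val(N_i)$ at those indices yields an infinite $S$-trajectory, exactly as in \cref{thm:simulation}. Each dynamic step uses one of the two remaining rules, so its input value is either $\equiv 1 \pmod 8$ or $\equiv 3 \pmod 4$, and in particular is not $\equiv 5 \pmod 8$. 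This produces a nonconvergent $S$-trajectory avoiding $\equiv 5 \pmod 8$, contradicting the hypothesis.

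For the forward direction of part~2, I would assume a nonconvergent $S$-trajectory $(\nu_i)$ with $\nu_i \not\equiv 5 \pmod 8$ for all $i$ and build an infinite rewrite sequence in $\rS_2$. Starting from any string representing $\nu_0$, I would inductively construct $N_{i+1}$ from $N_i$ as follows: apply value-preserving rules from $\rX$ to canonicalize $N_i$ into its fully-binary normal form, which is possible because $\rA$ swaps every binary past every ternary and $\rB$ eliminates any leading $\btL$-ternary prefix, so the $\rX$-normal forms are precisely the canonical binary representations. Once in that form, the trailing symbols match the pattern of the appropriate $\rD_S$ rule ($\bO \oddR$ if $\nu_i \equiv 3 \pmod 4$, or $\bZ \bZ \oddR$ if $\nu_i \equiv 1 \pmod 8$), both of which lie in $\rS_2$; applying that rule produces $N_{i+1}$ representing $\nu_{i+1}$. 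Iterating yields an infinite rewrite sequence in $\rS_2$, contradicting $\SN(\rS_2)$. Part~1 follows from the identical argument with $\equiv 1 \pmod 8$ in the role of $\equiv 5 \pmod 8$.

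The main obstacle is ensuring that every $\nu_i$ in a nonconvergent trajectory admits a representation supporting the required dynamic rewrite. The degenerate values $\{1, 3, 5\}$ have minimal representations ($\btL \oddR$ and $\btL \bZ \oddR$) to which no $\rD_S$ rule applies, so one must argue that a nonconvergent trajectory cannot enter this set. This follows from $S(1) = 1$, $S(3) = 5$, and $S(5) = 1$: reaching any of these three values forces convergence, so every value in a nonconvergent trajectory is at least $7$ and hence admits a canonical binary form with at least two binary digits between $\btL$ and $\oddR$, providing the required trailing pattern.
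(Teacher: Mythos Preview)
Your proposal is correct and follows essentially the same approach as the paper. The paper does not give a standalone proof of this proposition; it simply remarks that ``by the discussion in \cref{sec:subsystems-T}, termination of subsystems of $\rS$ corresponds to questions about the trajectories of $S$,'' i.e., it treats the statement as an instance of the simulation argument from \cref{thm:simulation} with one dynamic rule deleted. Your write-up is exactly that instantiation, and in fact supplies more detail than the paper does, notably the analysis of the degenerate values $\{1,3,5\}$.

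Two minor imprecisions worth tightening. First, in the forward direction you claim that every $\nu_i \geq 7$ has a canonical binary form ``with at least two binary digits between $\btL$ and $\oddR$.'' This fails for $\nu_i = 7$, whose canonical form is $\btL \bO \oddR$ with a single binary digit; fortunately that is enough, since the rule $\bO \oddR \to \tT \oddR$ for the $\equiv 3 \pmod 4$ case needs only one. The two-digit requirement applies only to the $\equiv 1 \pmod 8$ and $\equiv 5 \pmod 8$ cases, where the smallest relevant values are $9$ and $13$ and do have two trailing binary digits. Second, in the backward direction you invoke ``exactly as in \cref{thm:simulation}'' to conclude the extracted trajectory is nonconvergent; the analogue of \cref{claim:equiv-1} here is that the strings $\btL \oddR$ and $\btL \bZ \oddR$ (values $3$ and $5$) are $\rS$-normal forms, so an infinite rewrite sequence never reaches value $5$, and since $S^{-1}(1) = \{1,5\}$ the trajectory never reaches $1$. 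You essentially make this observation in your last paragraph, so it would be cleanest to cite it for both directions.
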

We can prove automatically via matrix interpretations that $\rS_3 = \rS \setminus \{\bO \oddR \to \tT \oddR\}$ is terminating, which implies the following through an equivalence of the above form for $\rS_3$.
\begin{proposition}
  All nonconvergent $S$-trajectories contain some $n \equiv 3 \pmod 4$.
\end{proposition}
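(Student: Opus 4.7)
The plan is to mirror the structure used in the paper for the analogous statements about subsystems of $\rT$ and $\rS$, namely: (i) reduce the mathematical claim about $S$-trajectories to a termination question about $\rS_3$, and (ii) discharge that termination question with matrix interpretations. Concretely, the reduction exactly parallels the proof scheme of \cref{thm:simulation} adapted to $S$ and $\rS$: rewrites by $\rX$ preserve the value of a string, rewrites by the dynamic rules in $\rD_S$ simulate one step of $S$, and (as in \cref{lem:blockwise-termination,lem:T-top-termination}) it suffices to consider canonical strings $\btL(\bZ\vert\bO\vert\tZ\vert\tO\vert\tT)^*\oddR$. The key observation for this particular proposition is that the rule $\bO\oddR\to\tT\oddR$ is the one and only rule that simulates the case $n\equiv 3\pmod 4$ of $S$; hence a rewrite sequence of $\rS$ starting from a representation of $\nu\in\NO$ avoids this rule precisely when the corresponding $S$-trajectory of $\nu$ never visits a number $\equiv 3\pmod 4$. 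Therefore $\SN(\rS_3)$ is equivalent to the statement that every nonconvergent $S$-trajectory meets the residue class $3\pmod 4$.

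Next I would establish $\SN(\rS_3)$ by the rule-removal framework of \cref{thm:relative-termination}. Write $\rD_S' = \{\bZ\bZ\oddR\to\tZ\oddR,\; \bO\bZ\oddR\to\oddR\}$, so that $\rS_3 = \rD_S'\cup\rX$. Since $\SN(\rX)$ by \cref{lem:A-termination}, it is enough to prove $\SN(\rD_S'\rel \rS_3)$, and by \cref{lem:reversal} this equals $\SN(\rev{\rD_S'}\rel \rev{\rS_3})$. Because every rule in $\rev{\rD_S'}$ has $\oddR$ on the left end and contains no other $\oddR$, the reasoning of \cref{lem:T-top-termination} applies verbatim to $\rS$: it suffices to prove the relative top termination $\SN(\rev{\rD_S'}_{\mathrm{top}}\rel \rev{\rS_3})$. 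This brings \cref{thm:monotone-algebra} for weakly monotone algebras into play, so the search space of usable matrix interpretations is relatively large.

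To finish, I would run the SAT-based search for natural and arctic matrix interpretations stepwise, removing one rule of $\rev{\rD_S'}$ at a time: first find an interpretation that strictly decreases, say, $\oddR\bZ\bZ\to\oddR\tZ$ while weakly decreasing every rule of $\rev{\rS_3}$; then, in the residual system, find another interpretation that strictly decreases $\oddR\bZ\bO\to\oddR$. Two successful rule removals, combined with $\SN(\rev{\rX})$ from \cref{lem:A-termination,lem:reversal}, collapse the problem by \cref{thm:relative-termination} and yield $\SN(\rev{\rS_3})$, hence $\SN(\rS_3)$.

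The soft parts of the argument (the reduction to rewriting, the application of \cref{thm:relative-termination}, \cref{lem:T-top-termination}, \cref{lem:reversal}, and \cref{lem:A-termination}) are essentially bookkeeping once one trusts the pattern already used for $\rT$ and $\rF$. The main obstacle is the concrete synthesis of the interpretations: this is where automation carries the proof, and one should expect that dimension and coefficient bounds of the magnitude reported in \cref{table:subsystems} suffice, although, as the discussion around $\rS_1$ and $\rS_2$ illustrates, there is no a priori guarantee that a small interpretation exists at all. Thus the real content of the proof lies in exhibiting specific matrices $[\bZ],[\bO],[\tZ],[\tO],[\tT],[\btL],[\oddR]$ that certify the two strict-decrease/weak-decrease conditions above; once produced, verifying them is a routine elementwise check via \cref{eq:matrix-check} or the arctic analogue.
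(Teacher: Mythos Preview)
Your proposal is correct and follows essentially the same approach as the paper: the paper states that it proved $\SN(\rS_3)$ automatically via matrix interpretations and that the proposition then follows from the equivalence analogous to \cref{prop:s-convergence}, and you spell out precisely this reduction together with the rule-removal/reversal/top-termination scaffolding the paper uses elsewhere. The only difference is level of detail---the paper defers the actual interpretations to its online repository, and you likewise leave them as the output of the SAT-based search---so there is no substantive divergence.
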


Leaving out some dynamic rules from a mixed-base rewriting system that simulates a generalized Collatz function allows us to phrase the above kinds of questions as termination problems. Recall that, although $S$ is equivalent to $C$ in terms of convergence, its trajectories are not necessarily contained in the Collatz trajectories, so we cannot substitute $C$ for $S$ in the above statements (except for the second item of \cref{prop:s-convergence}, since the left-out rule is the one that maps to a number not contained in the Collatz trajectory). In the next section, we perform the same kind of inquiry into the original Collatz trajectories.

\subsection{Collatz Trajectories Modulo \texorpdfstring{$8$}{8}}
\label{sec:collatz-mod-8}

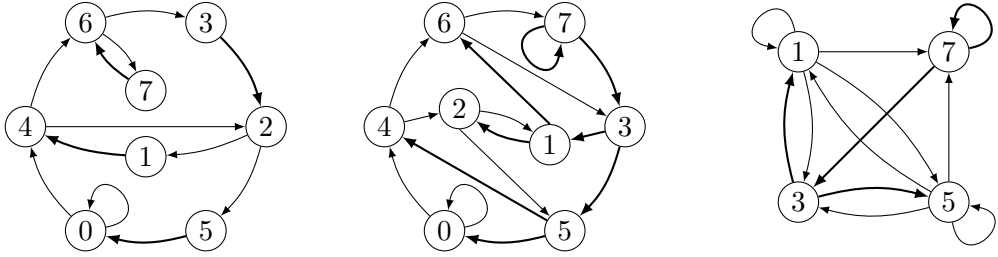
\begin{figure}[ht]
  \centering
  \begin{tikzpicture}[scale=0.8,baseline=(current bounding box.center)]
    \node[draw,circle] (4) at (-2,0) {$\!\!4\!\!$};
    \node[draw,circle] (2) at (2,0) {$\!\!2\!\!$};
    \node[draw,circle] (6) at (-1,1.73) {$\!\!6\!\!$};
    \node[draw,circle] (3) at (1,1.73) {$\!\!3\!\!$};
    \node[draw,circle] (0) at (-1,-1.73) {$\!\!0\!\!$};
    \node[draw,circle] (5) at (1,-1.73) {$\!\!5\!\!$};
    \node[draw,circle] (1) at (0,-0.5) {$\!\!1\!\!$};
    \node[draw,circle] (7) at (0,0.6) {$\!\!7\!\!$};

    \draw[-latex] (4) edge[bend left=15] (6);
    \draw[-latex] (6) edge[bend left=15] (3);
    \draw[-latex, thick] (3) edge[bend left=15] (2);
    \draw[-latex] (2) edge[bend left=15] (5);
    \draw[-latex, thick] (5) edge[bend left=15] (0);
    \draw[-latex] (0) edge[bend left=15] (4);
    \draw[-latex] (4) edge[bend left=00] (2);
    \draw[-latex] (2) edge[bend left=10] (1);
    \draw[-latex, thick] (1) edge[bend left=10] (4);
    \draw[-latex] (6) edge[bend left=15] (7);
    \draw[-latex, thick] (7) edge[bend left=15] (6);
    \draw[latex-] (0) to [out=80,in=10,looseness=6] (0);
  \end{tikzpicture}
  \hspace{2em}
  \begin{tikzpicture}[scale=0.8,baseline=(current bounding box.center)]
    \node[draw,circle] (4) at (-2,0) {$\!\!4\!\!$};
    \node[draw,circle] (3) at (2,0) {$\!\!3\!\!$};
    \node[draw,circle] (6) at (-1,1.73) {$\!\!6\!\!$};
    \node[draw,circle] (7) at (1,1.73) {$\!\!7\!\!$};
    \node[draw,circle] (0) at (-1,-1.73) {$\!\!0\!\!$};
    \node[draw,circle] (5) at (1,-1.73) {$\!\!5\!\!$};
    \node[draw,circle] (1) at (0.75,-0.3) {$\!\!1\!\!$};
    \node[draw,circle] (2) at (-0.75,0.3) {$\!\!2\!\!$};

    \draw[-latex] (4) edge[bend left=15] (6);
    \draw[-latex] (6) edge (3);
    \draw[-latex, thick] (3) edge (1);
    \draw[-latex, thick] (5) edge (4);
    \draw[-latex, thick] (5) edge[bend left=15] (0);
    \draw[-latex] (0) edge[bend left=15] (4);
    \draw[-latex] (4) edge[bend left=00] (2);
    \draw[-latex] (2) edge[bend left=15] (1);
    \draw[-latex] (2.south) to (5);
    \draw[-latex, thick] (1.north) to (6);
    \draw[-latex, thick] (1) edge[bend left=15] (2);
    \draw[-latex] (6) edge[bend left=15] (7);
    \draw[-latex, thick] (7) edge[bend left=15] (3);
    \draw[-latex, thick] (3) edge[bend left=15] (5);
    \draw[latex-] (0) to [out=80,in=10,looseness=6] (0);
    \draw[latex-, thick] (7) to [out=260,in=190,looseness=6] (7);
  \end{tikzpicture}
  \hspace{2em}
  \begin{tikzpicture}[scale=0.8,baseline=(current bounding box.center)]
    \node[draw,circle] (1) at (0,2.5) {$\!\!1\!\!$};
    \node[draw,circle] (3) at (0,0) {$\!\!3\!\!$};
    \node[draw,circle] (5) at (2.5,0) {$\!\!5\!\!$};
    \node[draw,circle] (7) at (2.5,2.5) {$\!\!7\!\!$};

    \draw[-latex] (1) edge[bend left=15] (5);
    \draw[-latex] (5) edge[bend left=15] (1);
    \draw[-latex] (5) edge (7);
    \draw[-latex] (1) edge (7);
    \draw[-latex, thick] (7) edge (3);
    \draw[-latex] (1) edge[bend left=15] (3);
    \draw[-latex, thick] (3) edge[bend left=15] (1);
    \draw[-latex] (5) edge[bend left=15] (3);
    \draw[-latex, thick] (3) edge[bend left=15] (5);
    \draw[latex-] (1) to [out=170,in=100,looseness=6] (1);
    \draw[latex-, thick] (7) to [out=80,in=10,looseness=6] (7);
    \draw[latex-] (5) to [out=350,in=280,looseness=6] (5);
  \end{tikzpicture}
  \caption{Transition graphs of the iterates in the Collatz trajectories across residue classes modulo $8$ for the functions $C$ (left), $T$ (middle), $S$ (right). For each function $f$, the edge $u \to v$ is part of its transition graph if and only if there exists some $n \equiv u \pmod 8$ such that $f(n) \equiv v \pmod 8$. Bold edges indicate transitions where $f(n) > n$.}
  \label{fig:modulo}
\end{figure}

Our initial motivation for the previous section was to find rewriting systems with reduced derivational complexity. This inquiry led us to surprisingly difficult Collatz-like problems, which we believe may be on the border of the reach of automated methods. In this section we explore more problems of a similar form and present the unsolved ones as potentially reachable targets for automated termination proving.

Let $m$ be a power of $2$. Given $k \in \{0, 1, \dots, m-1\}$, is it the case that all nonconvergent Collatz trajectories contain some $n \equiv k \pmod m$? For several values of $k$ this can be proved by inspecting the transitions of the iterates in the Collatz trajectories across residue classes modulo $m$ (shown on \cref{fig:modulo} for $m = 8$). As an example, consider the case $k = 6$.

\begin{proposition}
  All nonconvergent $C$-trajectories contain some $n \equiv 6 \pmod 8$.
\end{proposition}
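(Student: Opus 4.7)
The plan is to prove the contrapositive: every $C$-trajectory that never contains an element $\equiv 6 \pmod 8$ is convergent (reaches $1$). I would use strong induction on the starting value $n$: if $n > 1$ and the trajectory of $n$ avoids $6 \pmod 8$, I show that some iterate $C^k(n)$ satisfies $C^k(n) < n$, and the inductive hypothesis applied to $C^k(n)$ (whose trajectory is a suffix of that of $n$, hence also avoids $6 \pmod 8$) yields convergence.

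First I would read off the immediate residue restrictions from \cref{fig:modulo}. Avoidance forbids $n_i \equiv 7 \pmod 8$ outright, since $C(n_i) = 3n_i+1 \equiv 6 \pmod 8$; and it refines $n_i \equiv 4 \pmod 8$ to $n_i \equiv 4 \pmod{16}$, since $n_i \equiv 12 \pmod{16}$ gives $C(n_i) = n_i/2 \equiv 6 \pmod 8$. A key structural observation is that once $6$ is removed from the mod-$8$ transition graph, the residue $3$ has no incoming edge (its only predecessor was $6$), so $3 \pmod 8$ is visited at most once along the trajectory; thereafter it is confined to the strongly connected component $\{0,1,2,4,5\}$.

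For the inductive step I split on $n \bmod 8$. The even case is trivial. For $n \equiv 1 \pmod 8$, avoidance forces $n \equiv 1 \pmod{16}$, giving $C^3(n) = (3n+1)/4 < n$ for $n > 1$. For $n \equiv 5 \pmod 8$ we have $3n+1 \equiv 0 \pmod 8$, so the next odd iterate satisfies $\Syr(n) \leq (3n+1)/8 < n$. The case $n \equiv 3 \pmod 8$ produces the only potentially-increasing step, $\Syr(n) = (3n+1)/2$, but $\Syr(n) \equiv 4j+5 \in \{1,5\} \pmod 8$ for $n = 8j+3$, so after at most one such temporary increase the trajectory is in the contracting regime and, by induction, converges.

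The main obstacle is ensuring that the \emph{intermediate} Collatz values between consecutive odd iterates also avoid $6 \pmod 8$: some mod-$8$-admissible residues (for instance $n \equiv 29 \pmod{32}$, whose trajectory passes through $22 \equiv 6 \pmod 8$) are actually ruled out by hypothesis, requiring refinement of the case analysis to residues modulo higher powers of $2$ on every surviving subresidue. A cleaner global substitute is a potential function $\Phi(n) = \log n + w(n \bmod 8)$ with $w \colon \{0,\ldots,5\} \to \mathbb{R}$ chosen so that $\Phi$ decreases by a uniform positive amount along every edge of the reduced graph; such a $w$ exists because every cycle in the reduced graph contains at least two halvings per tripling (from $1$, each tripling $1 \to 4$ is followed by $4 \to 2 \to \{1,5\}$; from $5$, by $5 \to 0 \to \cdots \to 4 \to 2 \to \{1,5\}$), making the cycle-average multiplicative factor at most $3/4 < 1$.
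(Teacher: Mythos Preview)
Your proposal eventually reaches the same idea as the paper, but by a longer route. The paper dispenses with both the induction and the potential function: it simply enumerates the simple cycles of the mod-$8$ transition graph with vertex $6$ deleted, writes down the explicit affine map $n \mapsto c(n)$ that one traversal of each effects (e.g.\ $n \mapsto (3n+1)/4$ for the cycle through residues $1,4,2$), and notes that every such $c$ satisfies $c(n) < n$. A nonconvergent trajectory would traverse these cycles infinitely often, forcing infinitely many strict decreases in $\Np$---impossible.

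Your potential-function reformulation is equivalent in content---the existence of such weights $w$ is exactly the statement that every directed cycle has negative total log-weight---but as written it has a small gap: along a tripling edge the increment in $\log n$ is $\log(3 + 1/n)$, not $\log 3$, and for the tight cycle $(1,4,2)$ this makes the actual cycle sum $\log\bigl((3+1/n)/4\bigr)$, which vanishes at $n=1$ rather than being uniformly negative. The fix is immediate (restrict to $n \geq 2$, since reaching $n=1$ \emph{is} convergence, or work with the exact affine cycle maps as the paper does), but the ``uniform positive amount'' claim needs that caveat. The induction-with-refinements approach in your opening paragraphs is correct in principle but, as you yourself flag, leads to escalating casework; the paper's direct cycle enumeration sidesteps it entirely.
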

\begin{proof}
  Let $t$ be a $C$-trajectory. The path that $t$ takes over the transition graph of $C$ in \cref{fig:modulo} is a concatenation of the simple cycles of the graph. If $t$ does not contain any $n \equiv 6 \pmod 8$, then the cycles containing $6$ cannot occur in the path, which leaves eight cycles (listed below). By composing the functions applied at each transition in the graph, we can compute the mapping (call it $c$) applied to an iterate after traversing a cycle:

  \begin{center}
    \renewcommand{\arraystretch}{1.2}
    \begin{tabular}{@{}p{6em}ll}
      \toprule
      Cycle & $c(n)$ \\
      \midrule
      $(0, 4, 2, 5, 0)$ & $(3n+8)/8$ \\
      $(0, 0)$ & $n/2$ \\
      $(1, 4, 2, 1)$ & $(3n+1)/4$ \\
      $(2, 5, 0, 4, 2)$ & $(3n+2)/8$ \\
      $(2, 1, 4, 2)$ & $(3n+2)/4$ \\
      $(4, 2, 5, 0, 4)$ & $(3n+4)/8$ \\
      $(4, 2, 1, 4)$ & $(3n+4)/4$ \\
      $(5, 0, 4, 2, 5)$ & $(3n+1)/8$ \\
      \bottomrule
    \end{tabular}
  \end{center}

  For $t$ to be nonconvergent, it needs to contain infinitely many occurrences of cycles that increase the values of the iterates. All of the above cycles decrease the values of the iterates, so each iterate of $t$ eventually reaches a smaller number, implying $t$ is convergent.
\end{proof}

Propositions of the above kind can also be formulated as termination of some rewriting systems. With this approach we found automated proofs for several cases.

We start by studying the smallest interesting case $m=4$. Consider the following two sets of dynamic rules to simulate the Collatz function.
\begin{equation*}
  \rD_{C}^{\mathrm{Even}/4} = \left\{
    \begin{array}{rcl}
      \bZ \bZ \btR & \to & \bZ \btR \\
      \bO \bZ \btR & \to & \bO \btR \\
      \bO \btR & \to & \tT \bZ \btR
    \end{array}
  \right\}
  \qquad
  \rD_{T}^{\mathrm{Odd}/4} = \left\{
    \begin{array}{rcl}
      \bZ \btR & \to & \btR \\
      \bZ \bO \btR & \to & \tO \bZ \btR \\
      \bO \bO \btR & \to & \tT \bO \btR
    \end{array}
  \right\}
\end{equation*}
It is straightforward to see that the SRS $\rC^{\mathrm{Even}/4} = \rD_{C}^{\mathrm{Even}/4} \cup \rX$ simulates the function $C$ and the SRS $\rT^{\mathrm{Odd}/4} = \rD_{T}^{\mathrm{Odd}/4} \cup \rX$ simulates the function $T$. The first two rules of $\rD_{C}^{\mathrm{Even}/4}$ express the application of $C$ to numbers of the form $4n$ and $4n+2$. Similarly, the last two rules of $\rD_{T}^{\mathrm{Odd}/4}$ express the application of $T$ to numbers of the form $4n+1$ and $4n+3$. Observing that a $T$-trajectory contains the same odd numbers as the corresponding $C$-trajectory, we have the following equivalences.
\begin{proposition}\label{prop:mod4-equiv}
  \hfill
  \begin{enumerate}
  \item $\SN(\rC^{\mathrm{Even}/4} \setminus \{\bZ \bZ \btR \to \bZ \btR\})$ if and only if all nonconvergent $C$-trajectories contain some $n \equiv 0 \pmod 4$.
  \item $\SN(\rC^{\mathrm{Even}/4} \setminus \{\bO \bZ \btR \to \bO \btR\})$ if and only if all nonconvergent $C$-trajectories contain some $n \equiv 2 \pmod 4$.
  \item $\SN(\rT^{\mathrm{Odd}/4} \setminus \{\bZ \bO \btR \to  \tO \bZ \btR\})$ if and only if all nonconvergent $C$-trajectories contain some $n \equiv 1 \pmod 4$.
  \item $\SN(\rT^{\mathrm{Odd}/4} \setminus \{\bO \bO \btR \to \tT \bO \btR\})$ if and only if all nonconvergent $C$-trajectories contain some $n \equiv 3 \pmod 4$.
  \end{enumerate}
\end{proposition}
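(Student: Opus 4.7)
The plan is to adapt the proof scheme of \cref{thm:simulation} to each of the four subsystems. Fix an item; let $\ell \to r$ denote the removed dynamic rule and $k \in \{0,1,2,3\}$ the residue class modulo $4$ that it handles, and let $R$ denote the resulting subsystem. The typing argument used for \cref{lem:blockwise-termination} carries over unchanged, so it suffices to consider only canonical-form initial strings $\btL (\bZ \vert \bO \vert \tZ \vert \tO \vert \tT)^* \btR$. Within that restriction, the three internal claims inside \cref{thm:simulation} should be re-established: (a) auxiliary rewrites preserve the represented value and each surviving dynamic rewrite applies the simulated function (either $C$ or $T$) to that value; (b) no rule of $R$ applies to the string $\btL \btR$ of value~$1$; and (c) for every canonical string $N$ with $\Val(N) > 1$ and $\Val(N) \not\equiv k \pmod 4$, there exists a rewrite sequence $N \to_R^* N'$ such that $\Val(N')$ equals one application of the simulated function to $\Val(N)$.

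Given (a)--(c), both directions follow in the manner of \cref{thm:simulation}. For the backward implication, any infinite rewrite sequence for $R$ must use a surviving dynamic rule infinitely often since $\rX$ is terminating by \cref{lem:A-termination}; then (a) and (b) show that the sequence of values traces a nonconvergent trajectory of the simulated function; by hypothesis this trajectory contains some $m \equiv k \pmod 4$, but once such an $m$ is reached no surviving dynamic rule can ever fire again (the auxiliary rules preserve the value, and only $\ell \to r$ handles residue $k$), contradicting the infinitude of dynamic steps. For the forward implication, a nonconvergent trajectory avoiding residue $k$ can be simulated step by step via (c), producing an infinite rewrite sequence that contradicts $\SN(R)$. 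For items 3 and 4 the simulated function is $T$ while the statement concerns $C$-trajectories; the extra ingredient is the standard observation that the odd iterates of a $C$-trajectory coincide with those of the associated $T$-trajectory, so ``contains some $n \equiv 1 \pmod 4$'' (resp.\ $\equiv 3$) is detected equivalently by either dynamics.

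The main obstacle will be claim~(c), since it is the only step where the rule removal genuinely changes the argument. Concretely, one must show that whenever $\Val(N) \not\equiv k \pmod 4$, the auxiliary rules of $\rX$ can always reshape $N$ so that the surviving dynamic rule matching the actual residue becomes applicable at the rightmost end, \emph{without} producing the forbidden trailing pattern that would demand $\ell \to r$. This reduces to a short case analysis: push binary digits rightward via the commutation rules in $\rA$ until the last one or two positions are binary, at which point the trailing-digit pattern is forced by $\Val(N) \bmod 4$, and by assumption this pattern is not the one removed. A small additional check --- that $\rB$ can always produce a trailing binary digit when the current string is all-ternary after $\btL$ --- completes the argument and mirrors the construction used in the proof of \cref{claim:equiv-num-rewrite} inside \cref{thm:simulation}.
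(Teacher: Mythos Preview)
The paper does not give a proof of \cref{prop:mod4-equiv}; it is stated as an observation, justified only by the sentence ``Observing that a $T$-trajectory contains the same odd numbers as the corresponding $C$-trajectory, we have the following equivalences,'' with the understanding that the argument is a routine adaptation of \cref{thm:simulation}. Your proposal is exactly that adaptation, and it is correct in outline and in its identification of the one nontrivial point (claim~(c), the reachability of a surviving dynamic redex when the current value avoids residue~$k$).

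One small refinement: for the two $\rC^{\mathrm{Even}/4}$ items, the string $\btL\bZ\btR$ of value~$2$ is also a normal form (none of $\bZ\bZ\btR$, $\bO\bZ\btR$, $\bO\btR$ matches), so your claim~(b) should read ``no rule applies to any string of value $1$ or $2$'' in those cases. This does not disturb either direction: a nonconvergent $C$-trajectory cannot contain $2$ (else it would next contain $1$), so in the forward direction claim~(c) is only ever invoked at values $\geq 3$, where two trailing binary digits are always attainable via $\rX$; and in the backward direction, reaching value~$2$ would halt the dynamic steps just as reaching~$1$ does, yielding the same contradiction. With that tweak your argument goes through and matches the paper's intended (but unwritten) reasoning.
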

We were able to prove termination of the above four systems via matrix interpretations, thus establishing the following.
\begin{theorem}\label{thm:collatz-mod4}
  If there exists a nonconvergent Collatz trajectory, it cannot avoid any of the residue classes modulo $4$.
\end{theorem}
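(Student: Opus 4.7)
The plan is to reduce the theorem to four termination statements via Proposition~\ref{prop:mod4-equiv}, then discharge each one by exhibiting a matrix interpretations proof. The Proposition gives us precisely that asserting the Collatz trajectories cannot avoid a residue class $k \in \{0,1,2,3\}$ modulo $4$ is equivalent to $\SN$ of a corresponding subsystem of $\rC^{\mathrm{Even}/4}$ or $\rT^{\mathrm{Odd}/4}$. So it suffices to prove the four termination statements listed in that Proposition.

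For each of these four systems, I would proceed in the same stepwise fashion used for $\rT$ in Sections~\ref{sec:rewriting-mixed-base} and \ref{sec:convergence-w}. First, observe that each subsystem has the form $\rD' \cup \rX$, where $\rD'$ consists of at most two dynamic rules and $\rX$ is the auxiliary system already known to be terminating by \cref{lem:A-termination}. By \cref{thm:relative-termination}, it suffices to show $\SN(\rD' \rel \rD' \cup \rX)$, which can be further split by removing the dynamic rules one at a time. For the dynamic rules that act at the left boundary (of the form $\btL \cdots \to \btL \cdots$), or, after reversal via \cref{lem:reversal}, at the right boundary, \cref{lem:T-top-termination} allows us to use relative top termination, and hence weakly monotone algebras instead of extended monotone ones, giving us more flexibility when searching for interpretations. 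Some of these steps may require reversing the system first so that the right-acting rules become left-acting, mirroring the treatment of $\rev{\rD_T}$ in the proof for Farkas' variant.

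The search itself would be carried out by encoding the existence of matrix interpretations satisfying the inequalities of \cref{thm:monotone-algebra} (in the form~\cref{eq:matrix-check} for natural matrices, or the arctic analogue) as a propositional formula and handing it to a SAT solver, with the dimension $d$ and the coefficient range fixed at small values and then gradually increased if no proof is found. Given that the four subsystems closely resemble the subsystems of $\rT$ studied in \cref{sec:subsystems-T}, where every rule-removal problem admitted a small matrix interpretations proof (see \cref{table:subsystems}), it is reasonable to expect that low-dimensional natural or arctic interpretations will suffice here as well.

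The main obstacle is not conceptual but computational: whether the SAT search actually finds interpretations in the intended parameter range. There is no a priori guarantee that termination provable in principle is provable via small natural or arctic matrix interpretations, and in analogous experiments (e.g.\ the subsystems $\rS_1$ and $\rS_2$ of $\rS$ in \cref{sec:odd-trajectories}) no proof was found despite substantial effort. So the live question is whether, for each of the four subsystems, a sufficiently expressive combination of $(d,V)$ and of natural versus arctic semirings is reachable. If any single step resists direct proof, I would attempt to split the corresponding dynamic rule away in a different order, apply reversal, or switch between natural and arctic interpretations, which is precisely the heuristic workflow used throughout the paper. Once interpretations for all four systems are in hand, assembling them into the full proof of \cref{thm:collatz-mod4} is a routine invocation of \cref{thm:relative-termination,lem:reversal,lem:T-top-termination,prop:mod4-equiv}.
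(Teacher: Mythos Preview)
Your proposal is correct and matches the paper's approach essentially exactly: the paper likewise reduces the theorem to the four termination statements of \cref{prop:mod4-equiv} and discharges each by an automated search for matrix interpretations, with the explicit interpretations relegated to the online repository rather than printed in the text. Your discussion of the stepwise rule-removal workflow (reversal, relative top termination via the analogue of \cref{lem:T-top-termination}, then \cref{thm:relative-termination}) and of the computational caveat is accurate; the paper simply reports that the search succeeded for all four systems.
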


Next, we look at the case $m=8$. It can be studied via the analogously defined rewriting systems:
\begin{equation*}
  \rD_{C}^{\mathrm{Even}/8} = \left\{
    \begin{array}{rcl}
      \bZ \bZ \bZ \btR & \to & \bZ \bZ \btR \\
      \bZ \bO \bZ \btR & \to & \bZ \bO \btR \\
      \bO \bZ \bZ \btR & \to & \bO \bZ \btR \\
      \bO \bO \bZ \btR & \to & \bO \bO \btR \\
      \bO \btR & \to & \tT \bZ \btR
    \end{array}
  \right\}
  \qquad
  \rD_{T}^{\mathrm{Odd}/8} = \left\{
    \begin{array}{rcl}
      \bZ \btR & \to & \btR \\
      \bZ \bZ \bO \btR & \to & \tZ \bO \bZ \btR \\
      \bZ \bO \bO \btR & \to & \tO \bZ \bO \btR \\
      \bO \bZ \bO \btR & \to & \tT \bZ \bZ \btR \\
      \bO \bO \bO \btR & \to & \tT \bO \bO \btR
    \end{array}
  \right\}
\end{equation*}
The SRS $\rC^{\mathrm{Even}/8} = \rD_{C}^{\mathrm{Even}/8} \cup \rX$ simulates the function $C$ and the SRS $\rT^{\mathrm{Odd}/8} = \rD_{T}^{\mathrm{Odd}/8} \cup \rX$ simulates the function $T$. We do not formally state the analogue of \cref{prop:mod4-equiv} --- terminations of the eight subsystems obtained from $\rC^{\mathrm{Even}/8}$ or $\rT^{\mathrm{Odd}/8}$ are each equivalent to the question of whether all nonconvergent Collatz trajectories have to encounter the corresponding residue class modulo $8$. Through automated proofs, we were able to establish the following.
\begin{theorem}\label{thm:collatz-2346mod8}
  If there exists a nonconvergent Collatz trajectory, it cannot avoid the residue classes of $2$, $3$, $4$, $6$ modulo $8$.
\end{theorem}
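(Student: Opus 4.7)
The plan is to mirror the strategy of \cref{thm:collatz-mod4}, extended to the mod-$8$ setting. First I would establish a mod-$8$ analogue of \cref{prop:mod4-equiv}: for each residue $r \in \{0,1,\dots,7\}$, the assertion ``every nonconvergent $C$-trajectory contains some $n \equiv r \pmod 8$'' is equivalent to termination of the subsystem obtained by removing one dynamic rule, from $\rC^{\mathrm{Even}/8}$ when $r$ is even and from $\rT^{\mathrm{Odd}/8}$ when $r$ is odd. The proof would parallel \cref{thm:simulation} and \cref{prop:mod4-equiv}: in a canonical mixed-base string whose three trailing digits are all binary, those digits determine the represented number modulo $8$, and the dynamic rule whose left-hand side has those three trailing digits fires precisely on iterates in the corresponding residue class. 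For odd $r$, using $\rT^{\mathrm{Odd}/8}$ in place of a ``$C$-odd'' variant is justified because $T$- and $C$-trajectories contain the same odd iterates.

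Unpacking this correspondence for the four target residues gives four termination goals: residue $2$ becomes $\SN(\rC^{\mathrm{Even}/8} \setminus \{\bZ \bO \bZ \btR \to \bZ \bO \btR\})$, residue $4$ becomes $\SN(\rC^{\mathrm{Even}/8} \setminus \{\bO \bZ \bZ \btR \to \bO \bZ \btR\})$, residue $6$ becomes $\SN(\rC^{\mathrm{Even}/8} \setminus \{\bO \bO \bZ \btR \to \bO \bO \btR\})$, and residue $3$ becomes $\SN(\rT^{\mathrm{Odd}/8} \setminus \{\bZ \bO \bO \btR \to \tO \bZ \bO \btR\})$. I would attack each of these by automated search for matrix interpretations, exactly as in \cref{sec:subsystems-T,sec:odd-trajectories}: factor the proof into rule-removal steps using \cref{thm:relative-termination}, \cref{lem:reversal}, and \cref{lem:T-top-termination}, and at every step invoke a SAT-encoded search for a natural or arctic interpretation satisfying the appropriate monotonicity constraints. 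The base case is termination of $\rX$, already supplied by \cref{lem:A-termination}.

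The main obstacle is not the translation, which is essentially mechanical, but the pragmatic question of whether the residual subsystems admit matrix interpretation proofs within parameters the solver can handle. The examples of $\rS_1$ and $\rS_2$ in \cref{sec:odd-trajectories} show this is far from guaranteed, even with thousands of CPU hours; the ``missing'' residues $0$, $1$, $5$, $7$ modulo $8$ presumably lie in the same hard regime, which is precisely why the theorem is restricted to $2$, $3$, $4$, $6$. My concrete plan for the four tractable cases is to sweep matrix dimensions up to roughly $5$ or $6$ in both the natural and the arctic settings, as in \cref{table:subsystems}, and retain the first certificate the solver returns. Guided by the Farkas variant in \cref{sec:farkas-variant} and the mod-$4$ experiments, I would expect some cases to yield to low-dimensional natural interpretations while others are likely to require arctic coefficients or a multi-step rule-removal proof.
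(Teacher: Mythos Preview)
Your proposal is correct and mirrors the paper's own approach exactly: the paper sets up $\rC^{\mathrm{Even}/8}$ and $\rT^{\mathrm{Odd}/8}$, notes (without restating) the mod-$8$ analogue of \cref{prop:mod4-equiv}, and then establishes the theorem by automated matrix-interpretation termination proofs for precisely the four subsystems you identify. Your anticipation that some cases are harder is also borne out---the paper's \cref{table:sat-solving} lists $\rC^{\mathrm{Even}/8} \setminus \{\bO \bO \bZ \btR \to \bO \bO \btR\}$ and $\rT^{\mathrm{Odd}/8} \setminus \{\bZ \bO \bO \btR \to \tO \bZ \bO \btR\}$ among the more demanding SAT instances.
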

\cref{thm:collatz-mod4,thm:collatz-2346mod8} are also not difficult to prove by hand; however, it remains open whether an analogue of \cref{thm:collatz-2346mod8} holds for the residue classes of $0$, $1$, $5$, $7$ modulo $8$, that is, we have neither manual nor automated proofs for these cases. They may serve well as easier\footnote{See Denend~\cite[Section~4]{Den18} for a heuristic prediction of the relative easiness of those cases. Denend uses an empirical measure based on the number of odd elements occurring in a trajectory until either a certain residue class or $1$ is encountered, which resembles the \emph{ones-ratio} and the \emph{stopping time ratio} defined by Applegate~and~Lagarias~\cite{AL03}.} Collatz-like open problems to test our abilities against, so we offer a \$500 reward for a resolution of any of the following conjectures.\footnote{The case $0$ is omitted as its answer follows from the case $5$ by \cref{fig:modulo}.}
\begin{conjecture}
  All nonconvergent $C$-trajectories contain some $n \equiv 1 \pmod 8$.
\end{conjecture}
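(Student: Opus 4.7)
The plan is to mirror the paper's strategy for the other problems in \cref{sec:collatz-mod-8}: encode the conjecture as the termination of an SRS obtained from $\rT^{\mathrm{Odd}/8}$ by deleting the rule that handles residue $1 \pmod 8$. Concretely, let $\rho$ be the rule $\bZ \bZ \bO \btR \to \tZ \bO \bZ \btR$, and consider the SRS $\rT^{\mathrm{Odd}/8} \setminus \{\rho\}$. By an argument directly parallel to \cref{prop:mod4-equiv}, termination of this system is equivalent to the statement that every nonconvergent $T$-trajectory contains some $n \equiv 1 \pmod 8$; since $1 \pmod 8$ is odd and the odd iterates of a $T$-trajectory coincide with those of the corresponding $C$-trajectory, this transfers to the conjecture as stated. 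Establishing the equivalence follows the recipe of \cref{thm:simulation}: the forward direction extracts an infinite residue-$1$-free $T$-trajectory from any hypothetical infinite rewrite sequence, while the reverse direction constructs, from any such trajectory, an infinite rewrite sequence by bubbling binary digits to the right and firing one of the surviving dynamic rules.

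Next I would deploy the full machinery used for the proofs of \cref{thm:collatz-mod4,thm:collatz-2346mod8}. Since \cref{lem:A-termination} already gives $\SN(\rX)$, the task reduces to stepwise rule removal over the remaining dynamic rules of $\rT^{\mathrm{Odd}/8} \setminus \{\rho\}$. I would invoke \cref{lem:reversal} to try both this system and its reversal, and an analogue of \cref{lem:T-top-termination} adapted to $\rT^{\mathrm{Odd}/8}$ so that dynamic rules can be removed under a top-termination relaxation (replacing extended monotone algebras with weakly monotone ones). The core computational step is SAT-based search for natural and arctic matrix interpretations, iterated over increasing dimensions and coefficient ranges. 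If direct search fails, I would pursue: a hybrid variant in the spirit of \cref{sec:rewriting-hybrid} with the same rule omitted; further-accelerated mixed-base systems analogous to $\rS$ from \cref{sec:odd-trajectories}; and refinements in which the residues are unfolded modulo $16$ or $32$, giving the matrix search finer structure to exploit.

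The hard part is that this conjecture sits just past the threshold of what modular analysis alone can settle. Deleting vertex $1$ from the transition graph of $C$ modulo $8$ still leaves the short cycle $6 \to 7 \to 6$, whose composite Collatz map is $n \mapsto (3n+2)/2$ and is strictly increasing; analogous increasing cycles persist when we unfold modulo higher powers of two. Any proof must therefore certify that trajectories cannot linger in such increasing cycles indefinitely, which forces the matrix interpretation to encode a potential function that strictly decreases along the long composed rewrite steps realizing each of these cycles: a delicate global constraint, unlike the easier cases of \cref{thm:collatz-mod4,thm:collatz-2346mod8} where every cycle of the avoid-residue subgraph already shrinks the iterate. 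This is likely why direct search has resisted the thousands of CPU hours already devoted to the problem, and is precisely what makes the conjecture interesting: it is the smallest mod-$2^k$ residue question that appears to demand a ranking argument of the same qualitative flavor as the full Collatz conjecture.

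As a theoretical contingency in case concrete search keeps failing, I would try to adapt the argument of \cref{thm:Z-no-direct-natural-matrix} to prove that natural matrix interpretations provably cannot remove $\rho$ from $\rT^{\mathrm{Odd}/8}$. Such a negative result would redirect effort toward arctic (or mixed natural/arctic) interpretations and toward the interpretation variants cited but not yet tried in this paper, and would clarify why a modular question with such an innocent statement carries so much of the difficulty of the original problem.
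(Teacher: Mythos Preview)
The statement you are addressing is not a theorem in the paper but an explicitly \emph{open conjecture}: the paper states that ``it remains open whether an analogue of \cref{thm:collatz-2346mod8} holds for the residue classes of $0$, $1$, $5$, $7$ modulo $8$, that is, we have neither manual nor automated proofs for these cases,'' and offers a \$500 reward for resolving it. There is therefore no ``paper's own proof'' to compare your attempt against.

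Your proposal is not a proof either; it is a research plan. The plan itself is sound and faithfully reconstructs the paper's methodology: the encoding via $\rT^{\mathrm{Odd}/8} \setminus \{\bZ \bZ \bO \btR \to \tZ \bO \bZ \btR\}$ is exactly the system the paper associates to this residue class, and your diagnosis of why simple modular analysis fails (the surviving increasing cycle $6 \to 7 \to 6$ in the mod-$8$ transition graph, with composite map $n \mapsto (3n+2)/2$) is correct and is precisely why the paper lists this case as open rather than solved. But the paper's authors already executed the concrete search you describe---matrix interpretations over both orientations, with weakly monotone algebras via the top-termination reduction---and report no success for residue $1$ (see the discussion surrounding \cref{thm:collatz-2346mod8}). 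Your contingency of adapting \cref{thm:Z-no-direct-natural-matrix} to rule out natural matrix interpretations is an interesting research direction, but it too would not prove the conjecture; at best it would explain the failure of one proof technique.

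In short: there is no gap to name because there is no argument to evaluate. What you have written is a reasonable statement of the problem's difficulty and a catalogue of things one might try, but none of it constitutes a proof, and the paper makes no claim to have one.
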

\begin{conjecture}
  All nonconvergent $C$-trajectories contain some $n \equiv 5 \pmod 8$.
\end{conjecture}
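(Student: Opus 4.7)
The plan is to reduce the conjecture to a termination statement about one of the rewriting systems in the paper and then attack it with the same SAT-assisted matrix interpretation method that settled the $2,3,4,6 \pmod 8$ cases of \cref{thm:collatz-2346mod8}. The cleanest target is $\rS_2 = \rS \setminus \{\bO \bZ \oddR \to \oddR\}$: as the authors remark immediately after \cref{prop:s-convergence}, an $S$-trajectory differs from a $C$-trajectory only when the $n \equiv 5 \pmod 8$ rule of $S$ fires, which is exactly the rule we delete in $\rS_2$. So the equivalence of \cref{prop:s-convergence-5mod8} transfers from $S$ to $C$, and $\SN(\rS_2)$ would already imply the conjecture. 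A backup target, equivalent by the evident analogue of \cref{prop:mod4-equiv}, is $\rT^{\mathrm{Odd}/8} \setminus \{\bO \bZ \bO \btR \to \tT \bZ \bZ \btR\}$; I would pursue $\rS_2$ first because $\rS$ has strictly lower derivational complexity than $\rT^{\mathrm{Odd}/8}$, which gives low-dimension matrix interpretations a better chance.

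The proof proper would proceed by stepwise rule removal via \cref{thm:relative-termination}. In each round I would, following \cref{lem:T-top-termination}, reframe any surviving dynamic rule as a relative top-termination problem so that weakly monotone algebras and, in particular, arctic integers become available; it was precisely this move to arctic interpretations that unlocked the Farkas variant in \cref{sec:farkas-variant}. For each round I would encode the constraints of \cref{thm:monotone-algebra}---via \cref{eq:matrix-check} for natural interpretations and the analogue from \cref{sec:arctic-interpretations} for arctic ones---as a SAT instance over $d$-dimensional matrices with bounded coefficients, running the search on both $\rS_2$ and its reversal $\rev{\rS_2}$ in view of \cref{lem:reversal} and scaling $d$ up to $7$ or $8$ with the negative-branching heuristic recommended by the authors. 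In parallel I would invoke the full dependency pair framework used by \AProVE{} and \Matchbox{}, which opens up proof structures not captured by direct rule removal.

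The main obstacle is precisely what blocked the authors: despite thousands of CPU hours, no matrix interpretation of the sizes they were able to search certifies $\SN(\rS_2)$. If the plain search remains out of reach, I would try two enrichments. First, preprocess the system: compose two consecutive applications of the surviving dynamic rules into derived macro-rules, or apply semantic labelling indexed by $\Val(\cdot) \bmod 2^j$ for small $j$, exploiting the fact that the forbidden residue $5 \pmod 8$ restricts how the remaining dynamic rules may interleave. Second, couple a partial interpretation proof with a finite case analysis on the transition graph modulo $2^m$ for $m \geq 4$: enumerate the simple cycles avoiding the residue class $5 \pmod 8$, identify the arithmetically expanding ones, and search for interpretations that are required to strictly decrease only along those cycles. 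The deeper risk is that \emph{no} natural or arctic matrix interpretation of any fixed dimension certifies $\SN(\rS_2)$, in which case escaping the current framework---perhaps via triangular weighted automata or non-linear polynomial interpretations---would be necessary before the conjecture falls to automated methods.
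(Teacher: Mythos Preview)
The paper does not prove this statement: it is listed explicitly as an \emph{open conjecture}, for which the authors offer a \$500 reward. They write that for the residue classes $0,1,5,7 \pmod 8$ ``we have neither manual nor automated proofs for these cases.'' So there is no proof in the paper to compare your attempt against.

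Your submission is not a proof either---it is a research plan, and you are candid about this. You correctly identify $\rS_2$ and $\rT^{\mathrm{Odd}/8} \setminus \{\bO \bZ \bO \btR \to \tT \bZ \bZ \btR\}$ as rewriting systems whose termination would imply the conjecture, and you correctly note that the authors already threw thousands of CPU hours at $\rS_2$ without success. Your proposed enrichments (semantic labelling by residue class, composing dynamic rules, cycle analysis modulo $2^m$ for $m \geq 4$) are sensible directions, but none of them is carried out, and you yourself flag the possibility that no matrix interpretation of any dimension certifies $\SN(\rS_2)$. In short: the plan is reasonable as a plan, but there is no actual argument here, and the paper itself contains none for you to be compared to.
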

\begin{conjecture}
  All nonconvergent $C$-trajectories contain some $n \equiv 7 \pmod 8$.
\end{conjecture}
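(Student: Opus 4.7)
The plan is to recast the conjecture as a termination problem in the style of \cref{sec:collatz-mod-8}. Let $\rR$ denote the rewriting system $\rT^{\mathrm{Odd}/8} \setminus \{\bO\bO\bO\btR \to \tT\bO\bO\btR\}$, in which the dynamic rule corresponding to the application of $T$ on inputs $n \equiv 7 \pmod 8$ has been deleted. By exactly the argument used to establish the analogues of \cref{prop:mod4-equiv} for $\rT^{\mathrm{Odd}/8}$, $\SN(\rR)$ is equivalent to the statement that every nonconvergent Collatz trajectory encounters some $n \equiv 7 \pmod 8$. So the task reduces to proving $\SN(\rR)$.

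The first line of attack is automated. Since $\SN(\rX)$ is already known (\cref{lem:A-termination,lem:reversal}), it suffices to remove, one by one, the four surviving dynamic rules $\bZ\btR \to \btR$, $\bZ\bZ\bO\btR \to \tZ\bO\bZ\btR$, $\bZ\bO\bO\btR \to \tO\bZ\bO\btR$, and $\bO\bZ\bO\btR \to \tT\bZ\bZ\btR$. For each, we invoke \cref{lem:T-top-termination} to reduce the rule-removal step to a relative top-termination problem on the reversed system and then use the SAT-based encoding to search for arctic matrix interpretations satisfying \cref{thm:monotone-algebra}. Informed by \cref{table:subsystems} and the Farkas interpretations in~\eqref{eq:farkas-interpretations}, I would start with arctic dimensions $4$--$5$ and gradually scale up both the dimension and the coefficient range, since experience with mixed-base Collatz encodings suggests that truly non-obvious statements tend to require dimension $\geq 5$.

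The main obstacle is that this is precisely a case where the authors report having already spent thousands of CPU hours without success; so the honest expectation is that matrix interpretations in their simplest form will not suffice. A priori there are two obstructions: the SAT search may blow up before reaching an expressive-enough dimension, or there may be a structural reason — in the spirit of \cref{thm:Z-no-direct-natural-matrix,thm:Z-no-dp-natural-matrix} — that such an interpretation does not exist at all. If the latter, I would try to identify an $\N$-rational-style obstruction adapted to the mixed binary--ternary setting, and then switch to a different carrier (arctic integers with dependency pairs, polynomial interpretations, or Knuth--Bendix orderings), possibly combined with a preliminary value-preserving transformation of $\rR$ that breaks the obstruction.

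If every automated route fails, the fallback is a manual proof modeled on the $n \equiv 6 \pmod 8$ argument of \cref{sec:collatz-mod-8}: refine the transition graph to some higher modulus $2^m$ (say $m = 4$ or $m = 5$), delete all nodes that force a visit to $7 \pmod 8$, enumerate the simple cycles of the residual graph, and compute the multiplier that each cycle applies to an iterate. The hardest part — and, I suspect, the true reason this conjecture is the \$500 target and not a theorem — is that the residual graph almost certainly contains at least one \emph{net-expanding} simple cycle, so the purely combinatorial cycle-by-cycle decrease argument used for $\equiv 6 \pmod 8$ cannot close the proof. One would then need a density argument, lower-bounding the proportion of halving steps to tripling steps along any walk in the residual graph in a strong enough form to force eventual decrease; this is essentially a localized version of the heuristic that the Collatz map contracts on average, and putting it on rigorous footing in the restricted graph — without slipping into the territory of the full Collatz conjecture — is where I expect the real difficulty to lie.
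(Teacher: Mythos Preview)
The paper does not prove this statement: it is explicitly presented as an open conjecture, with a \$500 reward offered for its resolution. There is therefore no ``paper's own proof'' to compare against.

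Your proposal is not a proof but a research plan, and you yourself recognize this (you call it ``the \$500 target and not a theorem''). The reformulation you give is correct: termination of $\rT^{\mathrm{Odd}/8} \setminus \{\bO\bO\bO\btR \to \tT\bO\bO\btR\}$ is equivalent to the conjecture, by the same reasoning that underlies the modulo-$8$ analogue of \cref{prop:mod4-equiv}. Your outlined strategy (top termination via \cref{lem:T-top-termination}, arctic interpretations, scaling dimension) is also the natural one in this setting. One small inaccuracy: the ``thousands of CPU hours'' remark in the paper refers to the subsystems $\rS_1$ and $\rS_2$ of \cref{sec:odd-trajectories}, not to this system; for the residue-$7$ case the paper simply reports ``neither manual nor automated proofs,'' without quantifying the effort.

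The genuine gap is that nothing in the proposal actually establishes the conjecture. The automated route is speculative by your own admission, and the manual fallback ends exactly where you say it does: once the residual transition graph contains a net-expanding cycle, you need a density or drift argument that is not supplied. So as a proof attempt this is incomplete; as a diagnosis of why the problem is hard, it is accurate and matches the paper's own assessment.
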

Even the following easier statement remains unproved, although we were able to prove (both manually and automatically) its versions with $\{1,5\}$ and $\{1,7\}$ as the residue classes.
\begin{conjecture}\label{conj:5or7mod8}
  All nonconvergent $C$-trajectories contain either some $n \equiv 5 \pmod 8$ or some $n \equiv 7 \pmod 8$.
\end{conjecture}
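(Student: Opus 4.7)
The plan is to phrase Conjecture~\ref{conj:5or7mod8} as a single termination statement and then search for a matrix interpretations proof, in the style of \cref{thm:collatz-mod4,thm:collatz-2346mod8}. The two rules $\bO \bZ \bO \btR \to \tT \bZ \bZ \btR$ and $\bO \bO \bO \btR \to \tT \bO \bO \btR$ are precisely the dynamic rules of $\rT^{\mathrm{Odd}/8}$ that encode the action of $T$ on the residue classes $5$ and $7$ mod $8$, so adapting the proof of \cref{prop:mod4-equiv} shows that Conjecture~\ref{conj:5or7mod8} is equivalent to the termination of the $10$-rule system obtained by deleting those two rules from $\rT^{\mathrm{Odd}/8}$: a nonconvergent $C$-trajectory avoiding $5$ and $7$ mod $8$ corresponds to an infinite rewrite sequence that never uses either deleted rule, and conversely every such sequence projects to a nonconvergent trajectory avoiding $5$ and $7$ mod $8$.

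For the termination step, I would apply \cref{thm:relative-termination} iteratively to peel off the three remaining dynamic rules (the even rule $\bZ \btR \to \btR$ together with the two odd rules for residues $1$ and $3$ mod $8$) one at a time, leaving only $\rX$, which is already terminating by \cref{lem:A-termination}. Using \cref{lem:reversal} to reverse the system and \cref{lem:T-top-termination} to reduce to a relative top termination problem whenever it helps, each removal step becomes a matrix interpretations obligation of exactly the kind handled in \cref{sec:natural-arctic-matrix-interpretations}. I would dispatch these obligations to the SAT-based solver described in \cref{sec:interpretation-method}, beginning with natural interpretations of dimensions $2$ through $4$ and escalating to arctic interpretations of dimension $5$ or $6$, mirroring the pattern of \cref{table:subsystems} and the Farkas proof of \cref{eq:farkas-interpretations}. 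The hybrid encoding based on the system $\rL$ of \cref{sec:rewriting-hybrid} gives a natural fallback if the mixed-base encoding proves intractable.

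The main obstacle, in my view, is intrinsic to the combinatorics of this particular case. Deleting vertices $5$ and $7$ from the mod-$8$ transition diagram of $C$ shown in \cref{fig:modulo} leaves two cycles on the surviving vertices: $(1,4,2,1)$ with composite multiplier $3/4$, and $(1,4,6,3,2,1)$ with composite multiplier $9/8$. The presence of a strictly increasing cycle rules out the elementary cycle-by-cycle argument used for the $6 \pmod 8$ case; the truth of the conjecture rests on the $2$-adic fact that no positive integer can follow the five-step increasing cycle indefinitely, and any matrix interpretation must encode this indirectly. Presumably this is why the paper's $\{1,7\}$ case, whose reduced graph has only decreasing cycles, and $\{1,5\}$ case, whose only increasing cycle is a short two-step loop, were both resolved automatically, while $\{5,7\}$ has not been. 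Any successful interpretation will have to be weakly decreasing along the entire five-step increasing cycle yet strictly decreasing once this cycle is composed appropriately with the short cycle $(1,4,2,1)$; based on the hardest entries of \cref{table:subsystems}, my expectation is that this forces matrix dimension at least $5$ and quite possibly arctic coefficients drawn from a larger finite domain than has so far been explored, making the corresponding SAT instances substantially harder than those we have already solved.
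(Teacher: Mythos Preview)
The paper does not prove this statement; it is explicitly left open. Immediately after stating \cref{conj:5or7mod8}, the authors write that it ``remains unproved'' and go on to reformulate it as the convergence of the partial generalized Collatz function $H$ of~\cref{eq:fn-5or7mod8}. So there is no proof in the paper for you to be compared against.

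Your proposal is not a proof either; it is a research plan, and it is precisely the plan the authors already carried out for the cases they did resolve (\cref{thm:collatz-mod4,thm:collatz-2346mod8}) and presumably attempted without success here. The reformulation you give---delete the two dynamic rules of $\rT^{\mathrm{Odd}/8}$ corresponding to the residues $5$ and $7$ and prove termination of the resulting system---is correct and is exactly the shape of the equivalences the paper states just above the conjecture. Your cycle analysis also matches the paper's: in the proof of the proposition following \cref{conj:5or7mod8}, the authors identify the same two surviving cycles $(1,4,2,1)$ and $(1,4,6,3,2,1)$ with composite maps $n\mapsto(3n+1)/4$ and $n\mapsto(9n+7)/8$, and it is the second, increasing cycle that blocks the elementary argument.

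The genuine gap is that the hard step---actually finding matrix interpretations that remove the remaining dynamic rules---has not been carried out, and the paper's experience suggests it cannot be with current parameter ranges. Your final paragraph correctly diagnoses this as the obstacle, but a diagnosis is not a resolution. Until interpretations are exhibited (or some other termination argument is supplied), the proposal remains a strategy rather than a proof.
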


We can alternatively formulate each of the above conjectures as the convergence problem of some partial generalized Collatz function. For instance, \cref{conj:5or7mod8} is equivalent to the convergence of the function $H \colon \Np_\bot \to \Np_\bot$ defined as follows.
\begin{equation}\label{eq:fn-5or7mod8}
  H(n) =
  \begin{cases}
    \frac{3n}{4} & \text{if } n \equiv 0 \pmod 4 \\
    \frac{9n+1}{8} & \text{if } n \equiv 7 \pmod 8 \\
    \bot & \text{otherwise}
  \end{cases}
\end{equation}
\begin{proposition}
  $H$ is convergent if and only if all nonconvergent $C$-trajectories contain either some $n \equiv 5 \pmod 8$ or some $n \equiv 7 \pmod 8$.
\end{proposition}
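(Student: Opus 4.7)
The plan is to relate $H$-trajectories and $C$-trajectories through the injection $\alpha \colon \Np \to \Np$ given by $\alpha(n) = 8n + 1$, whose image is exactly the set of positive integers $m \geq 9$ with $m \equiv 1 \pmod 8$. Each defined step of $H$ will be shown to correspond to a safe Collatz subtrajectory of length three or five between two consecutive $\alpha$-images, while on the residues where $H$ is undefined the Collatz trajectory starting from $\alpha(n)$ will provably encounter $5$ or $7 \pmod 8$.

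The technical core consists of three direct modular verifications together with one observation about the residue dynamics of $C$ modulo $8$. First, for $n = 4k \equiv 0 \pmod 4$, the path $32k+1 \to 96k+4 \to 48k+2 \to 24k+1$ establishes $C^3(\alpha(n)) = \alpha(H(n))$ with intermediate residues $4$ and $2$ modulo $8$. Second, for $n = 8k+7 \equiv 7 \pmod 8$, the analogous five-step path from $64k+57$ to $72k+65$ establishes $C^5(\alpha(n)) = \alpha(H(n))$ with intermediate residues $4, 6, 3, 2$ modulo $8$. Third, for each of the five residues $n \bmod 8 \in \{1,2,3,5,6\}$ on which $H$ is undefined, the Collatz trajectory from $\alpha(n)$ reaches some iterate $\equiv 5$ or $\equiv 7 \pmod 8$ within at most five steps and before passing through any other iterate $\equiv 1 \pmod 8$. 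The auxiliary observation, read off from the Collatz transition graph modulo $8$, is that any nonconvergent $C$-trajectory avoiding $5, 7 \pmod 8$ contains infinitely many iterates $\equiv 1 \pmod 8$: any such trajectory has infinitely many odd iterates, safety forces these into $\{1, 3\} \pmod 8$, and any iterate $\equiv 3 \pmod 8$ is necessarily followed within two $C$-steps by one $\equiv 1 \pmod 8$ since the alternative $\equiv 5 \pmod 8$ is forbidden.

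Both directions of the equivalence then follow by contrapositive. For the ``if'' direction, an infinite $H$-trajectory $(n_i)_{i \geq 0}$ lifts via the first two verifications to a $C$-trajectory from $\alpha(n_0)$ that safely visits every $\alpha(n_i) \geq 9$; such a trajectory never enters the cycle $\{1, 2, 4\}$ and is therefore a nonconvergent $C$-trajectory avoiding $5, 7 \pmod 8$, contradicting the hypothesis. For the ``only if'' direction, given a nonconvergent $C$-trajectory avoiding $5, 7 \pmod 8$, the auxiliary observation yields an infinite subsequence of $\equiv 1 \pmod 8$ iterates whose $\alpha$-preimages form a sequence $(n_i)$; the third verification forces each $n_i$ into the domain of $H$, and the first two verifications give $n_{i+1} = H(n_i)$, producing an infinite $H$-trajectory that contradicts $H$-convergence.

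The main obstacle is the exhaustive mod-$64$ case analysis needed for the third verification: one must check that for each of the five ``$\bot$ residues'' the Collatz trajectory from $\alpha(n)$ hits $\equiv 5$ or $\equiv 7 \pmod 8$ before returning to $\equiv 1 \pmod 8$, which is what certifies that the only safe step-counts between successive $\equiv 1 \pmod 8$ iterates in a safe trajectory are three and five, matching the two defined cases of $H$ exactly. Once that analysis is in hand, the rest of the proof is arithmetic bookkeeping with $\alpha$.
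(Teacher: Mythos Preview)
Your proof is correct and is essentially the same argument as the paper's, packaged slightly differently. The paper introduces an intermediate function $G$ on $\Np \setminus \{1\}$ by reading off the two admissible cycles $(1,4,2,1)$ and $(1,4,6,3,2,1)$ in the mod-$8$ transition graph of $C$ (yielding the maps $n \mapsto (3n+1)/4$ on $n \equiv 1 \pmod{32}$ and $n \mapsto (9n+7)/8$ on $n \equiv 57 \pmod{64}$), and then conjugates $G$ to $H$ via the bijection $n \mapsto (n-1)/8$; your $\alpha(n) = 8n+1$ is precisely the inverse of that bijection, and your two ``safe'' Collatz segments of lengths three and five are exactly those two cycles. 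Where the paper appeals to the cycle structure of the transition graph to see that no other safe passage between $\equiv 1 \pmod 8$ iterates exists, you obtain the same conclusion by your explicit residue-by-residue check on the five undefined classes --- a more computational route to the identical fact.
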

\begin{proof}
  Recall from \cref{fig:modulo} the transition graph of the iterates in $C$-trajectories across residue classes modulo $8$. Observe that if a nonconvergent trajectory avoids the residue classes of $5$ and $7 \pmod 8$, then it has to visit the residue class of $1 \pmod 8$, so we may assume without loss of generality that each nonconvergent trajectory starts with some $n \equiv 1 \pmod 8$.

  Then a nonconvergent trajectory that avoids the residue classes of $5$ and $7 \pmod 8$ can traverse only the cycle of classes $(1, 4, 2, 1)$ or $(1, 4, 6, 3, 2, 1)$. The former, which is traversed only if $n \equiv 1 \pmod{32}$, effects the map $n \mapsto (3n+1)/4$. The latter, which is traversed only if $n \equiv 57 \pmod{64}$, effects the map $n \mapsto (9n+7)/8$. If $n$ belongs to any other residue class, either the trajectory is convergent, or it contains some $k \equiv 5 \pmod 8$ or some $k \equiv 7 \pmod 8$. As a result, the function $G \colon \Np_\bot \setminus \{1\} \to \Np_\bot \setminus \{1\}$ defined as
  \begin{equation*}
    G(n) =
    \begin{cases}
      \frac{3n+1}{4} & \text{if } n \equiv \phantom{5}1 \pmod{32} \\
      \frac{9n+7}{8} & \text{if } n \equiv 57 \pmod{64} \\
      \bot & \text{otherwise}
    \end{cases}
  \end{equation*}
  is nonconvergent if and only if there exists a nonconvergent $C$-trajectory that avoids the residue classes of $5$ and $7 \pmod 8$. Moreover, $H$ as defined in~\cref{eq:fn-5or7mod8} is convergent if and only if $G$ is convergent, because the bijection $n \mapsto (n-1)/8$ maps a $G$-trajectory to an $H$-trajectory.
\end{proof}

\subsection{SAT Solving Considerations}
\label{sec:sat-solving}

\paragraph{Phase-saving heuristic.} Most top-tier termination tools, such as \AProVE{}, \Matchbox{}, and \TTTT{}, use the SAT solver \MiniSat~\cite{ES04} to search for matrix interpretations. This choice is somewhat surprising as \MiniSat{} has not been updated since 2008 and the performance of SAT solvers has improved significantly in the last decade. The use of \MiniSat{} in these provers is motivated by its observed effectiveness in finding interpretations. We investigated the reason for this, which turned out to be a heuristic that \MiniSat{} disables in its default configuration.

Most SAT solvers sort variables by the frequency of their occurrences in recent conflict clauses, selecting the highest ranked one as the variable to branch on~\cite{MMZ+01}. This procedure is similar in \MiniSat{} and modern SAT solvers. However, once a variable is picked, the solvers differ in the truth value assignment that they explore first. \MiniSat{} uses negative branching~\cite{ES04}, which explores the ``false'' branch first for all decision variables. The empirical effectiveness of this heuristic likely stems from its interaction with the common choices of propositional encodings used when translating problems into SAT instances. Modern SAT solvers use phase-saving~\cite{PD07}, which first explores the branch corresponding to the truth value to which the variable was forced most recently during unit propagation. We offer the following explanation for the effectiveness of negative branching.

We use the \emph{order encoding}~\cite{TTKB09,PJ11} when translating the integer constraints from \cref{sec:natural-arctic-matrix-interpretations} into CNF formulas. In this encoding, each variable $X$ that takes values in a finite domain $A = \{0,1,\dots,n+1\}$ is represented by a corresponding list of $n+1$ Boolean variables $(x_0, x_1, \dots, x_n)$. Each Boolean variable $x_i$ indicates $X > i$. As an example, $X = 2$ has the corresponding assignment $(1, 1, 0, 0, \dots, 0)$. This encoding gives a natural way of representing inequality constraints. For instance, the constraint $a < X \leq b$ can be enforced simply by setting $x_a = 1$ and $x_b = 0$. As another example, for variables $X$, $Y$ the inequality $X \leq Y$ can be expressed by the conjunction $\bigwedge_{i=0}^{n} (\neg x_{i} \vee y_{i})$. Consequently, when using the order encoding, representations that correspond to smaller numbers require setting a larger fraction of the Boolean variables to false. For instance, the representation of zero or minus infinity assigns all variables to false. Moreover, the matrices and the vectors that occur in the interpretations are often sparse and contain relatively few large numbers. As a result, negative branching possibly directs the search towards a satisfying assignment more often.

Additionally, in contrast to phase-saving, negative branching tends to guide the solver to a potentially different part of the search space after a restart (which is performed frequently in SAT solvers). The increased diversity of the search space (across restarts) that is explored due to negative branching is also possibly helpful for the kinds of the problems discussed in this paper.

One can enable negative branching in most SAT solvers, which in our case improves solver performance for some of the formulas that encode the existence of interpretations.

\paragraph{Heavy-tailed behavior of SAT solvers.} Combinatorial search algorithms can have a tendency to show a large variance in runtime across different, randomly selected initial conditions of the search. For instance, Gomes~et~al.~\cite{GSCK00} observed that DPLL solvers~\cite{DP60,DLL62} had a substantial fraction of very short runs when dealing with relatively hard instances. This suggests that trying to solve such instances using several parallel runs is a reasonable strategy, since some of the runs will have a nonnegligible chance of finishing early.

Modern SAT solvers incorporate heuristics that alleviate the heavy-tailed behavior to some extent; however, the variability in runtime (especially on satisfiable instances) is not completely eliminated. In our termination prover, we run multiple instances of the SAT solver (specifically, \CaDiCaL{}~\cite{Bie19}) in parallel on the formula that encodes the existence of an interpretation. We introduce additional randomness to the procedure by running each parallel instance with a different shuffling of the clauses in the formula. Ideally, solver performance would be invariant to the ordering of the clauses; however, the clause order indirectly affects the variable selection heuristic, which ends up influencing runtime.

\paragraph{Experiments with solver configurations.} \cref{table:sat-solving} shows the difference in runtime for both \MiniSat{} and \CaDiCaL{} due to negative branching and parallelism with a few small-scale experiments. From all the automated proofs in this paper, we selected the steps where the SAT solving phase took longer than 2 seconds, and reran the prover for these specific steps with different configurations of those SAT solvers. When running only a single instance of a solver, we used the nonshuffled formula. In the parallel case, we ran 8 solver instances with one of the instances receiving the nonshuffled formula and the other 7 receiving random shuffles. Running multiple instances of a solver in parallel is a considerable improvement in all cases where the allotted time is enough to find an interpretation. Negative branching has an adverse effect in some cases, although in two of the most difficult instances ($\rF$ and $\rC^{\mathrm{Even}/8} \setminus \{\bO \bO \bZ \btR \to \bO \bO \btR\}$) it makes the difference between finding an interpretation relatively quickly and timing out. It also appears that, at least for the more difficult cases, replacing \MiniSat{} by \CaDiCaL{} can be beneficial. These experiments are not extensive enough to derive broad conclusions; however, they suggest that running multiple instances of a modern SAT solver with negative branching enabled in some and disabled in others can be a reasonably diverse strategy when searching for matrix interpretations.
\begin{table}[t]
  \centering
  \caption{Difference in runtime due to negative branching and parallelism. Each row corresponds to a single step of a relative termination proof occurring in the previous sections of this paper. The leftmost columns show the original system being proved as terminating, along with the parameters of the interpretations being searched for ($D$ for dimension, $V$ for the number of different values each coefficient may take) that remove at least a single rule from the system. For each solver, each of the remaining columns shows the median time the experiment takes over 25 repetitions. Each run times out after 120 seconds, and runs that time out are counted as having taken double the allowed time.}
  {\renewcommand{\arraystretch}{1.2}
    \setlength{\tabcolsep}{3.4pt}
    \begin{tabular}{l rrr c Q{4em}Q{4em} Q{4em}Q{4em}}
      \toprule
      &&&&& \multicolumn{2}{c}{Phase-saving} & \multicolumn{2}{c}{Negative branching} \\
      \cmidrule(lr){6-7} \cmidrule(lr){8-9}
      Problem & Interp. & $D$ & $V$ & Solver & \centering Single & \centering Parallel & \centering Single & \centering Parallel \tabularnewline
      \midrule
      $\rF$ & Arctic & $5$ & $8$ & \multirow{8}{*}{\rotatebox[origin=c]{90}{\MiniSat{}}} & 240.00s & 240.00s & 85.60s & \textbf{5.47s} \\
      $\rT \setminus \{\bZ \btR \to \btR\}$ & Arctic & $3$ & $5$ && 0.62s & \textbf{0.09s} & 67.69s & 15.74s \\
      $\rT \setminus \{\bZ \tZ \to \tZ \bZ\}$ & Arctic & $3$ & $4$ && 3.18s & 0.91s & 2.19s & 0.91s \\
      $\rT \setminus \{\bO \tT \to \tT \bO\}$ & Natural & $4$ & $4$ && 240.00s & 63.71s & 240.00s & 87.85s \\
      $\rT \setminus \{\btL \tT \to \btL \bZ \bO\}$ & Natural & $4$ & $4$ && 240.00s & 240.00s & 69.73s & 5.25s \\
      $\rT \setminus \{\btL \tT \to \btL \bZ \bO\}$ & Arctic & $4$ & $3$ && 1.39s & \textbf{0.18s} & 7.65s & 1.71s \\
      $\rC^{\mathrm{Even}/8} \setminus \{\bO \bO \bZ \btR \to \bO \bO \btR\}$ & Natural & $3$ & $11$ && 240.00s & 240.00s & 240.00s & 240.00s \\
      $\rT^{\mathrm{Odd}/8} \setminus \{\bZ \bO \bO \btR \to \tO \bZ \bO \btR\}$ & Arctic & $3$ & $12$ && 2.35s & 0.36s & 117.37s & 0.88s \tabularnewline
      \midrule
      $\rF$ & Arctic & $5$ & $8$ & \multirow{8}{*}{\rotatebox[origin=c]{90}{\CaDiCaL{}}} & 240.00s & 240.00s & 44.45s & 9.22s \\
      $\rT \setminus \{\bZ \btR \to \btR\}$ & Arctic & $3$ & $5$ && 1.52s & 0.13s & 29.95s & 13.12s \\
      $\rT \setminus \{\bZ \tZ \to \tZ \bZ\}$ & Arctic & $3$ & $4$ && 3.75s & \textbf{0.83s} & 3.27s & 1.71s \\
      $\rT \setminus \{\bO \tT \to \tT \bO\}$ & Natural & $4$ & $4$ && 75.78s & 19.12s & 29.62s & \textbf{8.13s} \\
      $\rT \setminus \{\btL \tT \to \btL \bZ \bO\}$ & Natural & $4$ & $4$ && 75.05s & \textbf{5.22s} & 24.31s & 6.43s \\
      $\rT \setminus \{\btL \tT \to \btL \bZ \bO\}$ & Arctic & $4$ & $3$ && 3.33s & 0.52s & 11.55s & 3.84s \\
      $\rC^{\mathrm{Even}/8} \setminus \{\bO \bO \bZ \btR \to \bO \bO \btR\}$ & Natural & $3$ & $11$ && 240.00s & 240.00s & 240.00s & \textbf{79.05s} \\
      $\rT^{\mathrm{Odd}/8} \setminus \{\bZ \bO \bO \btR \to \tO \bZ \bO \btR\}$ & Arctic & $3$ & $12$ && 1.94s & \textbf{0.33s} & 3.28s & 0.38s \\
      \bottomrule
    \end{tabular}
    \label{table:sat-solving}
  }
\end{table}

\section{Related Work}
\label{sec:related-work}
There are several previous studies of the Collatz conjecture through alternative models of computation.

\paragraph{String rewriting systems.} To our knowledge, Zantema~\cite{Zan05}, with his system $\rZ$, which we saw in \cref{sec:rewriting-unary}, was the first to attempt using an automated method and string rewriting to search for a proof of the Collatz conjecture. In addition, although we independently discovered the mixed binary--ternary system described in \cref{sec:rewriting-mixed-base}, Scollo~\cite{Sco05} had essentially the same idea, the distinction being that he adopted a functional view of the digits that is different than in~\cref{eq:interp}. Scollo was not concerned with proving termination, though, and proposed rewriting primarily as a formalism that forgoes the arithmetic interpretation of the iterates and instead emphasizes the dynamic/computational behavior.

\paragraph{Tag systems.} An $m$-tag~system~\cite{Pos43} is a computational model described by a set of production rules that map symbols to strings (or ``tags''). Given an initial string $X = x_0 x_1 \dots x_n$, the tag system finds the rule $x_0 \tagr Z$ whose LHS matches the leftmost symbol of $X$, appends the tag $Z$ to $X$, and deletes the first $m$ symbols of $XZ$; resulting in a new string $X' = x_m x_{m+1} \dots x_n Z$. The tag system repeats this transformation until reaching a string of length less than $m$, at which point it halts.

De Mol~\cite{De08} showed the existence of a small $2$-tag system with the following rules that simulates the iterated application of the Collatz function given a unary representation:
\begin{equation*}
  \begin{array}{rcl}
    \un & \tagr & \btL \btR \\
    \btL & \tagr & \un \\
    \btR & \tagr & \un \un \un
  \end{array}
\end{equation*}
This tag system halts if and only if the Collatz conjecture holds, giving yet another formulation of the problem. De Mol further extended this scheme to allow the simulation of an arbitrary generalized Collatz function with modulus $d$ by a $d$-tag system.

\paragraph{Cellular automata.} Kari~\cite{Kar12} designed 1D cellular automata that perform multiplication by $3$ and $3/2$ in base $6$, and reformulated both the Collatz conjecture and Mahler's $3/2$ problem as sets of constraints to be satisfied by the space-time diagrams of these cellular automata.

\paragraph{String arithmetic.} Kauffman~\cite{Kau95} developed a formalism to perform arithmetic, which he called \emph{string arithmetic}, and expressed the Collatz conjecture within it. This formalism works with unary representations of numbers, and uses the three symbols $\un$, $\btL$, $\btR$. Letting $\emp$ denote the empty string and $\num$ be any string representing a number, string arithmetic consists of the following bidirectional rewrite rules (or ``identities'') to convert between different strings representing the same number:
\begin{equation*}
  \begin{array}{rcl}
    \btR \btL & \saeq & \emp \\
    \un \un & \saeq & \btL \un \btR \\
    \un \num & \saeq & \num \un
  \end{array}
\end{equation*}
Along with the above identities, Kauffman's encoding of the Collatz function uses the following two rules:
\begin{equation*}
  \begin{array}{lcl}
    \btL \num \btR & \to & \num \\
    \btL \num \btR \un & \to & \btL \num \un \btR \num
  \end{array}
\end{equation*}
The Collatz conjecture becomes the question of whether for strings of $\un$s of all lengths there exists a rewrite sequence using the five rules above to reach the string $\un$.

\section{Conclusions}
\label{sec:conclusions}

We presented an approach to prove convergence of instances of generalized Collatz functions by translating them into string rewriting systems and applying SAT solving to find matrix interpretations that show termination. Important components of our approach include the quality of the translation into rewriting, the use of weakly monotone $\Sigma$-algebras, and dedicated heuristics in the SAT solving phase. We considered some interesting, simpler variants of the Collatz conjecture to gauge the feasibility of this approach in proving mathematically interesting statements. We observed that some variants could be solved only via natural matrix interpretations, while others required arctic matrix interpretations.

Several extensions to this work can further our understanding of the potential of rewriting techniques for answering mathematical questions. For instance, it is of interest to study the efficacy of different termination proving techniques on the problems that we considered. In particular, we found matrix interpretations to be the most successful for our purposes despite the existence of newer techniques developed for automatically proving termination of a few select challenging instances. Although matrix interpretations lead to automated proofs of several weakened variants discussed in this paper, it might still be the case that there exist no such interpretations to establish the termination of the Collatz system $\rT$ (as we have shown in the case of natural matrix interpretations for Zantema's system $\rZ$). This would be an interesting result in itself. Another issue is the matter of representation; specifically, it is worth exploring whether there exists a suitable translation of the Collatz conjecture into the termination of a term, instead of string, rewriting system since many automated termination proving techniques are generalized to term rewriting. Finally, injecting problem-specific knowledge into the rewriting systems or the termination techniques would be helpful as there exists a wealth of information about the Collatz conjecture that could simplify the search for a termination proof.

%%% Local Variables:
%%% mode: latex
%%% TeX-master: "main"
%%% End:

\section*{Acknowledgments}
We thank Johannes Waldmann for insightful discussions regarding arctic matrix interpretations, for pointing us to~\cite{Kar12} and the rewriting system in \cref{ex:waldmann-unary}, for responding to our challenge to solve Farkas' variant with \Matchbox, and for feedback on an early draft. We thank Carsten Fuhs and Jürgen Giesl for responding to our challenge to solve Farkas' variant with \AProVE\@. We additionally thank Carsten Fuhs for his thorough explanations of the dependency pair framework and \AProVE's strategies. We thank Florian Frohn for responding to the challenge to solve the subsystems from \cref{sec:subsystems-T} with \AProVE\@. We thank Jeffrey Lagarias for discussions regarding the problems in \cref{sec:collatz-mod-8}. We thank Luke Schaeffer and Chris Lynch for discussions on alternative rewriting systems that simulate the Collatz map. We thank Amazon Web Services for computing resources. We thank Jeremy Avigad and Jasmin Blanchette for their detailed comments on an early draft. Finally, we thank the reviewers of CADE for their comments on the preliminary version of this paper and the reviewers of the Journal of Automated Reasoning for their comments on the journal version.

This material is based upon work supported by the National Science Foundation under grant CCF-2006363.

%%% Local Variables:
%%% mode: latex
%%% TeX-master: "main"
%%% End:

\newcommand{\etalchar}[1]{$^{#1}$}
 \newcommand{\Proc}[1]{Proceedings of the \nth{#1}}
  \newcommand{\STOC}[1]{\Proc{#1} Symposium on Theory of Computing (STOC)}
  \newcommand{\FOCS}[1]{\Proc{#1} Symposium on Foundations of Computer Science
  (FOCS)} \newcommand{\SODA}[1]{\Proc{#1} Symposium on Discrete Algorithms
  (SODA)} \newcommand{\CCC}[1]{\Proc{#1} Computational Complexity Conference
  (CCC)} \newcommand{\ITCS}[1]{\Proc{#1} Innovations in Theoretical Computer
  Science (ITCS)} \newcommand{\ICS}[1]{\Proc{#1} Innovations in Computer
  Science (ICS)} \newcommand{\ICALP}[1]{\Proc{#1} International Colloquium on
  Automata, Languages, and Programming (ICALP)}
  \newcommand{\STACS}[1]{\Proc{#1} Symposium on Theoretical Aspects of Computer
  Science (STACS)} \newcommand{\LICS}[1]{\Proc{#1} Symposium on Logic in
  Computer Science (LICS)} \newcommand{\CSLw}[1]{\Proc{#1} International
  Workshop on Computer Science Logic (CSL)} \newcommand{\CSL}[1]{\Proc{#1}
  Conference on Computer Science Logic (CSL)} \newcommand{\IJCAR}[1]{\Proc{#1}
  International Joint Conference on Automated Reasoning (IJCAR)}
  \newcommand{\CADE}[1]{\Proc{#1} Conference on Automated Deduction (CADE)}
  \newcommand{\SAT}[1]{\Proc{#1} International Conference on Theory and
  Applications of Satisfiability Testing (SAT)} \newcommand{\RTA}[1]{\Proc{#1}
  International Conference on Rewriting Techniques and Applications (RTA)}
  \newcommand{\MFCS}[1]{\Proc{#1} International Symposium on Mathematical
  Foundations of Computer Science (MFCS)} \newcommand{\TAMC}[1]{\Proc{#1}
  International Conference on Theory and Applications of Models of Computation
  (TAMC)} \newcommand{\DLT}[1]{\Proc{#1} International Conference on
  Developments in Language Theory (DLT)} \newcommand{\TABLEAUX}[1]{\Proc{#1}
  International Conference on Automated Reasoning with Analytic Tableaux and
  Related Methods (TABLEAUX)} \newcommand{\TACAS}[1]{\Proc{#1} International
  Conference on Tools and Algorithms for the Construction and Analysis of
  Systems (TACAS)} \newcommand{\LPAR}[1]{\Proc{#1} International Conference on
  Logic for Programming, Artificial Intelligence and Reasoning (LPAR)}
  \newcommand{\HVC}[1]{\Proc{#1} Haifa Verification Conference (HVC)}
  \newcommand{\DAC}[1]{\Proc{#1} Design Automation Conference (DAC)}
  \newcommand{\DATE}{Proceedings of the Design, Automation and Test in Europe
  (DATE)} \newcommand{\ISAIM}[1]{\Proc{#1} International Symposium on
  Artificial Intelligence and Mathematics (ISAIM)}
  \newcommand{\SOFSEM}[1]{\Proc{#1} International Conference on Current Trends
  in Theory and Practice of Computer Science (SOFSEM)}
  \newcommand{\AAAI}[1]{\Proc{#1} AAAI Conference on Artificial Intelligence
  (AAAI)} \newcommand{\ICML}[1]{\Proc{#1} International Conference on Machine
  Learning (ICML)} \newcommand{\NeurIPS}[1]{\Proc{#1} Conference on Neural
  Information Processing Systems (NeurIPS)}

\appendix
\section{Alternative Proof of \texorpdfstring{\cref{lem:blockwise-termination}}{Lemma}}
\label{sec:alternative-proof-blockwise-termination}

Let us recall the statement of the result before proceeding with the proof.

\blockwise*
\begin{proof}
  Assume there exists some string $X \in \{\bZ, \bO, \tZ, \tO, \tT, \btL, \btR\}^*$ that admits an infinite rewrite sequence for $\rT$. View $X$ as split into blocks delimited by $\btL$ or $\btR$, i.e., for some $k \in \Np$, write
  \begin{equation*}
    X = N_1 \delim_1 N_2 \delim_2 \dots \delim_{k-1} N_k,
  \end{equation*}
  where $\delim_i \in \{\btL, \btR\}$ for each $1 \leq i < k$, and $N_i \in \{\bZ, \bO, \tZ, \tO, \tT\}^*$ for each $1 \leq i \leq k$. Any string containing $X$ also admits an infinite rewrite sequence, so instead consider $Z = \delim_0 X \delim_k$, written as
  \begin{equation*}
    Z = \delim_0 N_1 \delim_1 N_2 \delim_2 \dots \delim_{k-1} N_k \delim_k,
  \end{equation*}
  where $\delim_0, \delim_k \in \{\btL, \btR\}$.

  As $\rT$ is not terminating on $Z$, there exists some rewrite rule applicable to it. Consider an arbitrary block of $Z$ along with its neighbors (to which some rewrite rule of $\rT$ applies), written as
  \begin{equation*}
    V = P c Q d R,
  \end{equation*}
  where $c, d \in \{\btL, \btR\}$ and $P, Q, R \in \{\bZ, \bO, \tZ, \tO, \tT\}^*$. Due to the shapes of the rules of $\rT$, the string $V$ can be rewritten only into one of the following:
  \begin{equation*}
    \begin{array}{rclrrcl}
      V_1 &=& \widetilde{P} c Q d R & \qquad \text{ such that } & P c & \to_\rT & \widetilde{P} c \\[0.25em]
      V_2 &=& P c \widetilde{Q} d R & \qquad \text{ such that } & c Q d & \to_\rT & c \widetilde{Q} d \\[0.25em]
      V_3 &=& P c Q d \widetilde{R} & \qquad \text{ such that } & d R & \to_\rT & d \widetilde{R}
    \end{array}
  \end{equation*}
  In any case, the delimiters are unchanged, and only a single block is affected by the rewrite. This means that any rewrite sequence that starts from $Z$ consists of several sequences that each operate entirely on some block of $Z$. Thus, since $Z$ has finitely many blocks, there exists some block that admits an infinite rewrite sequence. In particular, there exists a string $cNd$, where $c,d \in \{\btL, \btR\}$ and $N \in \{\bZ, \bO, \tZ, \tO, \tT\}^*$, that can be rewritten indefinitely. We claim that this requires $c = \btL$ and $d = \btR$, so the string $cNd$ is of the canonical form $\btL (\bZ \vert \bO \vert \tZ \vert \tO \vert \tT)^* \btR$. As shown below, the other cases are all terminating.
  \begin{enumerate}[(i)]
  \item $\btR N \btR$: This string does not contain $\btL$, so it can be rewritten using only the rules in $\rT \setminus \rB$, but we know from \cref{lem:A-termination} that $\SN(\rT \setminus \rB)$, so there can be no infinite rewrite sequence.
  \item $\btL N \btL$: This string does not contain $\btR$, so it can be rewritten using only the rules in $\rT \setminus \rD_T$, but we know from \cref{lem:A-termination} that $\SN(\rT \setminus \rD_T)$, so there can be no infinite rewrite sequence.
  \item $\btR N \btL$: The SRS $\rT$ contains no rules of the form $s \btL \to t \btL$ or $\btR s \to \btR t$ for any $s, t$, so this string can be rewritten using only the rules in $\rT \setminus (\rB \cup \rD_T)$. Since each subset of a terminating SRS is terminating, by \cref{lem:A-termination} we conclude $\SN(\rT \setminus (\rB \cup \rD_T))$, so there can be no infinite rewrite sequence.
  \end{enumerate}
  This proves the contrapositive of the lemma statement.
\end{proof}

\section{Remaining Part of the Automated Proof for Farkas' Variant}
\label{sec:farkas-interpretations}

With the interpretations from~\cref{eq:farkas-interpretations}, the rules of $\rev{\rF}$ satisfy the following relations:

\begin{NiceMatrixBlock}[auto-columns-width]
  \begin{longtable*}{IJK}
    {[}\btR \tO](\vx) =
    \begin{bNiceMatrix}
      1 & \fademinfty & \fademinfty & \fademinfty & \fademinfty \\
      \fademinfty & \fademinfty & \fademinfty & \fademinfty & \fademinfty \\
      \fademinfty & \fademinfty & \fademinfty & \fademinfty & \fademinfty \\
      \fademinfty & \fademinfty & \fademinfty & \fademinfty & \fademinfty \\
      \fademinfty & \fademinfty & \fademinfty & \fademinfty & \fademinfty
    \end{bNiceMatrix}
    \vx \oplus
    \begin{bNiceMatrix}
      \fademinfty \\
      \fademinfty \\
      \fademinfty \\
      \fademinfty \\
      \fademinfty
    \end{bNiceMatrix}
    & \gg &
    \begin{bNiceMatrix}
      0 & \fademinfty & \fademinfty & \fademinfty & \fademinfty \\
      \fademinfty & \fademinfty & \fademinfty & \fademinfty & \fademinfty \\
      \fademinfty & \fademinfty & \fademinfty & \fademinfty & \fademinfty \\
      \fademinfty & \fademinfty & \fademinfty & \fademinfty & \fademinfty \\
      \fademinfty & \fademinfty & \fademinfty & \fademinfty & \fademinfty
    \end{bNiceMatrix}
    \vx \oplus
    \begin{bNiceMatrix}
      \fademinfty \\
      \fademinfty \\
      \fademinfty \\
      \fademinfty \\
      \fademinfty
    \end{bNiceMatrix}
    = {[}\btR](\vx)
    \\[3.875em]
    {[}\btR \bZ \tZ](\vx) =
    \begin{bNiceMatrix}
      2 & 5 & 2 & \fademinfty & \fademinfty \\
      \fademinfty & \fademinfty & \fademinfty & \fademinfty & \fademinfty \\
      \fademinfty & \fademinfty & \fademinfty & \fademinfty & \fademinfty \\
      \fademinfty & \fademinfty & \fademinfty & \fademinfty & \fademinfty \\
      \fademinfty & \fademinfty & \fademinfty & \fademinfty & \fademinfty
    \end{bNiceMatrix}
    \vx \oplus
    \begin{bNiceMatrix}
      0 \\
      \fademinfty \\
      \fademinfty \\
      \fademinfty \\
      \fademinfty
    \end{bNiceMatrix}
    & \gg &
    \begin{bNiceMatrix}
      0 & 4 & 0 & \fademinfty & \fademinfty \\
      \fademinfty & \fademinfty & \fademinfty & \fademinfty & \fademinfty \\
      \fademinfty & \fademinfty & \fademinfty & \fademinfty & \fademinfty \\
      \fademinfty & \fademinfty & \fademinfty & \fademinfty & \fademinfty \\
      \fademinfty & \fademinfty & \fademinfty & \fademinfty & \fademinfty
    \end{bNiceMatrix}
    \vx \oplus
    \begin{bNiceMatrix}
      \fademinfty \\
      \fademinfty \\
      \fademinfty \\
      \fademinfty \\
      \fademinfty
    \end{bNiceMatrix}
    = {[}\btR \tZ](\vx)
    \\[3.875em]
    {[}\btR \bZ \tO](\vx) =
    \begin{bNiceMatrix}
      2 & \fademinfty & \fademinfty & \fademinfty & \fademinfty \\
      \fademinfty & \fademinfty & \fademinfty & \fademinfty & \fademinfty \\
      \fademinfty & \fademinfty & \fademinfty & \fademinfty & \fademinfty \\
      \fademinfty & \fademinfty & \fademinfty & \fademinfty & \fademinfty \\
      \fademinfty & \fademinfty & \fademinfty & \fademinfty & \fademinfty
    \end{bNiceMatrix}
    \vx \oplus
    \begin{bNiceMatrix}
      0 \\
      \fademinfty \\
      \fademinfty \\
      \fademinfty \\
      \fademinfty
    \end{bNiceMatrix}
    & \gg &
    \begin{bNiceMatrix}
      1 & \fademinfty & \fademinfty & \fademinfty & \fademinfty \\
      \fademinfty & \fademinfty & \fademinfty & \fademinfty & \fademinfty \\
      \fademinfty & \fademinfty & \fademinfty & \fademinfty & \fademinfty \\
      \fademinfty & \fademinfty & \fademinfty & \fademinfty & \fademinfty \\
      \fademinfty & \fademinfty & \fademinfty & \fademinfty & \fademinfty
    \end{bNiceMatrix}
    \vx \oplus
    \begin{bNiceMatrix}
      \fademinfty \\
      \fademinfty \\
      \fademinfty \\
      \fademinfty \\
      \fademinfty
    \end{bNiceMatrix}
    = {[}\btR \tO](\vx)
    \\[3.875em]
    {[}\btR \bO \tO](\vx) =
    \begin{bNiceMatrix}
      2 & 5 & 2 & \fademinfty & \fademinfty \\
      \fademinfty & \fademinfty & \fademinfty & \fademinfty & \fademinfty \\
      \fademinfty & \fademinfty & \fademinfty & \fademinfty & \fademinfty \\
      \fademinfty & \fademinfty & \fademinfty & \fademinfty & \fademinfty \\
      \fademinfty & \fademinfty & \fademinfty & \fademinfty & \fademinfty
    \end{bNiceMatrix}
    \vx \oplus
    \begin{bNiceMatrix}
      0 \\
      \fademinfty \\
      \fademinfty \\
      \fademinfty \\
      \fademinfty
    \end{bNiceMatrix}
    & \gg &
    \begin{bNiceMatrix}
      1 & 4 & 0 & \fademinfty & \fademinfty \\
      \fademinfty & \fademinfty & \fademinfty & \fademinfty & \fademinfty \\
      \fademinfty & \fademinfty & \fademinfty & \fademinfty & \fademinfty \\
      \fademinfty & \fademinfty & \fademinfty & \fademinfty & \fademinfty \\
      \fademinfty & \fademinfty & \fademinfty & \fademinfty & \fademinfty
    \end{bNiceMatrix}
    \vx \oplus
    \begin{bNiceMatrix}
      \fademinfty \\
      \fademinfty \\
      \fademinfty \\
      \fademinfty \\
      \fademinfty
    \end{bNiceMatrix}
    = {[}\btR \tT \tO](\vx)
    \\[3.875em]
    {[}\btR \bO \tT](\vx) =
    \begin{bNiceMatrix}
      2 & \fademinfty & 2 & \fademinfty & 2 \\
      \fademinfty & \fademinfty & \fademinfty & \fademinfty & \fademinfty \\
      \fademinfty & \fademinfty & \fademinfty & \fademinfty & \fademinfty \\
      \fademinfty & \fademinfty & \fademinfty & \fademinfty & \fademinfty \\
      \fademinfty & \fademinfty & \fademinfty & \fademinfty & \fademinfty
    \end{bNiceMatrix}
    \vx \oplus
    \begin{bNiceMatrix}
      0 \\
      \fademinfty \\
      \fademinfty \\
      \fademinfty \\
      \fademinfty
    \end{bNiceMatrix}
    & \gg &
    \begin{bNiceMatrix}
      0 & \fademinfty & 1 & \fademinfty & 0 \\
      \fademinfty & \fademinfty & \fademinfty & \fademinfty & \fademinfty \\
      \fademinfty & \fademinfty & \fademinfty & \fademinfty & \fademinfty \\
      \fademinfty & \fademinfty & \fademinfty & \fademinfty & \fademinfty \\
      \fademinfty & \fademinfty & \fademinfty & \fademinfty & \fademinfty
    \end{bNiceMatrix}
    \vx \oplus
    \begin{bNiceMatrix}
      \fademinfty \\
      \fademinfty \\
      \fademinfty \\
      \fademinfty \\
      \fademinfty
    \end{bNiceMatrix}
    = {[}\btR \tT \tT](\vx)
    \\[3.875em]
    {[}\tZ \bZ](\vx) =
    \begin{bNiceMatrix}
      2 & 6 & 4 & 2 & \fademinfty \\
      \fademinfty & 6 & 4 & \fademinfty & \fademinfty \\
      2 & 6 & 4 & \fademinfty & \fademinfty \\
      2 & 5 & 3 & 2 & \fademinfty \\
      \fademinfty & \fademinfty & \fademinfty & \fademinfty & \fademinfty
    \end{bNiceMatrix}
    \vx \oplus
    \begin{bNiceMatrix}
      0 \\
      \fademinfty \\
      \fademinfty \\
      0 \\
      \fademinfty
    \end{bNiceMatrix}
    & \geq &
    \begin{bNiceMatrix}
      2 & 5 & 2 & \fademinfty & \fademinfty \\
      \fademinfty & 6 & 0 & \fademinfty & \fademinfty \\
      2 & 6 & 2 & \fademinfty & \fademinfty \\
      \fademinfty & \fademinfty & \fademinfty & \fademinfty & \fademinfty \\
      \fademinfty & \fademinfty & \fademinfty & \fademinfty & \fademinfty
    \end{bNiceMatrix}
    \vx \oplus
    \begin{bNiceMatrix}
      0 \\
      \fademinfty \\
      \fademinfty \\
      \fademinfty \\
      \fademinfty
    \end{bNiceMatrix}
    = {[}\bZ \tZ](\vx)
    \\[3.875em]
    {[}\tO \bZ](\vx) =
    \begin{bNiceMatrix}
      \fademinfty & \fademinfty & \fademinfty & 3 & \fademinfty \\
      2 & 6 & 4 & \fademinfty & \fademinfty \\
      2 & 6 & 4 & \fademinfty & \fademinfty \\
      \fademinfty & \fademinfty & \fademinfty & 2 & \fademinfty \\
      2 & 5 & 3 & 2 & \fademinfty
    \end{bNiceMatrix}
    \vx \oplus
    \begin{bNiceMatrix}
      1 \\
      \fademinfty \\
      \fademinfty \\
      0 \\
      0
    \end{bNiceMatrix}
    & \geq &
    \begin{bNiceMatrix}
      \fademinfty & \fademinfty & \fademinfty & \fademinfty & \fademinfty \\
      0 & 6 & 0 & \fademinfty & \fademinfty \\
      2 & 6 & 2 & \fademinfty & \fademinfty \\
      \fademinfty & \fademinfty & \fademinfty & \fademinfty & \fademinfty \\
      \fademinfty & \fademinfty & \fademinfty & \fademinfty & \fademinfty
    \end{bNiceMatrix}
    \vx \oplus
    \begin{bNiceMatrix}
      0 \\
      \fademinfty \\
      \fademinfty \\
      \fademinfty \\
      \fademinfty
    \end{bNiceMatrix}
    = {[}\bO \tZ](\vx)
    \\[3.875em]
    {[}\tT \bZ](\vx) =
    \begin{bNiceMatrix}
      2 & \fademinfty & \fademinfty & 2 & \fademinfty \\
      \fademinfty & 6 & 4 & \fademinfty & \fademinfty \\
      3 & \fademinfty & \fademinfty & 2 & \fademinfty \\
      \fademinfty & \fademinfty & \fademinfty & \fademinfty & \fademinfty \\
      2 & \fademinfty & \fademinfty & 2 & \fademinfty
    \end{bNiceMatrix}
    \vx \oplus
    \begin{bNiceMatrix}
      0 \\
      \fademinfty \\
      0 \\
      \fademinfty \\
      0
    \end{bNiceMatrix}
    & \geq &
    \begin{bNiceMatrix}
      2 & \fademinfty & \fademinfty & \fademinfty & \fademinfty \\
      \fademinfty & 6 & 2 & \fademinfty & \fademinfty \\
      3 & \fademinfty & \fademinfty & \fademinfty & \fademinfty \\
      \fademinfty & \fademinfty & \fademinfty & \fademinfty & \fademinfty \\
      \fademinfty & \fademinfty & \fademinfty & \fademinfty & \fademinfty
    \end{bNiceMatrix}
    \vx \oplus
    \begin{bNiceMatrix}
      0 \\
      \fademinfty \\
      \fademinfty \\
      \fademinfty \\
      \fademinfty
    \end{bNiceMatrix}
    = {[}\bZ \tO](\vx)
    \\[3.875em]
    {[}\tZ \bO](\vx) =
    \begin{bNiceMatrix}
      4 & 6 & 4 & \fademinfty & 4 \\
      4 & 6 & 4 & \fademinfty & 4 \\
      4 & 6 & 4 & \fademinfty & 4 \\
      3 & 5 & 3 & \fademinfty & 3 \\
      \fademinfty & \fademinfty & \fademinfty & \fademinfty & \fademinfty
    \end{bNiceMatrix}
    \vx \oplus
    \begin{bNiceMatrix}
      0 \\
      \fademinfty \\
      \fademinfty \\
      0 \\
      \fademinfty
    \end{bNiceMatrix}
    & \geq &
    \begin{bNiceMatrix}
      2 & 5 & 2 & \fademinfty & \fademinfty \\
      1 & 6 & 2 & \fademinfty & \fademinfty \\
      3 & 6 & 2 & \fademinfty & \fademinfty \\
      \fademinfty & \fademinfty & \fademinfty & \fademinfty & \fademinfty \\
      \fademinfty & \fademinfty & \fademinfty & \fademinfty & \fademinfty
    \end{bNiceMatrix}
    \vx \oplus
    \begin{bNiceMatrix}
      0 \\
      \fademinfty \\
      \fademinfty \\
      \fademinfty \\
      \fademinfty
    \end{bNiceMatrix}
    = {[}\bO \tO](\vx)
    \\[3.875em]
    {[}\tO \bO](\vx) =
    \begin{bNiceMatrix}
      \fademinfty & \fademinfty & \fademinfty & \fademinfty & 3 \\
      4 & 6 & 4 & \fademinfty & 4 \\
      4 & 6 & 4 & \fademinfty & 4 \\
      \fademinfty & \fademinfty & \fademinfty & \fademinfty & 2 \\
      3 & 5 & 3 & \fademinfty & 3
    \end{bNiceMatrix}
    \vx \oplus
    \begin{bNiceMatrix}
      1 \\
      \fademinfty \\
      \fademinfty \\
      0 \\
      0
    \end{bNiceMatrix}
    & \geq &
    \begin{bNiceMatrix}
      \fademinfty & \fademinfty & \fademinfty & \fademinfty & \fademinfty \\
      0 & 6 & 1 & \fademinfty & 0 \\
      2 & \fademinfty & 2 & \fademinfty & \fademinfty \\
      \fademinfty & \fademinfty & \fademinfty & \fademinfty & \fademinfty \\
      \fademinfty & \fademinfty & \fademinfty & \fademinfty & \fademinfty
    \end{bNiceMatrix}
    \vx \oplus
    \begin{bNiceMatrix}
      0 \\
      \fademinfty \\
      \fademinfty \\
      \fademinfty \\
      \fademinfty
    \end{bNiceMatrix}
    = {[}\bZ \tT](\vx)
    \\[3.875em]
    {[}\tT \bO](\vx) =
    \begin{bNiceMatrix}
      2 & \fademinfty & 2 & \fademinfty & 2 \\
      4 & 6 & 4 & \fademinfty & 4 \\
      3 & \fademinfty & 3 & \fademinfty & 2 \\
      \fademinfty & \fademinfty & \fademinfty & \fademinfty & \fademinfty \\
      2 & \fademinfty & 2 & \fademinfty & 2
    \end{bNiceMatrix}
    \vx \oplus
    \begin{bNiceMatrix}
      0 \\
      \fademinfty \\
      0 \\
      \fademinfty \\
      0
    \end{bNiceMatrix}
    & \geq &
    \begin{bNiceMatrix}
      2 & \fademinfty & 2 & \fademinfty & 2 \\
      0 & 6 & 1 & \fademinfty & 0 \\
      2 & \fademinfty & 3 & \fademinfty & 2 \\
      \fademinfty & \fademinfty & \fademinfty & \fademinfty & \fademinfty \\
      \fademinfty & \fademinfty & \fademinfty & \fademinfty & \fademinfty
    \end{bNiceMatrix}
    \vx \oplus
    \begin{bNiceMatrix}
      0 \\
      \fademinfty \\
      \fademinfty \\
      \fademinfty \\
      \fademinfty
    \end{bNiceMatrix}
    = {[}\bO \tT](\vx)
    \\[3.875em]
    {[}\tZ \btL](\vx) =
    \begin{bNiceMatrix}
      \fademinfty & \fademinfty & \fademinfty & \fademinfty & \fademinfty \\
      \fademinfty & \fademinfty & \fademinfty & \fademinfty & \fademinfty \\
      \fademinfty & \fademinfty & \fademinfty & \fademinfty & \fademinfty \\
      \fademinfty & \fademinfty & \fademinfty & \fademinfty & \fademinfty \\
      \fademinfty & \fademinfty & \fademinfty & \fademinfty & \fademinfty
    \end{bNiceMatrix}
    \vx \oplus
    \begin{bNiceMatrix}
      6 \\
      6 \\
      6 \\
      5 \\
      \fademinfty
    \end{bNiceMatrix}
    & \geq &
    \begin{bNiceMatrix}
      \fademinfty & \fademinfty & \fademinfty & \fademinfty & \fademinfty \\
      \fademinfty & \fademinfty & \fademinfty & \fademinfty & \fademinfty \\
      \fademinfty & \fademinfty & \fademinfty & \fademinfty & \fademinfty \\
      \fademinfty & \fademinfty & \fademinfty & \fademinfty & \fademinfty \\
      \fademinfty & \fademinfty & \fademinfty & \fademinfty & \fademinfty
    \end{bNiceMatrix}
    \vx \oplus
    \begin{bNiceMatrix}
      6 \\
      4 \\
      2 \\
      \fademinfty \\
      \fademinfty
    \end{bNiceMatrix}
    = {[}\bO \btL](\vx)
    \\[3.875em]
    {[}\tO \btL](\vx) =
    \begin{bNiceMatrix}
      \fademinfty & \fademinfty & \fademinfty & \fademinfty & \fademinfty \\
      \fademinfty & \fademinfty & \fademinfty & \fademinfty & \fademinfty \\
      \fademinfty & \fademinfty & \fademinfty & \fademinfty & \fademinfty \\
      \fademinfty & \fademinfty & \fademinfty & \fademinfty & \fademinfty \\
      \fademinfty & \fademinfty & \fademinfty & \fademinfty & \fademinfty
    \end{bNiceMatrix}
    \vx \oplus
    \begin{bNiceMatrix}
      1 \\
      6 \\
      6 \\
      0 \\
      5
    \end{bNiceMatrix}
    & \geq &
    \begin{bNiceMatrix}
      \fademinfty & \fademinfty & \fademinfty & \fademinfty & \fademinfty \\
      \fademinfty & \fademinfty & \fademinfty & \fademinfty & \fademinfty \\
      \fademinfty & \fademinfty & \fademinfty & \fademinfty & \fademinfty \\
      \fademinfty & \fademinfty & \fademinfty & \fademinfty & \fademinfty \\
      \fademinfty & \fademinfty & \fademinfty & \fademinfty & \fademinfty
    \end{bNiceMatrix}
    \vx \oplus
    \begin{bNiceMatrix}
      0 \\
      6 \\
      2 \\
      \fademinfty \\
      \fademinfty
    \end{bNiceMatrix}
    = {[}\bZ \bZ \btL](\vx)
    \\[3.875em]
    {[}\tT \btL](\vx) =
    \begin{bNiceMatrix}
      \fademinfty & \fademinfty & \fademinfty & \fademinfty & \fademinfty \\
      \fademinfty & \fademinfty & \fademinfty & \fademinfty & \fademinfty \\
      \fademinfty & \fademinfty & \fademinfty & \fademinfty & \fademinfty \\
      \fademinfty & \fademinfty & \fademinfty & \fademinfty & \fademinfty \\
      \fademinfty & \fademinfty & \fademinfty & \fademinfty & \fademinfty
    \end{bNiceMatrix}
    \vx \oplus
    \begin{bNiceMatrix}
      0 \\
      6 \\
      4 \\
      \fademinfty \\
      4
    \end{bNiceMatrix}
    & \geq &
    \begin{bNiceMatrix}
      \fademinfty & \fademinfty & \fademinfty & \fademinfty & \fademinfty \\
      \fademinfty & \fademinfty & \fademinfty & \fademinfty & \fademinfty \\
      \fademinfty & \fademinfty & \fademinfty & \fademinfty & \fademinfty \\
      \fademinfty & \fademinfty & \fademinfty & \fademinfty & \fademinfty \\
      \fademinfty & \fademinfty & \fademinfty & \fademinfty & \fademinfty
    \end{bNiceMatrix}
    \vx \oplus
    \begin{bNiceMatrix}
      0 \\
      6 \\
      4 \\
      \fademinfty \\
      \fademinfty
    \end{bNiceMatrix}
    = {[}\bO \bZ \btL](\vx)
  \end{longtable*}
\end{NiceMatrixBlock}

\section{More Problems to Approach via Rewriting}
\label{sec:more-rewriting}

We adapt the mixed base scheme for several other open problems and reformulate them as termination of relatively small rewriting systems.

\subsection{Mahler's \texorpdfstring{$3/2$}{3/2} Problem}
\label{sec:mahler-32-problem}
\begin{definition}
  Let $\xi \in \R_{>0}$ be a positive real number. It is called a \emph{$Z$-number} if for all $k \in \N$ we have $\fracp\left(\xi \left(\frac{3}{2}\right)^k \right) < \frac{1}{2}$, where $\fracp(\cdot)$ denotes the fractional part of the number.
\end{definition}

Mahler~\cite{Mah68} conjectured that there are no $Z$-numbers. Moreover, he considered a generalized Collatz function $M \colon \Np \to \Np$, defined as follows.
\begin{equation*}
  M(n) =
  \begin{cases}
    \frac{3n}{2} & \text{if } n \equiv 0 \pmod 2 \\
    \frac{3n+1}{2} & \text{if } n \equiv 1 \pmod 2
  \end{cases}
\end{equation*}
He related the behaviors of $M$-trajectories to the existence of $Z$-numbers:
\begin{theorem}\label{thm:mahler}
  For $n \in \Np$, if a $Z$-number exists in the interval $[n, n+1)$, then there is no $k \in \N$ for which $M^k(n) \equiv 3 \pmod 4$.
\end{theorem}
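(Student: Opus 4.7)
The plan is to track both the integer part $m_k \coloneqq \lfloor \xi (3/2)^k \rfloor$ and the fractional part $\alpha_k \coloneqq \fracp(\xi (3/2)^k)$ of the scaled $Z$-number in tandem, and to prove by induction on $k$ that $m_k = M^k(n)$ while the $Z$-number constraint pins $\alpha_k$ into one of two disjoint sub-intervals of $[0, 1/2)$ selected by the parity of $m_k$. The contradiction will then come from the observation that the residue class $3 \pmod{4}$ is the unique odd class whose image under $M$ is again odd, so it would require two consecutive applications of the ``odd-parity'' update, and these turn out to be incompatible at the level of fractional parts.

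Concretely, I would write $\xi_{k+1} = (3/2)\xi_k = 3 m_k / 2 + (3/2)\alpha_k$ and split into two cases on the parity of $m_k$. If $m_k$ is even, then $3 m_k / 2 \in \N$ and $(3/2)\alpha_k < 3/4 < 1$, so $m_{k+1} = 3 m_k / 2 = M(m_k)$ and $\alpha_{k+1} = (3/2)\alpha_k$; demanding $\alpha_{k+1} < 1/2$ forces $\alpha_k \in [0, 1/3)$. If $m_k$ is odd, then $3 m_k / 2 = (3 m_k - 1)/2 + 1/2$, and the only way to maintain $\alpha_{k+1} < 1/2$ is for the half to carry, which forces $\alpha_k \in [1/3, 1/2)$; this yields $m_{k+1} = (3 m_k + 1)/2 = M(m_k)$ and $\alpha_{k+1} = (3/2)\alpha_k - 1/2 \in [0, 1/4)$. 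In both cases $m_{k+1} = M(m_k)$, so by induction $m_k = M^k(n)$ for every $k \in \N$, and the admissible window for $\alpha_k$ is $[0,1/3)$ when $m_k$ is even and $[1/3, 1/2)$ when $m_k$ is odd.

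Now assume for contradiction that $m_k = M^k(n) \equiv 3 \pmod{4}$ for some $k$, writing $m_k = 4\ell + 3$. Since $m_k$ is odd, the odd-parity case at step $k$ gives $\alpha_{k+1} \in [0, 1/4)$. But $m_{k+1} = (3 m_k + 1)/2 = 6\ell + 5$ is also odd, so the odd-parity window at step $k+1$ requires $\alpha_{k+1} \in [1/3, 1/2)$. These two intervals are disjoint, contradiction.

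The main obstacle is not any single calculation but the bookkeeping in the second paragraph: one must simultaneously track the $Z$-number inequality, the parity of $m_k$, and the resulting sub-interval for $\alpha_{k+1}$, and in particular rule out the alternative ``odd-parity, no carry'' subcase (odd $m_k$ with $\alpha_k < 1/3$), which would push $\alpha_{k+1}$ into $[1/2, 1)$ and immediately violate the $Z$-number property. Once the two admissible windows $[0, 1/3)$ (even $m_k$) and $[1/3, 1/2)$ (odd $m_k$) are isolated, the final step is just the arithmetic identity $M(4\ell + 3) = 6\ell + 5$, which makes the window mismatch unavoidable.
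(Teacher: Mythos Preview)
Your argument is correct. The paper does not supply its own proof of this theorem; it merely quotes the result from Mahler~\cite{Mah68} and uses it to justify the rewriting system~$\rM$. What you have written is essentially Mahler's original argument: track $(m_k,\alpha_k)$ jointly, show that the $Z$-number constraint forces $m_{k+1}=M(m_k)$ together with the parity-indexed windows $\alpha_k\in[0,1/3)$ (even) and $\alpha_k\in[1/3,1/2)$ (odd), and then observe that an odd step always lands $\alpha_{k+1}$ in $[0,1/4)$, which is incompatible with a second consecutive odd step. The only point worth tightening in a write-up is the base case: you should state explicitly that $m_0=\lfloor\xi\rfloor=n$ and $\alpha_0<1/2$ by the $Z$-number hypothesis, so the induction is anchored; everything else is in order.
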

In order to formulate this as a termination problem, we split the odd case of $M$ into two and leave one undefined:
\begin{equation*}
  M'(n) =
  \begin{cases}
    \frac{3n}{2} & \text{if } n \equiv 0 \pmod 2 \\
    \frac{3n+1}{2} & \text{if } n \equiv 1 \pmod 4 \\
    \bot & \text{if } n \equiv 3 \pmod 4
  \end{cases}
\end{equation*}
Given $n \in \Np$, if the trajectory ${M'}^{\N}(n)$ contains $\bot$ then by the contrapositive of \cref{thm:mahler} there is no $Z$-number in the interval $[n, n+1)$. When proving nonexistence, we may assume without loss of generality that a $Z$-number is at least $1$, since if $\xi \in (0,1)$ is a $Z$-number then $\xi\left(\frac{3}{2}\right)^m \in [1,\infty)$ is another one for sufficiently large $m \in \N$. Thus, the nonexistence of $Z$-numbers can be established by proving that $M'$ is convergent, which is equivalent to the termination of the following rewriting system $\rM$. In order to ensure termination at the case $n \equiv 3 \pmod 4$, there is no rule with the LHS $\bO \bO \btR$.
\begin{equation*}
  \begin{array}[t]{rcl}
    \bZ \btR & \to & \tZ \btR \\
    \bZ \bO \btR & \to & \tO \tZ \btR \\
  \end{array}
  \qquad
  \begin{array}[t]{rcl}
    \bZ \tZ & \to & \tZ \bZ \\
    \bZ \tO & \to & \tZ \bO \\
    \bZ \tT & \to & \tO \bZ
  \end{array}
  \qquad
  \begin{array}[t]{rcl}
    \bO \tZ & \to & \tO \bO \\
    \bO \tO & \to & \tT \bZ \\
    \bO \tT & \to & \tT \bO
  \end{array}
  \qquad
  \begin{array}[t]{rcl}
    \btL \tZ & \to & \btL \bO \\
    \btL \tO & \to & \btL \bZ \bZ \\
    \btL \tT & \to & \btL \bZ \bO
  \end{array}
\end{equation*}

\subsection{Halting Problem for Busy Beavers}
\label{sec:halting-problem-busy-beavers}

The busy beaver problem~\cite{Rad62} concerns finding binary-alphabet Turing machines with $n$ states that, when given an input tape of all $0$s, write the largest number of $1$s on the tape upon halting. For each $n$, a machine that achieves this is called an \emph{$n$-state Busy Beaver} (or BB-$n$ for short). Note that this definition only requires the machines to halt on all-$0$ inputs, leaving the behavior on other inputs unspecified and allowing them not to halt in general.

Michel~\cite{Mic15} observed that for $n \in \{2, 3, 4\}$, the busy beaver machines are all \emph{total Turing machines}, i.e., they halt on all inputs, and moreover proved that they all simulate some generalized Collatz function. It is an open problem whether all busy beavers are total Turing machines. In particular, it is unknown whether the current BB-$5$ candidate is total. Michel showed that the BB-$5$ candidate simulates the following generalized Collatz function.
\begin{equation*}
  B(n) =
  \begin{cases}
    \frac{5n+18}{3} & \text{if } n \equiv 0 \pmod 3 \\
    \frac{5n+22}{3} & \text{if } n \equiv 1 \pmod 3 \\
    \bot & \text{if } n \equiv 2 \pmod 3
  \end{cases}
\end{equation*}

Convergence of the above function can be studied via the termination of a corresponding rewriting system, although the translation is not as straightforward as the other generalized Collatz functions we have in this paper. Specifically, the mixed base scheme results in a compact rewriting system only when all the mappings $an + b \mapsto cn + d$ applied by the generalized Collatz function at hand satisfy $d < c$. This is not the case for the function $B$ above since it requires us to express $3n \mapsto 5n + 6$ and $3n + 1 \mapsto 5n + 9$. These mappings are expressed more easily by the help of unary representations, so we can have the following hybrid rewriting system $\rB\rB'$ over the alphabet $\{\inc, \pZ, \tZ, \btL, \btR\}$ to simulate the function $B$, where the new symbol $\pZ$ has the functional view $\pZ(x) = 5x$.
\begin{equation*}
  \begin{array}[t]{rcl}
    \tZ \btR & \to & \inc \pZ \inc \btR \\
    \tZ \inc \btR & \to & \inc \pZ \inc \inc \inc \inc \btR
  \end{array}
  \qquad
  \begin{array}[t]{rcl}
    \tZ \inc \inc \inc & \to & \inc \tZ \\
    \inc \pZ & \to & \pZ \inc \inc \inc \inc \inc \\
    \tZ \pZ & \to & \pZ \tZ
  \end{array}
  \qquad
  \begin{array}[t]{rcl}
    \btL \inc \inc & \to & \btL \tZ \\
    \btL \pZ & \to & \btL \tZ \inc \inc \\
  \end{array}
\end{equation*}
We did not succeed in proving the termination of this rewriting system, which is not too surprising since the hybrid scheme does not even pass the test of proving the convergence of $W$. Alternatively, we can split the cases of $B$ further for the congruence classes modulo $3^k$ with $k > 1$ until the iterated application of $B$ can either be expressed as $an + b \mapsto cn + d$ with $d < c$, or it reaches $\bot$. After we perform this procedure while ensuring that the number of newly introduced cases are kept to a minimum, we end up with the following mappings to simulate $B$ in an accelerated manner (similar to the idea in \cref{sec:odd-trajectories}).
\begin{equation}
  \label{eq:bb5-mappings}
  \begin{array}{rcl c rcl}
    9n && & \; \; \mapsto \; \; & 25n &+& 16 \\
    9n &+& 1 & \; \; \mapsto \; \; & 25n &+& 21 \\
    27n &+& 6 & \; \; \mapsto \; \; & 125n &+& 64 \\
    27n &+& 7 & \; \; \mapsto \; \; & 125n &+& 71 \\
    27n &+& 16 & \; \; \mapsto \; \; & 125n &+& 114 \\
    81n &+& 51 & \; \; \mapsto \; \; & 625n &+& 459 \\
    243n &+& 78 & \; \; \mapsto \; \; & 3125n &+& 1116 \\
    243n &+& 159 & \; \; \mapsto \; \; & 3125n &+& 2159
  \end{array}
\end{equation}
Consider the symbols $\{\tZ, \tO, \tT, \pZ, \pO, \pT, \pTh, \pF, \btL, \btR\}$ with the following functional views.
\begin{equation*}
  \begin{array}{lcl}
    \tZ(x) & = & 3x \\
    \tO(x) & = & 3x + 1 \\
    \tT(x) & = & 3x + 2
  \end{array}
  \qquad
  \begin{array}{lcl}
    \pZ(x) & = & 5x \\
    \pO(x) & = & 5x + 1 \\
    \pT(x) & = & 5x + 2 \\
    \pTh(x) & = & 5x + 3 \\
    \pF(x) & = & 5x + 4
  \end{array}
  \qquad
  \begin{array}{lcl}
    \btL(x) & = & 0 \\
    \btR(x) & = & x
  \end{array}
\end{equation*}
With these symbols, we have the following mixed $\{3,5\}$-ary (ternary--quinary) system $\rB\rB$ the termination of which is equivalent to the convergence of $B$. The dynamic rules on the left implement the mappings from~\cref{eq:bb5-mappings} and the rest are auxiliary rules that push ternary symbols towards the rightmost end of the string while preserving its value. We were unable to prove the termination of this system, so we include it as yet another challenge for automated termination proving.
\begin{equation*}
  \begin{array}[t]{rcl}
    \tZ \tZ \btR & \to & \pTh \pO \btR \\
    \tZ \tO \btR & \to & \pF \pO \btR \\
    \tZ \tT \tZ \btR & \to & \pT \pT \pF \btR \\
    \tZ \tT \tO \btR & \to & \pT \pF \pO \btR \\
    \tO \tT \tO \btR & \to & \pF \pT \pF \btR \\
    \tO \tT \tT \tZ \btR & \to & \pTh \pTh \pO \pF \btR \\
    \tZ \tT \tT \tT \tZ \btR & \to & \pO \pTh \pF \pTh \pO \btR \\
    \tO \tT \tT \tT \tZ \btR & \to & \pTh \pT \pO \pO \pF \btR
  \end{array}
  \qquad
  \begin{array}[t]{rcl}
    \tZ \pZ & \to & \pZ \tZ \\
    \tZ \pO & \to & \pZ \tO \\
    \tZ \pT & \to & \pZ \tT \\
    \tZ \pTh & \to & \pO \tZ \\
    \tZ \pF & \to & \pO \tO
  \end{array}
  \qquad
  \begin{array}[t]{rcl}
    \tO \pZ & \to & \pO \tT \\
    \tO \pO & \to & \pT \tZ \\
    \tO \pT & \to & \pT \tO \\
    \tO \pTh & \to & \pT \tT \\
    \tO \pF & \to & \pTh \tZ
  \end{array}
  \qquad
  \begin{array}[t]{rcl}
    \tT \pZ & \to & \pTh \tO \\
    \tT \pO & \to & \pTh \tT \\
    \tT \pT & \to & \pF \tZ \\
    \tT \pTh & \to & \pF \tO \\
    \tT \pF & \to & \pF \tT
  \end{array}
  \qquad
  \begin{array}[t]{rcl}
    \btL \pZ & \to & \btL \tZ \\
    \btL \pO & \to & \btL \tO \\
    \btL \pT & \to & \btL \tT \\
    \btL \pTh & \to & \btL \tO \tZ \\
    \btL \pF & \to & \btL \tO \tO
  \end{array}
\end{equation*}

\subsection{Ternary Expansions of \texorpdfstring{$2^n$}{2n}}
\label{sec:ternary-2n}

Erd\H{o}s~\cite{Erd79} asked: When does the ternary expansion of $2^n$ omit the digit $2$? This is the case for $2^0 = (1)_3$, $2^2 = (11)_3$, and $2^8 = (100111)_3$. He conjectured that it does not happen for $n > 8$. This conjecture can be reformulated as the statement that the following rewriting system $\rE$, where $\inv{\rX} = \{r \to \ell \mid \ell \to r \in \rX\}$, is terminating on all initial strings of the form $\btL \bZ^8 \bZ^+ \btR$.
\begin{equation*}
  \begin{array}{rcl}
    \tZ \btR & \to & \btR \\
    \tO \btR & \to & \btR \\
    \btL \btR & \to & \btL \btR
  \end{array}
  \qquad
  \inv{\rX} = \left\{
    \begin{array}{rcl}
      \tZ \bZ & \to & \bZ \tZ \\
      \tZ \bO & \to & \bZ \tO \\
      \tO \bZ & \to & \bZ \tT
    \end{array}
    \qquad
    \begin{array}{rcl}
      \tO \bO & \to & \bO \tZ \\
      \tT \bZ & \to & \bO \tO \\
      \tT \bO & \to & \bO \tT
    \end{array}
    \qquad
    \begin{array}{rcl}
      \btL \bO & \to & \btL \tZ \\
      \btL \bZ \bZ & \to & \btL \tO \\
      \btL \bZ \bO & \to & \btL \tT
    \end{array}
  \right\}
\end{equation*}
Given a string that corresponds to the binary representation of a power of $2$, the inverted system $\inv{\rX}$ essentially rewrites the string into ternary by pushing ternary symbols to the right without altering the value that the string represents. The two rules $\{\tZ \btR \to \btR,\ \tO \btR \to \btR\}$ remove the occurrences of the ternary digits $\tZ$ and $\tO$ (but not $\tT$). If the ternary expansion does not contain the digit $\tT$ then all digits will be removed, resulting in the string $\btL \btR$ that can then be rewritten to itself indefinitely.

This problem, as described, is an instance of ``local termination''~\cite{WdE10} since it is concerned with termination on not all possible strings but a subset of them. We have not performed experiments with this system or local termination yet and we leave this for future work.

%%% Local Variables:
%%% mode: latex
%%% TeX-master: "main"
%%% End:

\end{document}